\documentclass[11pt]{article}
\usepackage{jheppub}
\usepackage{tensor}
\usepackage{cleveref}
\usepackage{framed}
\usepackage{amsmath,amssymb,amsfonts,graphicx,tensor}
\usepackage{palatino}
\usepackage{newpxmath}
\usepackage{bbold}
\usepackage{subcaption}
\usepackage{graphicx}
\usepackage{enumitem}
\usepackage{braket}
\usepackage[dvipsnames]{xcolor}
\usepackage{comment}

\newcommand{\tr}{{\rm tr}}

\usepackage{amsthm,mathtools}

\newtheorem{theorem}{Theorem}

\newtheorem{proposition}[theorem]{Proposition}
\newtheorem{corollary}[theorem]{Corollary}
\theoremstyle{definition}
\newtheorem{definition}{Definition}

\newcommand{\id}{\mathrm{id}}

\usepackage{bbm}
\usepackage{float}
\usepackage{minted}
\usepackage{tcolorbox}

\def\Tr{{\rm Tr }}

\title{Factor code networks}
\author{Abhisek Sahu, Jeremy van der Heijden, Mark Van Raamsdonk, Rana Zibakhsh}
\emailAdd{abhi@phas.ubc.ca}
\emailAdd{jeremy.vanderheijden@ubc.ca}
\emailAdd{mav@phas.ubc.ca}
\emailAdd{rzibakhsh@phas.ubc.ca}
\affiliation{Department of Physics and Astronomy, University of British Columbia,\\
6224 Agricultural Road, Vancouver, B.C., V6T 1Z1, Canada}

\date{}

\abstract{For a quantum code defined by an isometry $T:\mathcal{H}_{\rm code} \to \mathcal{H}_{\rm phys}$, an operator map is a *-homomorphism $\varphi: B(\mathcal{H}_{\rm code}) \to B(\mathcal{H}_{\rm phys})$ from the space of operators on the logical Hilbert space to the space of operators on the physical Hilbert space satisfying $\varphi({\cal O}) T = T {\cal O}$ for every operator ${\cal O}$ on $\mathcal{H}_{\rm code}$. This can additionally be taken to be unital $\varphi(I_{code}) = I_{phys}$ if and only if $\dim \mathcal{H}_{\rm code}$ divides $\dim \mathcal{H}_{\rm phys}$. In this case, $\varphi(B(\mathcal{H}_{\rm code})$ is a von Neumann factor subalgebra of $B(\mathcal{H}_{\rm phys})$ and we call the pair $(T,\varphi)$ a {\it factor code}.

A code network is a family of Hilbert spaces $\{{\cal H}_i\}$ indexed by a partially ordered set with isometries $T_{ji}:{\cal H}_i \to {\cal H}_j$ for $i < j$ satisfying $T_{kj} T_{ji} = T_{ki}$ for $i < j < k$. In this paper, we ask when this can be promoted to a {\it factor code network} by defining unital operator maps $\varphi_{ji}$ so that $(T_{ji}, \varphi_{ji})$ is a factor code and  $\varphi_{kj} \varphi_{ji} = \varphi_{ki}$. We show that the divisibility conditions on Hilbert space dimensions are sufficient for any code network with a tree structure but not in general. For a diamond-shaped network, we show that promotion is possible if and only if the principle angles between $T_{42} \mathcal{H}_2$ and $T_{43} \mathcal{H}_3$ all have multiplicity divisible by $\mathcal{H}_1$. 

For incomparable $i_1,i_2 < j$, we characterize when $\varphi_{j i_1} (B(\mathcal{H}_{i_1}))$ and $\varphi_{j i_2} (B(\mathcal{H}_{i_2}))$ commute, when they generate all of $B(\mathcal{H}_j)$, and when they intersect only on multiples of the identity. For $i < j_1,j_2 < k$, we characterize when $\varphi_{k j_1} (B(\mathcal{H}_{j_1})) \cap \varphi_{k j_2} (B(\mathcal{H}_{j_2})) = \varphi_{k i} (B(\mathcal{H}_{i}))$ and when these algebras together with $B(\mathcal{H}_k)$ form a nondegenerate commuting square.
}

\begin{document}

\maketitle
\section{Introduction}

In a quantum code, a logical Hilbert space $\mathcal{H}_{\rm code}$ is isometrically mapped to a larger physical Hilbert space $\mathcal{H}_{\rm phys}$ so that each state in  $\mathcal{H}_{\rm code}$ is represented as a state in a ``code subspace'' of $\mathcal{H}_{\rm phys}$ isomorphic to $\mathcal{H}_{\rm code}$.\footnote{In standard theory of quantum error correction, an encoding is studied together with a noise model, a collection of correctable errors, and a recovery operation \cite{KnillLaflamme1997}. Throughout this paper, a quantum code means only an isometric encoding.}

Logical operators can also be represented as operators on the physical Hilbert space, but this representation is not unique. If 
\[
T:\mathcal{H}_{\rm code} \to \mathcal{H}_{\rm phys}
\]
is the isometric mapping to the code subspace, any mapping of operators\footnote{Here, $B(\mathcal{H})$ denotes the algebra of operators on $\mathcal{H}$.}
\[
\varphi: B(\mathcal{H}_{\rm code}) \to B(\mathcal{H}_{\rm phys})
\]
that respects the product and adjoint structure (i.e. is a $*$-homomorphism) and satisfies 
\[
\varphi({\cal O}) T = T {\cal O} \; 
\]
defines a valid representation of the logical operators on the physical Hilbert space. The last relation says that acting with the represented operator $\varphi({\cal O})$ after the mapping to the code subspace is exactly the same as acting with ${\cal O}$ before the mapping. 

For any such operator map $\varphi(B(\mathcal{H}_{\rm code}))$ defines a subalgebra of the full algebra $B(\mathcal{H}_{\rm phys})$ of physical operators. The use of observable algebras and their representations is central to operator-algebra quantum error correction \cite{BenyKempfKribs2007PRL,BenyKempfKribs2007PRA}. In the following, we will assume that all Hilbert spaces are finite-dimensional so that $B(\mathcal{H})$ is isomorphic to $M_n(\mathbb{C})$ with $n=\dim \mathcal{H}$.

\subsubsection*{Factor codes}

A natural additional property to demand for the operator map is that doing nothing to the logical state should correspond to doing nothing to the physical Hilbert space. This is the property that the operator map is {\it unital}, i.e. that $\varphi(I_{\rm code}) = I_{\rm phys}$.

As we show below, when the operator map is  unital, the algebra $\varphi(B(\mathcal{H}_{\rm code}))$ necessarily defines a (type-I) von Neumann factor subalgebra of $B(\mathcal{H}_{\rm phys})$; that is, we can represent $\mathcal{H}_{\rm phys}$ as $\mathcal{H}_{\rm code} \otimes \mathbb{C}^m$ such that $\varphi({\cal O}) = {\cal O} \otimes I_m$. In other words, the subalgebra $\varphi(B(\mathcal{H}_{\rm code}))$ is the algebra of all operators acting on a particular  tensor factor subsystem of the physical system.\footnote{Our construction is connected with subsystem and operator quantum error correction \cite{KribsEtAl2005,KribsEtAl2006} and with the idea that tensor-product structures can be specified by observable algebras \cite{ZanardiLidarLloyd2004}.} 
For this reason, we refer to a quantum code together with a unital operator map as a {\it factor code}.

A quantum code can be upgraded to a factor code if and only if $\dim \mathcal{H}_{\rm phys}$ is a multiple of $\dim \mathcal{H}_{\rm code}$. In this case, we will see that there is a family of possible unital operator maps. 

\subsubsection*{Code networks}

In certain contexts, it is interesting to consider a network of quantum codes where we have a set of Hilbert spaces $\{\mathcal{H}_i\}$ indexed by a partially ordered set and isometries $T_{ji}:\mathcal{H}_j \to \mathcal{H}_i$ for $i < j$ with $T_{kj} T_{ji} = T_{ki}$ for $i < j < k$. 
A particular motivating example is the situation where we have a tensor network such that the Hilbert spaces associated with certain regions of the the tensor network are mapped isometrically into the Hilbert spaces for larger regions that contain them. 

The central problem that we consider in this paper is when such a code network can be upgraded to a compatible network of factor codes by defining unital operator maps $\varphi_{ji}:B(\mathcal{H}_i) \to B(\mathcal{H}_j)$ satisfying
\[
\varphi_{ji} ({\cal O}) T_{ji} = T_{ji} {\cal O} \qquad \varphi_{kj} \circ \varphi_{ji} = \varphi_{ki} \; .
\]

When the network has a tree-like structure, we will show that the upgrade to a factor code network is always possible when all the isometries map a Hilbert space to another whose dimension is an integer multiple. However, when the network includes multiple distinct paths between the same two Hilbert spaces (i.e. elements $i < j_1, j_2 < k$ with $j_1,j_2$ incomparable), there can be an obstruction to choosing compatible unital operator maps even when the dimensional constraints are satisfied. Whether or not the construction is possible depends on the relative position of the two code subspaces $T_{k j_1} {\cal H}_{j_1}$ and $T_{k j_2} {\cal H}_{j_2}$ in ${\cal H}_k$.

\subsubsection*{Summary of results}

In section 2, we provide basic definitions and give several equivalent characterizations of a factor code. In section 3, we describe a general construction for factor codes based on promoting the isometry $T$ to a set of isometries $T^{(a)}$ to orthogonal subspaces of $\mathcal{H}_{\rm phys}$ that span $\mathcal{H}_{\rm phys}$. We characterize the space of possibilities for this construction, showing an equivalence with the quotient $U(nm - n)/U(m-1)$ where $n$ and $nm$ are the dimensions of the code and physical Hilbert space.

In section 4, we consider code-networks with a a tree-like structure, showing that these can always be upgraded to factor code networks provided the dimensional constraints are satisfied. For the special case of ``duplex codes" with two code subspaces $T_A \mathcal{H}_A$ and $T_B \mathcal{H}_B$ contained within a single physical Hilbert space ${\cal H}$, the operator maps in a factor code give two subalgebras $\mathcal{M}_A$ and $\mathcal{M}_B$ of $B({\cal H})$. We characterize when for either particular or generic choices of the operator maps, these algebras generate the full algebra $B({\cal H})$, intersect only on multiples of the identity (i.e. share no non-trivial observable), or commute.

In section 5, we consider diamond-shaped networks, giving necessary and sufficient conditions for when these can be promoted to factor code networks. In the case that they can, we have a nested set of algebras $\mathcal{M}_1 \subset \mathcal{M}_2, \mathcal{M}_3 \subset \mathcal{M}_4$ inside the algebra $\mathcal{M}_4$ associated to the top Hilbert space. We analyze when it is possible/generic to obtain the natural intersection property $\mathcal{M}_2 \cap \mathcal{M}_3 = \mathcal{M}_1$ and when the algebras can be chosen to form a nondegenerate commuting-square.

\subsubsection*{Related work}

Chains and trees of quantum codes arise in hierarchically composed encoders and concatenated-code constructions \cite{KnillLaflamme1996Concatenated}. However, our primary motivation is the relation between tensor networks and quantum codes \cite{FerrisPoulin2014}. Here, we can often associate Hilbert spaces to specific subsystems of a tensor network and in some cases, the tensor network provides a natural isometry between the Hilbert spaces associated with a region and some larger region that includes it. This feature is particularly relevant in tensor networks that serve as toy-models of holographic quantum gravity \cite{AlmheiriDongHarlow2015,PastawskiEtAl2015,HaydenEtAl2016,Harlow2017}. The present study arose specifically while asking whether boundary factor algebras could be assigned to specific regions of holographic tensor networks \cite{SahuEtAl2026}; these algebras would then provide the boundary representation of observables associated with some region of the tensor network, modeling the physics in a localized region of spacetime. The holographic tensor-network applications of our results are developed in the companion paper \cite{CompanionPaper}. The structures we study also resemble the structure of nets algebras associated with nested spatial regions in quantum field theories \cite{HaagKastler1964}. Our work in \cite{SahuEtAl2026} seeks to understand to what extent this algebraic structure in quantum field theory is preserved when gravity (quantum spacetime dynamics) effects are included.

\subsubsection*{Use of AI}

A number of the proof ideas and technical steps in this work (including some of the necessary and sufficient conditions appearing in the theorems) were suggested by Chat GPT and Google Gemini (Deep Mind). However, all of the detailed proofs were constructed and written by the authors. The full draft was also written by the authors, but Chat GPT and Google Gemini were used for final proofreading and checks of all proofs.

\section{Codes, networks, and factor codes}

We begin with some standard definitions.

\subsection{Quantum codes}

\begin{definition}
    A {\it quantum code} is an isometric embedding $T : \mathcal{H}_{\rm code} \to \mathcal{H}_{\rm phys}$ of a smaller Hilbert space into a larger Hilbert space. The isometry condition is $T^\dagger T = 1$. The image $T \mathcal{H}_{\rm code}$ is a subspace of $\mathcal{H}_{\rm phys}$ known as the {\it code subspace}. 
\end{definition}
For a quantum code, a state $|\psi\rangle$ in the logical Hilbert space $\mathcal{H}_{\rm code}$ is encoded as a state $T \ket{\psi}$ in the physical Hilbert space. For an operator ${\cal O}$ that acts on the logical Hilbert space, we can define a representative $\tilde{\cal O}$ acting on the physical Hilbert space. We demand that this satisfies
\[
\tilde{{\cal O}} T = T {\cal O} \;,
\]
so that for any state $|\psi \rangle \in {\cal H}_{\rm code}$, $\tilde{{\cal O}}$ acting on the encoded state $T|\psi \rangle$ gives the encoded version of ${\cal O} |\psi \rangle$. It is natural to choose operator representatives in a consistent way for the full set $B({\cal H}_{\rm code})$ of operators in the logical Hilbert space so that the algebra structure is preserved. Thus we define $\tilde{\cal O}$ via a homomorphism:
\begin{definition}
    An {\it operator map} for a quantum code $T : \mathcal{H}_{\rm code} \to \mathcal{H}_{\rm phys}$ is a $*$-homomorphism $\varphi:B({\cal H}_{\rm code}) \to B({\cal H}_{\rm phys})$ i.e. a linear map with 
    \[
    \varphi({\cal O}_1 {\cal O}_2) = \varphi({\cal O}_1) \varphi({\cal O}_2) \qquad \qquad \varphi({\cal O}^\dagger) = \varphi({\cal O})^\dagger 
    \] satisfying
\[
\varphi({\cal O}) T = T {\cal O} \; .
\]
\end{definition}
We do not require an operator map to be unital (i.e. obey $\varphi(I_{\rm code}) = I_{\rm phys}$). Later, we will be interested in understanding the special class of unital operator maps.

For a given quantum code, there are many choices for $\varphi$, but there is a canonical choice:
\[
\varphi(\mathcal{O}) = T {\cal O} T^\dagger \; .
\]
With this (generally non-unital) choice, all such operators annihilate any state orthogonal to the code subspace. 

Next, we consider a more general system with a consistent network of quantum codes. 
\begin{definition}
A {\it code network} is a family of Hilbert spaces
$\{\mathcal H_i\}_{i\in S}$ indexed by a partially ordered set $S$,
together with isometries
\[
T_{ji}:\mathcal H_i\to \mathcal H_j
\]
for every $i\le j$, satisfying $T_{ii}=I_{\mathcal H_i}$ and
\[
T_{ki}=T_{kj}T_{ji}
\]
whenever $i\le j\le k$.
\end{definition}
To visualize a code network, we can introduce a \emph{Hasse diagram} representing the partially ordered set $S$. This is a directed graph with a vertex for each element of ${\cal S}$ and a directed edge $i \to j$ between any two vertices $i$ and $j$ with $i < j$ and with no $k$ such that $i < k < j$. Figure \ref{fig:Hasse} shows examples of these diagrams. The Hasse diagram of Figure \ref{fig:Hasse} (left) has the special property of being a tree.

\begin{figure}
    \centering
    \includegraphics[width=0.5\linewidth]{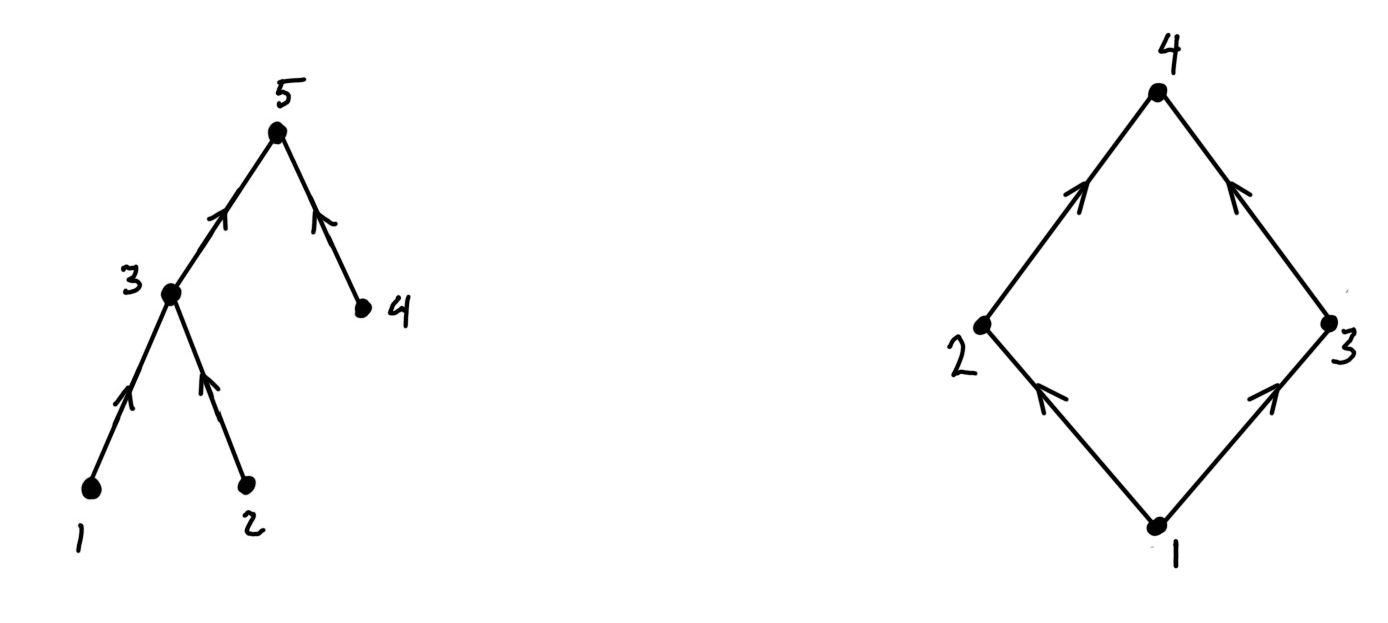}
    \caption{Hasse diagrams for code networks. Each vertex corresponds to a Hilbert space and each directed line corresponds to an isometry.}
    \label{fig:Hasse}
\end{figure}

For a code network, we can define a corresponding set of operator maps $\varphi_{ji}$ satisfying the properties in the definition above together with the consistency conditions
\[
\varphi_{kj} \circ \varphi_{ji} = \varphi_{ki}  \qquad i < j < k \; , \qquad \varphi_{ji}({\cal O}_i) T_{ji} = T_{ji} {\cal O}_i \; .
\]
The canonical choice $\varphi_{ji}({\cal O}) = T_{ji} {\cal O} T_{ji}^\dagger$ satisfies these properties.

Code networks naturally arise in certain tensor networks, where various bulk regions have associated Hilbert spaces and the tensor network with the region removed defines (for appropriate regions and tensors) an isometry to the boundary Hilbert space.

\subsection{Factor codes}

For the general quantum codes defined above, the image of ${\cal H}_{\rm code}$ under the isometric map $T$ is a subspace of ${\cal H}_{\rm phys}$ and the image of the algebra $B({\cal H}_{\rm code})$ under the canonical operator map $\varphi: {\cal O} \to T {\cal O} T^\dagger$ is a type of non-unital subalgebra of $B({\cal H}_{\rm phys})$ known as a {\it corner} since all its elements are of the form $P {\cal O} P$ for a projector $P = T T^\dagger$. This algebra does not include the identity operator of the full algebra unless $\mathcal{H}_{code}$ and $\mathcal{H}_{phys}$ are isomorphic; in particular, the identity operator on the logical Hilbert space maps to the projector $P$. 

In some cases, it is possible to choose the operator map such that the resulting subalgebra is unital, with $\varphi(I_{\rm code}) = I_{\rm phys}$. In these cases, as we show in the next Proposition, there is a tensor factorization $\mathcal{H}_1 \otimes \mathcal{H}_2$ of the physical Hilbert space such that the code subspace is identified with $\mathcal{H}_1 \otimes e_1$ for some vector $e_1 \in \mathcal{H}_2$ and the image under $\varphi$ of the algebra on $\mathcal{H}_{\rm code}$ is the subalgebra ${\cal O}_1 \otimes I_2$ acting on the first factor. Algebraically, this structure is known as a {\it von Neumann factor}.\footnote{A von Neumann algebra $\mathcal{M}\subset B(\mathcal{H}_{\rm phys})$ is called a factor if there are no non-trivial operators in $\mathcal{M}$ that commute with all operators in $\mathcal{M}$, i.e. $\mathcal{M}'\cap \mathcal{M} =\mathbb{C} I_{\rm phys}$ where $\mathcal{M}' = \{\mathcal{O} \in B(\mathcal{H}_{\rm phys})\,|\, [\mathcal{O}, \mathcal{A}]=0~ \, \forall \mathcal{A} \in \mathcal{M} \} $. Note that since a von Neumann algebra satisfies  $\mathcal{M}=\mathcal{M}''=(\mathcal{M}')'$ it automatically contains the identity $I_{\rm phys} \in \mathcal{M}$.} This motivates the following definition:
\begin{definition}
    A {\it factor code} is a quantum code $T:\mathcal{H}_{\rm code} \to \mathcal{H}_{\rm phys}$ together with a unital *-homomorphism $\varphi: B(\mathcal{H}_{\rm code}) \to B(\mathcal{H}_{\rm phys})$ satisfying
    \[
    \varphi({\cal O}) T = T {\cal O}  \; .
    \]
\end{definition}
The following proposition describes four equivalent characterizations of a factor code:
\begin{proposition}
\label{factor}
    For a quantum code $T : \mathcal{H}_{\rm code} \to \mathcal{H}_{\rm phys}$ with operator map $\varphi$, the following are equivalent:
\begin{enumerate}
        \item The operator map $\varphi$ is unital, i.e. $\varphi(I_{\rm code}) = I_{\rm phys}$.
        \item $\varphi(U)$ is unitary for every unitary $U$.
        \item The ratio $m = \dim \mathcal{H}_{\rm phys}/\dim \mathcal{H}_{\rm code}$ is an integer and there is a unitary $W:{\cal H}_{\rm code} \otimes \mathbb{C}^m \to \mathcal{H}_{\rm phys}$ such that 
        \[
        \varphi({\cal O}) = W ({\cal O} \otimes I_m ) W^\dagger \;, 
        \]
        and there is some unit vector  $e_1 \in \mathbb{C}^m$ such that 
        \[
        T |\psi \rangle = W (|\psi \rangle \otimes e_1 ) \qquad \forall |\psi \rangle \in {\cal H}_{\rm code} \;. 
        \]
        \item The algebra $\varphi(B(\mathcal{H}_{\rm code}))$ is a unital von Neumann factor subalgebra of $B(\mathcal{H}_{\rm phys})$.
    \end{enumerate}
\end{proposition}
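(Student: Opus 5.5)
The plan is to prove the cycle $1 \Rightarrow 3 \Rightarrow 4 \Rightarrow 1$ and, separately, $1 \Leftrightarrow 2$. The main tool is the finite-dimensional structure theory of $*$-homomorphisms out of a full matrix algebra. Write $n = \dim \mathcal{H}_{\rm code}$ and let $\{E_{ab}\}$ be matrix units for $B(\mathcal{H}_{\rm code})$ with respect to an orthonormal basis $\{|a\rangle\}$.

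For $1 \Leftrightarrow 2$: if $\varphi(I)=I$, then $\varphi(U)^\dagger\varphi(U) = \varphi(U^\dagger U) = \varphi(I) = I$, and similarly on the other side, so $\varphi(U)$ is unitary. Conversely, apply condition 2 to $U=I$. Then $\varphi(I)$ is unitary, and it is also a projector since $\varphi(I)^2=\varphi(I)=\varphi(I)^\dagger$. A unitary projector is $I$.

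For $1 \Rightarrow 3$, which I expect to be the main step: set $P_a = \varphi(E_{aa})$. These are mutually orthogonal projections, and by unitality they sum to $I_{\rm phys}$. The partial isometries $\varphi(E_{ab})$ map $P_b\mathcal{H}_{\rm phys}$ isometrically onto $P_a\mathcal{H}_{\rm phys}$, since $\varphi(E_{ab})^\dagger\varphi(E_{ab}) = P_b$. Hence all $P_a$ have the same rank $m$, and $\dim\mathcal{H}_{\rm phys} = nm$.

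Next choose the basis inside $P_1\mathcal{H}_{\rm phys}$ adapted to the code. The intertwining relation gives $P_1 T|1\rangle = \varphi(E_{11})T|1\rangle = T E_{11}|1\rangle = T|1\rangle$, so the unit vector $T|1\rangle$ lies in $P_1\mathcal{H}_{\rm phys}$. Extend it to an orthonormal basis $f_1 = T|1\rangle, f_2, \dots, f_m$ of $P_1\mathcal{H}_{\rm phys}$. Define
\[
W(|a\rangle\otimes e_k) = \varphi(E_{a1}) f_k .
\]
These vectors are orthonormal, because $\langle f_l|\varphi(E_{1b}E_{a1})|f_k\rangle = \delta_{ab}\delta_{kl}$. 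They span $\mathcal{H}_{\rm phys}$ because $\varphi(E_{a1})$ maps $P_1\mathcal{H}_{\rm phys}$ onto $P_a\mathcal{H}_{\rm phys}$, so $W$ is unitary. A check on matrix units gives $\varphi(E_{ab})W(|c\rangle\otimes e_k) = \delta_{bc}W(|a\rangle\otimes e_k)$, hence $\varphi(\mathcal O) = W(\mathcal O\otimes I_m)W^\dagger$. Finally, $T|a\rangle = TE_{a1}|1\rangle = \varphi(E_{a1})T|1\rangle = W(|a\rangle\otimes e_1)$. The one real subtlety is anchoring $e_1$ to $T|1\rangle$ inside $P_1\mathcal{H}_{\rm phys}$; that is exactly where the intertwining relation enters.

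For $3 \Rightarrow 4$: the image is $W(B(\mathcal{H}_{\rm code})\otimes I_m)W^\dagger$. This is unital and $*$-closed. Its commutant is $W(I\otimes M_m)W^\dagger$, so its double commutant is itself, which makes it a von Neumann algebra. Its center is contained in $(B\otimes I)\cap(I\otimes M_m) = \mathbb{C}I$, so it is a factor.

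For $4 \Rightarrow 1$: a von Neumann algebra contains $I_{\rm phys}$, and $\varphi(I_{\rm code})$ is the unit of the image algebra. The unit of an algebra is unique, so $\varphi(I_{\rm code}) = I_{\rm phys}$. Here I would read ``unital subalgebra'' in condition 4 as sharing the identity of $B(\mathcal{H}_{\rm phys})$, consistent with the footnote.
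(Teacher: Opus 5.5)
Your proof is correct. Its skeleton matches the paper's: $1\Leftrightarrow 2$ via $\varphi(U)\varphi(U)^\dagger=\varphi(I_{\rm code})$ (your unitary-projector observation is this identity at $U=I$), and $4\Rightarrow 1$ via uniqueness of the unit of the image algebra.

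The genuine difference is in $1\Rightarrow 3$. The paper cites the structure theorem that every unital $*$-representation of $M_n(\mathbb{C})$ is a multiple of the defining one. It takes whatever $W$ that theorem provides, and only then locates the code subspace. Writing $W^\dagger T|i\rangle=\sum_k V^{(k)}|i\rangle\otimes|k\rangle$, the intertwining relation makes each $V^{(k)}$ commute with all of $B(\mathcal{H}_{\rm code})$. Hence $V^{(k)}=c_k I$ and $e_1=\sum_k c_k|k\rangle$.

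You instead prove the structure theorem directly from the matrix units $\varphi(E_{ab})$. You also build the code in from the start by choosing $f_1=T|1\rangle\in\varphi(E_{11})\mathcal{H}_{\rm phys}$. Then $T|a\rangle=\varphi(E_{a1})T|1\rangle=W(|a\rangle\otimes e_1)$ falls out with $e_1$ a standard basis vector, and no Schur-type step is needed.

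Your route is self-contained. It also already produces the orthogonal isometric completion $T^{(k)}|a\rangle=\varphi(E_{a1})f_k$ that the paper later constructs in Proposition~\ref{Tphi}. The paper's route is shorter given the cited theorem, and it gives slightly more: for \emph{every} unitary $W$ implementing $\varphi$, $W^\dagger T$ has the product form $|\psi\rangle\mapsto|\psi\rangle\otimes e_1$.

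Your explicit double-commutant and center computation for $3\Rightarrow 4$ fills in what the paper simply asserts as the standard form of a factor.
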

\begin{proof}
Since $\varphi$ is a $*$-homomorphism, we have for any unitary $U \in B(\mathcal{H}_{\rm code})$:
\begin{equation}
\varphi(U) \varphi(U)^\dagger = \varphi(U)\varphi(U^\dagger) = \varphi(U U^\dagger) = \varphi(I_{\rm code}) \;.
\label{Uproof}
\end{equation}
Assuming 1, we thus have $\varphi(U) \varphi(U)^\dagger = I_{\rm phys}$ (and similarly $\varphi(U)^\dagger \varphi(U) = I_{\rm phys}$) so 2 holds. Assuming 2, the left side of (\ref{Uproof}) is $I_{\rm phys}$ for any $U$, so 1 holds.

Next, we show $1 \Leftrightarrow 3$. We have  $B(\mathcal{H}_{\rm code}) \equiv M_n(\mathbb{C})$ where $n = \dim \mathcal{H}_{\rm code}$. A standard result in the representation theory of matrix algebras is that every unital $*$-representation of $M_n(\mathbb{C})$ is a direct sum of copies of the defining representation (see, for example, \cite{Takesaki2002}). Thus, we must have that $\dim \mathcal{H}_{\rm phys} = m \dim \mathcal{H}_{\rm code}$ and in some basis, $\varphi({\cal O}) = {\cal O} \otimes I_m$. Letting this basis be the image of $\{|i\rangle \otimes |j\rangle\}$ under a unitary $W$ and defining $V = W^\dagger T$, we can write
\[
V |i \rangle = V_{i}^{jk} |j \rangle \otimes |k \rangle \equiv V^{(k)} |i \rangle \otimes |k \rangle\; .
\]
The relation $\varphi({\cal O}) T = T {\cal O}$ then gives
\[
{\cal O} V^{(k)} = V^{(k)} {\cal O}
\]
for any ${\cal O}$, so we can conclude that $V^{(k)} = c_k I_n$ for scalars $c_k$. Thus,
\[
V |\psi \rangle = |\psi \rangle \otimes \sum_k c_k |k \rangle \equiv |\psi \rangle \otimes e_1  \implies T |\psi \rangle = W (|\psi \rangle \otimes e_1)\; .
\]
Here, $e_1$ is a unit vector since $T$ is an isometry. This completes the proof that $1 \implies 3$. Conversely, assuming 3, we have that $\varphi(I_{\rm code}) = W (I_{\rm code} \otimes I_m) W^\dagger = I_{\rm phys}$, showing 1.

To show the equivalence between 3 and 4, we note that 3 implies $\varphi(B(\mathcal{H}_{\rm code})) = W(B(\mathcal{H}_{\rm code}) \otimes I_m)W^\dagger$ which is the standard form of a unital von Neumann factor subalgebra. Conversely, if $\varphi(B(\mathcal{H}_{\rm code}))$ is unital as a subalgebra of \(B(\mathcal H_{\rm phys})\), there is some \(\mathcal O\in B(\mathcal H_{\rm code})\) with \(\varphi(\mathcal O)=I_{\rm phys}\). But \(\varphi(I_{\rm code})\) is the
identity of the image algebra $\varphi(B(\mathcal{H}_{\rm code}))$, so
\[
\varphi(I_{\rm code})=I_{\rm phys}~.
\]
Thus \(4\Rightarrow 1\), and the implication \(1\Rightarrow 3\) proved above
gives \(4\Rightarrow 3\).
\end{proof}

Finally, we can define the factor version of a code network. We call this a {\it factor code network}. 
\begin{definition}
    A {\it factor code network} is a code network $\{\mathcal{H}_i\},\{T_{ji}\}$ of quantum codes together with a corresponding set of unital operator maps $\varphi_{ji}$ for $i \le j$ satisfying
    \[
    \varphi_{kj} \circ \varphi_{ji} = \varphi_{ki}  \qquad i \le j \le k \;,
    \]
    and
    \[
    \varphi_{ji}({\cal O}) T_{ji} = T_{ji} {\cal O}  \qquad \forall {\cal O} \in B(\mathcal{H}_i)\; .
    \]
\end{definition}
An immediate result is: 
\begin{proposition}
    In a factor code network, the algebras $\mathcal{M}_i^{(j)} \equiv \varphi_{ji}(B(\mathcal{H}_i))$ are unital von Neumann factor subalgebras of $B(\mathcal{H}_j)$ with the property that $\mathcal{M}_i^{(k)} \subset \mathcal{M}_j^{(k)}$ for $i < j < k$.
\end{proposition}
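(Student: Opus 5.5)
The statement has two parts: the factor property of each $\mathcal{M}_i^{(j)}$, and the nesting $\mathcal{M}_i^{(k)} \subset \mathcal{M}_j^{(k)}$. Both follow directly from the definitions, so the plan is to check them in that order.

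\textbf{Factor property.} Fix $i \le j$ in a factor code network. By definition, $(T_{ji}, \varphi_{ji})$ is a quantum code whose operator map is unital, so condition 1 of Proposition \ref{factor} holds. Applying the implication $1 \Rightarrow 4$ of that proposition, $\mathcal{M}_i^{(j)} = \varphi_{ji}(B(\mathcal{H}_i))$ is a unital von Neumann factor subalgebra of $B(\mathcal{H}_j)$.

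For concreteness, we can also read this off from condition 3. There we have $\mathcal{M}_i^{(j)} = W(B(\mathcal{H}_i)\otimes I_m)W^\dagger$, whose commutant is $W(I\otimes M_m(\mathbb{C}))W^\dagger$. The intersection of these two algebras is $\mathbb{C} I$, which is exactly the factor condition.

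\textbf{Nesting.} Take $i < j < k$ and any $\mathcal{O} \in B(\mathcal{H}_i)$. The compatibility condition $\varphi_{ki} = \varphi_{kj}\circ\varphi_{ji}$ gives
\[
\varphi_{ki}(\mathcal{O}) = \varphi_{kj}\big(\varphi_{ji}(\mathcal{O})\big) .
\]
Since $\varphi_{ji}(\mathcal{O}) \in B(\mathcal{H}_j)$, the right-hand side lies in $\varphi_{kj}(B(\mathcal{H}_j)) = \mathcal{M}_j^{(k)}$. Hence $\mathcal{M}_i^{(k)} \subset \mathcal{M}_j^{(k)}$.

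One could sharpen this to $\mathcal{M}_i^{(k)} = \varphi_{kj}(\mathcal{M}_i^{(j)})$, but that is not needed for the statement.

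There is no real obstacle here. The only point requiring care is that the factor property uses the equivalence established in Proposition \ref{factor}, rather than anything specific to networks.
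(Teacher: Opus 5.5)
Your proposal is correct and follows essentially the same route as the paper's proof. Both obtain the factor property of each $\mathcal{M}_i^{(j)}$ from Proposition \ref{factor} applied to the unital map $\varphi_{ji}$, and both derive the nesting $\mathcal{M}_i^{(k)} \subset \mathcal{M}_j^{(k)}$ by writing $\varphi_{ki}(\mathcal{O}) = \varphi_{kj}(\varphi_{ji}(\mathcal{O}))$. Your extra commutant check, and your handling of the trivial $i=j$ case, are consistent with this and do no harm.
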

\begin{proof}
    For $i < j$, $T_{ji} : \mathcal{H}_i \to \mathcal{H}_j$ is a factor code, so by the previous proposition, $\mathcal{M}_i^{(j)}$ is a unital von Neumann factor subalgebra of $B(\mathcal{H}_j)$. Any operator in $\mathcal{M}_i^{(k)}$ is of the form $\varphi_{ki}({\cal O}) = \varphi_{kj} (\varphi_{ji}({\cal O}))$ so it is also in $\mathcal{M}_j^{(k)}$. Thus, $\mathcal{M}_i^{(k)} \subset \mathcal{M}_j^{(k)}$ for $i < j < k$.
\end{proof}
For a factor code network corresponding to a partially ordered set with a largest element $0$, we thus have a partially-ordered set of von Neumann factor subalgebras of $B(\mathcal{H}_0)$ whose inclusion structure is dictated by the ordering.

\subsubsection*{Central question}

The central question that we will address in this note is when a code network can be promoted to a factor code network. In cases where this promotion is possible, we would like to understand how much freedom we have in specifying this promotion, and what additional properties the resulting family of subalgebras has. For example:
\begin{itemize}
    \item Given two subalgebras $\mathcal{M}_i^{(k)}$ and $\mathcal{M}_j^{(k)}$, when do they generate the full algebra $\mathcal{M}_k^{(k)}$? 
    \item Given two subalgebras $\mathcal{M}_i^{(k)}$ and $\mathcal{M}_j^{(k)}$, when do they commute? 
    \item Given two subalgebras $\mathcal{M}_i^{(k)}$ and $\mathcal{M}_j^{(k)}$, when do they intersect only on multiples of the identity? 
    \item For algebras $\mathcal{M}_i^{(k)} \subset \mathcal{M}_j^{(k)}, \mathcal{M}_{\hat{j}}^{(k)} \subset \mathcal{M}_l^{(k)}$, when do these form a nondegenerate commuting square?
\end{itemize}

\section{Promoting codes to factor codes}

\subsection{Factor code families}

Consider a quantum code $T : \mathcal{H}_{\rm code} \to \mathcal{H}_{\rm phys}$. For a given operator map $\varphi$, the $*$-homomorphism property ensures that $\varphi$ preserves the $*$-algebra structure such that $\varphi(B(\mathcal{H}_{\rm code}))$ forms a $*$-subalgebra of $\mathcal{H}_{\rm phys}$. According to Proposition \ref{factor}, if the map $\varphi$ is unital so that $(T, \varphi)$ gives a factor code, the ratio $\dim  \mathcal{H}_{\rm phys} / \dim \mathcal{H}_{\rm code}$ must be an integer. We will now see that if this dimension condition is satisfied, there is a family of such unital operator maps related via conjugation by unitary operators acting in the space transverse to $T \mathcal{H}_{\rm code}$. 

We will use that the construction of operator maps can be described in terms of a set of auxiliary isometries:
\begin{definition}
    Given an isometry $T : \mathcal{H}_{\rm code} \to \mathcal{H}_{\rm phys}$, with integer $m = \dim  \mathcal{H}_{\rm phys} / \dim \mathcal{H}_{\rm code}$, we define an {\it orthogonal isometric completion} of $T$ to be a set  $\{T^{(a)}, a=1\dots m\}$ of isometries where $T^{(1)} = T$ and $T^{(a)} \mathcal{H}_{\rm code}$ are orthogonal subspaces of $\mathcal{H}_{\rm phys}$.
\end{definition}
Associated to such a choice of auxiliary isometries, we can define an operator map
\begin{equation}
\label{defmap}
    \varphi({\cal O}) = \sum_a T^{(a)} {\cal O} (T^{(a)})^\dagger \; ,
\end{equation}
that we will see is a unital $*$-homomorphism. More generally, we have:
\begin{proposition}
\label{Tphi}
For a quantum code $T : \mathcal{H}_{\rm code} \to \mathcal{H}_{\rm phys}$, with integer $m = \dim  \mathcal{H}_{\rm phys} / \dim \mathcal{H}_{\rm code}$, there is a one-to-one correspondence between 
\begin{enumerate}
    \item unital operator maps, i.e. unital $*$-homomorphisms
$\varphi:B(\mathcal H_{\rm code}) \to B(\mathcal H_{\rm phys})$ satisfying
\[
\varphi(\mathcal O)T=T\mathcal O \qquad
\text{for all } \mathcal O\in B(\mathcal H_{\rm code}) \; .
\]
\item equivalence classes of orthogonal isometric completions
$\{T^{(a)}\}_{a=1}^m$ of $T$ where two completions are said to be equivalent if
\[
T_2^{(a)} = \sum_{b=2}^m U_{ab} T_1^{(b)}\;, \qquad \qquad a = 2, \dots , m\;, \qquad U \in U(m-1)\; .
\]
\item unital von Neumann factor subalgebras $\mathcal M\subset B(\mathcal H_{\rm phys})$ such that $\mathcal M\cong B(\mathcal H_{\rm code})$ and the code subspace \(T\mathcal H_{\rm code}\) is invariant under $\mathcal M$.
\end{enumerate}
After choosing an arbitrary fixed reference operator map $\varphi_0$, the above spaces can be identified with the quotient $U(nm - n)/U(m-1)$ where $n = \dim \mathcal{H}_{\rm code}$. 
\end{proposition}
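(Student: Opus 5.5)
The plan is to prove the correspondences $1\leftrightarrow 2$ and $1\leftrightarrow 3$ explicitly, using Proposition \ref{factor}, and then to parametrize the set in item 2 as an orbit of a unitary group.

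\textbf{From completions to operator maps.} Given an orthogonal isometric completion $\{T^{(a)}\}$, define $\varphi$ by (\ref{defmap}). Orthogonality of the ranges gives $(T^{(a)})^\dagger T^{(b)}=\delta_{ab} I$. This makes $\varphi$ multiplicative and $*$-preserving, and it gives $\varphi(\mathcal O)T=T^{(1)}\mathcal O=T\mathcal O$. By dimension counting, the $m$ orthogonal $n$-dimensional ranges span $\mathcal H_{\rm phys}$, so $\sum_a T^{(a)}(T^{(a)})^\dagger=I_{\rm phys}$ and $\varphi$ is unital. If $T_2^{(a)}=\sum_{b\ge2}U_{ab}T_1^{(b)}$ with $U$ unitary, then
\[
\sum_{a\ge2}T_2^{(a)}\mathcal O (T_2^{(a)})^\dagger=\sum_{b,c}(U^\dagger U)_{cb}T_1^{(b)}\mathcal O(T_1^{(c)})^\dagger=\sum_{b\ge 2} T_1^{(b)}\mathcal O(T_1^{(b)})^\dagger ,
\]
so $\varphi$ depends only on the equivalence class.

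\textbf{From operator maps to completions.} Conversely, given a unital $\varphi$, Proposition \ref{factor} supplies $W$ and $e_1$ with $\varphi(\mathcal O)=W(\mathcal O\otimes I_m)W^\dagger$ and $T\psi=W(\psi\otimes e_1)$. Extend $e_1$ to an orthonormal basis $\{e_a\}$ of $\mathbb C^m$ and set $T^{(a)}\psi=W(\psi\otimes e_a)$. This is a completion with $T^{(1)}=T$, and it reproduces $\varphi$ via (\ref{defmap}).

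\textbf{Injectivity.} This is the main obstacle: one must show that two completions giving the same $\varphi$ are equivalent. My approach is to note that $(T^{(a)})^\dagger\varphi(\mathcal O)T^{(b)}=\delta_{ab}\mathcal O$. For two completions with the same $\varphi$, set $X_{ab}=(T_2^{(a)})^\dagger T_1^{(b)}\in B(\mathcal H_{\rm code})$. The identity $\varphi(\mathcal O)T_1^{(b)}=T_1^{(b)}\mathcal O$, obtained by expanding in the first completion, shows that $X_{ab}$ commutes with every $\mathcal O$. Hence $X_{ab}=U_{ab}I$. Completeness then gives $T_1^{(b)}=\sum_a T_2^{(a)}X_{ab}=\sum_a U_{ab}T_2^{(a)}$, and $U$ is unitary because both families are orthonormal. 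Since $T^{(1)}=T$ in both, $U_{1b}=\delta_{1b}$, so $U$ is block diagonal with a $U(m-1)$ block, which is exactly the stated equivalence. Some care is needed with index conventions, but the structure is this.

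\textbf{The algebras in item 3.} For $1\leftrightarrow 3$, send $\varphi\mapsto\varphi(B(\mathcal H_{\rm code}))$. This is a unital factor by Proposition \ref{factor}, and it leaves $T\mathcal H_{\rm code}$ invariant since $\varphi(\mathcal O)T=T\mathcal O$. Conversely, given $\mathcal M$, the compression $\mathcal M\to B(T\mathcal H_{\rm code})$, $A\mapsto T^\dagger A T$, is a unital $*$-homomorphism because $T\mathcal H_{\rm code}$ is invariant, hence reducing since $\mathcal M$ is a $*$-algebra. It is injective because $\mathcal M$ is simple. It is therefore an isomorphism onto $B(\mathcal H_{\rm code})$ by dimension, and its inverse is the unique $\varphi$ with image $\mathcal M$ satisfying $\varphi(\mathcal O)T=T\mathcal O$. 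Uniqueness holds because the intertwining relation pins down $T^\dagger\varphi(\mathcal O)T=\mathcal O$.

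\textbf{The quotient $U(nm-n)/U(m-1)$.} Fix a reference completion $\{T_0^{(a)}\}$ and let $U(nm-n)$ be the unitaries on $(T\mathcal H_{\rm code})^\perp$. A unitary $V$ on this complement maps $\{T_0^{(a)}\}_{a\ge2}$ to another completion. Every completion arises this way, since any two orthonormal decompositions of the complement into $n$-dimensional pieces with chosen isometries are related by a unitary, so the action on completions is transitive. The stabilizer of an equivalence class consists of those $V$ acting as $U\otimes I_n$ in the reference frame, i.e. a copy of $U(m-1)$. The orbit--stabilizer theorem then identifies the spaces with $U(nm-n)/U(m-1)$.
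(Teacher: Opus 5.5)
Your proposal is correct and follows essentially the same route as the paper: completions give unital maps via (\ref{defmap}), Proposition \ref{factor} supplies $W$ for the converse, the compression $A\mapsto T^\dagger A T$ handles item 3, and the unitaries on $(T\mathcal H_{\rm code})^\perp$ modulo the $u\otimes I_n$ stabilizer give $U(nm-n)/U(m-1)$. Your overlap operators $X_{ab}=(T_2^{(a)})^\dagger T_1^{(b)}$ are the blocks of the paper's $W_2^\dagger W_1$, so your injectivity step is the componentwise form of its argument that $W_2^\dagger W_1=I_{\rm code}\otimes U$ with $Ue_1=e_1$; a small bonus of your version is that it applies directly to arbitrary completions, not only to those built from some $W$.
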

\begin{proof}
    First, consider an orthogonal isometric completion $\{T^{(a)}\}$ and define the associated map (\ref{defmap}). This is a homomorphism, since we have
    \[
    (T^{(a)})^\dagger T^{(b)} =  \delta_{ab} I_{\rm code}
    \]
    by the assumed orthogonality property and so
\[
\sum_a T^{(a)} {\cal O}_1 (T^{(a)})^\dagger \sum_b T^{(b)} {\cal O}_2 (T^{(b)})^\dagger = \sum_a T^{(a)} {\cal O}_1 {\cal O}_2 (T^{(a)})^\dagger \; .
\]
 We also have that $\varphi({\cal O}^\dagger) = \varphi({\cal O})^\dagger$ by the definition and $\varphi(I_{\rm code}) =  \sum_a T^{(a)} (T^{(a)})^\dagger =I_{\rm phys}$ since this is a sum of projectors $P^{(a)} = T^{(a)} (T^{(a)})^\dagger$ to a complete set of orthogonal subspaces. Thus, $\varphi$ is a unital $*$-homomorphism. Finally, we have 
 \[
 \varphi({\cal O}) T = \sum_a T^{(a)} {\cal O} (T^{(a)})^\dagger T^{(1)} = T^{(1)} {\cal O} = T {\cal O} \; .
 \]
 It follows immediately from the definition that two orthogonal isometric completions in the same equivalence class give the same operator map. Conversely, given a unital operator map, Proposition \ref{factor} gives that there exists a unitary $W : \mathcal{H}_{\rm code} \otimes \mathbb{C}^m \to \mathcal{H}_{\rm phys}$ such that for some vector $e_1 \in \mathbb{C}^m$ we have $W(|\psi \rangle \otimes e_1) = T |\psi \rangle$ for all $|\psi \rangle \in \mathcal{H}_{\rm code}$ and $\varphi({\cal O}) = W({\cal O} \otimes I_m)W^\dagger$. Choosing an orthonormal basis $e_a, a=2,\dots,m$ for $\mathbb{C}^m \ominus \mathbb{C} e_1$, we can define auxiliary isometries
    \begin{equation}
    \label{Tdef}
    T^{(a)} |i\rangle = W (|i\rangle \otimes e_a) \; ,
    \end{equation}
    which define an operator map via (\ref{defmap}). 
We can show that different choices for $W$ and for the basis elements  give orthogonal isometric completions that are in the same equivalence class. First, from (\ref{Tdef}), we see that the unitary rotations of basis elements $\{e_a\},a=2, \dots , m$ lead to unitary rotations of $T^{(a)}$ that define the equivalence class. Further, for a fixed basis, if $W_1, W_2$ correspond to the same operator map, we have for any ${\cal O}$
\[
W_1({\cal O} \otimes I_m) W_1^\dagger = W_2({\cal O} \otimes I_m) W_2^\dagger \Leftrightarrow W_2^\dagger W_1({\cal O} \otimes I_m) = ({\cal O} \otimes I_m) (W_2^\dagger W_1)
\]
so $W_2^\dagger W_1$ commutes with ${\cal O} \otimes I_m$ for any ${\cal O}$ and must take the form $I_{\rm code} \otimes U$. Since both $W_1$ and $W_2$ take $\mathcal{H}_{\rm code}\otimes e_1$ to $T \mathcal{H}_{\rm code}$, we must have $U e_1 = e_1$, so $U$ corresponds to a unitary rotation among the basis elements $\{e_a\},a=2, \dots , m$, which again move between different orthogonal isometric completions in the same equivalence class. Thus, we have described a well-defined map from an operator map to an equivalence class of orthogonal isometric completions. 

Starting from an orthogonal isometric completion constructed in this way from a unital operator map $\varphi$, the construction (\ref{defmap}) gives back the same $\varphi$ since
\[
    \sum_a T^{(a)} {\cal O} (T^{(a)})^\dagger = \sum_{ij} {\cal O}_{ij} \sum_a T^{(a)} |i\rangle \langle j| (T^{(a)})^\dagger = \sum_{ij} {\cal O}_{ij} \sum_a W (|i\rangle \otimes e_a) (\langle j| \otimes e_a^\dagger) W^\dagger =  W ({\cal O} \otimes I_m) W^\dagger = \varphi({\cal O}) \; .
    \]
    Thus, we have established a one-to-one correspondence between 1 and 2. 

  The identification with $U(nm-n)/U(m-1)$ can be established by noting that for any reference $\varphi({\cal O})$ and corresponding equivalence class $S = \{\{T^{(a)}\}|\sum_a T^{(a)} {\cal O} (T^{(a)})^\dagger = \varphi({\cal O})\}$ of orthogonal isometric completions, any other orthogonal isometric completion can be obtained starting with an arbitrary representative $\{T^{(a)}_0\} \in S$ and taking 
  \[
  T^{(a)} = \hat{U} T^{(a)}_0 \qquad \qquad 
  \]
  for some unitary $\hat{U}$ on $\mathcal{H}_{\rm phys}$ acting as the identity on  $T \mathcal{H}_{\rm code}$ and as $U(T \mathcal{H}_{\rm code}^\perp) \equiv  U(nm - n)$ on the orthogonal complement. The equivalence relation gives the identification $U \equiv U (I_{\rm code} \otimes u)$ where $u$ is a unitary on $\mathbb{C}^{m-1}$ so we have one-to-one correspondence between distinct $\varphi$ elements of $Q = U(nm-n)/U(m-1)$ for a given $\varphi_0$.

  Finally, we show the equivalence between 1 and 3. Given a unital operator map $\varphi$,  Proposition \ref{factor} gives that $\varphi(B({\cal H}_{\rm code}))$ is a unital von Neumann factor subalgebra of $B({\cal H}_{\rm phys})$. This is isomorphic to $B({\cal H}_{\rm code})$ by definition and also leaves $T {\cal H}_{\rm code}$ invariant since $\varphi({\cal O}) T |\psi \rangle =  T {\cal O}|\psi \rangle \in T {\cal H}_{\rm code}$.  Conversely, consider any unital von Neumann factor subalgebra $\mathcal{M}$ isomorphic to $B({\cal H}_{\rm code})$ and leaving $T \mathcal{H}_{\rm code}$ invariant. We will show that the map
   \begin{equation}
   \label{map}   
   \kappa: \mathcal{M} \to B(\mathcal{H}_{\rm code}) \; , \qquad  \kappa({\cal O}) =  T^\dagger {\cal O} T \; 
   \end{equation}
   is a unital $*$-homomorphism. We note that for $A \in \mathcal{M}$, $A T$ is in $T \mathcal{H}_{\rm code}$ so for the projector $P = T T^\dagger$ to $T \mathcal{H}_{\rm code}$
   \[
   A T = P A T = T T^\dagger A T = T \kappa(A) \; .
   \]
   Thus, 
   \[
   \kappa(AB) = T^\dagger AB T = T^\dagger A T \kappa(B) = \kappa(A) \kappa(B) \; .
   \]
   We also have 
   \[
   \kappa(I_{\rm phys}) = T^\dagger T = I_{\rm code} \qquad \qquad \kappa(A^\dagger) = T^\dagger A^\dagger T = (T^\dagger A T)^\dagger = (\kappa(A))^\dagger \; ,
   \]
   thus, $\kappa$ is a unital $*$-homomorphism. 
   If $\kappa(A) = 0$ then also $\kappa(AB) = \kappa(A) \kappa(B) = \kappa(BA) = 0$ for any $B \in \mathcal{M}$. Thus, the kernel of $\kappa$ is a two-sided ideal (i.e. is preserved by left and right multiplication by elements of $\mathcal{M}$). Since $\mathcal{M} \equiv M_n(\mathbb{C})$ and $M_n(\mathbb{C})$ is simple, the only such ideals are $0$ and $M_n(\mathbb{C})$, but the latter is not possible since $\kappa$ is unital. Thus, $\ker \kappa = 0$. Since $\kappa$ maps between algebras of the same dimension, $\kappa$ is a $*$-isomorphism. Defining $\varphi$ to be the inverse of $\kappa$, we can check that the properties of $\kappa$ imply that $\varphi$ is a unital $*$-homomorphism with $\varphi({\cal O}) T = T {\cal O}$:
   \begin{eqnarray*}
        \kappa(\varphi({\cal O}_1) \varphi({\cal O}_2)) =  \kappa(\varphi({\cal O}_1)) \kappa(\varphi({\cal O}_2)) = {\cal O}_1{\cal O}_2 &\implies& \varphi({\cal O}_1) \varphi({\cal O}_2) = \varphi({\cal O}_1{\cal O}_2) \cr
        \kappa(I_{\rm phys}) = I_{\rm code} & \implies& \varphi(I_{\rm code}) = I_{\rm phys} \cr
        \kappa(\varphi({\cal O})^\dagger) = \kappa(\varphi({\cal O}))^\dagger = {\cal O}^\dagger &\implies&  \varphi({\cal O}^\dagger) = (\varphi({\cal O}))^\dagger \cr
        A T = T \kappa(A) &\implies& \varphi({\cal O}) T = T \kappa(\varphi({\cal O})) = T {\cal O} \; .
   \end{eqnarray*}
  We have thus established a one-to-one correspondence between 1 and 3.
    
    \end{proof}
The identification with $U(nm - n)/U(m-1)$ provide a natural measure on the space of possible unital operator maps, induced by the Haar measure on $U(nm - n)/U(m-1)$. We will use this measure below when describing properties that hold ``generically'' or when choosing a ``random'' unital operator map.

\section{Factor codes for tree networks}

In the previous section, we have seen that given a quantum code $T:\mathcal{H}_{\rm code} \to \mathcal{H}_{\rm phys}$, there is a family of factor codes that we can upgrade it to, related by a unitary equivalence. 

Next, we consider the situation where we have a code network associated with some Hasse diagram, and ask whether it is possible to define operator maps associated with each of the isometries in order to define a factor code network. We focus in this section on the special case where the code network is a tree (as in Figure \ref{fig:Hasse} (left)), showing that this upgrade is always possible.

\begin{proposition}
Consider a code network $\{\mathcal{H}_i\}_{i \in S}$ for which the Hasse diagram of the index set $S$ is a tree.
If ${\rm dim}(\mathcal{H}_i)$ divides ${\rm dim}(\mathcal{H}_j)$ for each $i < j$ then we can define unital operator maps $\varphi_{ji} : B(\mathcal{H}_i) \to B(\mathcal{H}_j)$ to promote the system to a factor code network.
\label{tree}
\end{proposition}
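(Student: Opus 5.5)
The plan is to define operator maps only on the edges of the Hasse diagram and then extend them to all comparable pairs by composition. The tree structure is what makes this extension well defined.

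\emph{Edge maps.} For each edge $i \to j$ of the Hasse diagram, $i<j$ and $\dim \mathcal{H}_i$ divides $\dim \mathcal{H}_j$. Proposition \ref{Tphi} then supplies an orthogonal isometric completion of $T_{ji}$, and hence a unital operator map $\varphi_{ji}$ with $\varphi_{ji}(\mathcal{O})T_{ji} = T_{ji}\mathcal{O}$. We make an arbitrary such choice independently for every edge.

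\emph{Extension to comparable pairs.} For $i<k$, use the fact that the Hasse diagram is a tree (read as an undirected graph). Any two directed chains $i = i_0 \to i_1 \to \cdots \to i_r = k$ are undirected paths between $i$ and $k$, so they coincide. Moreover, every $i<k$ admits at least one such maximal chain, since $S$ is finite (or at least the intervals are finite, which is implicit in having a Hasse diagram). Define
\[
\varphi_{ki} = \varphi_{i_r i_{r-1}} \circ \cdots \circ \varphi_{i_1 i_0}, \qquad \varphi_{ii} = \mathrm{id}.
\]
This is a composition of unital $*$-homomorphisms, so it is one as well. The intertwining property follows inductively from the code network relation $T_{ki} = T_{i_r i_{r-1}} \cdots T_{i_1 i_0}$:
\[
\varphi_{ki}(\mathcal{O}) T_{ki} = \varphi_{i_r i_{r-1}} \circ \cdots \circ \varphi_{i_2 i_1}\bigl(\varphi_{i_1 i_0}(\mathcal{O})\bigr)\, T_{i_r i_{r-1}} \cdots T_{i_2 i_1} T_{i_1 i_0}.
\]
Peeling off one factor at a time, using $\varphi_{i_{s+1} i_s}(X)\,T_{i_{s+1} i_s} = T_{i_{s+1} i_s} X$, gives $T_{ki}\mathcal{O}$.

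\emph{Consistency.} Take $i \le j \le k$. Concatenating the unique chain from $i$ to $j$ with the unique chain from $j$ to $k$ gives a directed chain from $i$ to $k$ consisting of Hasse edges. By uniqueness this is the chain from $i$ to $k$, so $\varphi_{kj}\circ\varphi_{ji} = \varphi_{ki}$ holds by construction.

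\emph{Main obstacle.} The only substantive point is justifying uniqueness of the chain. Between $i$ and $k$ there can be only one path of covering relations, because two distinct ones would contain a cycle in the underlying undirected graph, which contradicts the tree hypothesis. I would spell this out carefully, noting that a cycle arises from the first vertex where the two paths diverge and the next vertex where they reconverge. I would also note explicitly that existence of a chain requires the interval $[i,k]$ to be finite.
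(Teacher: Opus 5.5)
Your proof is correct and takes essentially the paper's route: fix a completion on each Hasse edge, compose along the unique chain that the tree structure guarantees, and get consistency from the uniqueness of that chain. The only difference is bookkeeping. The paper composes the auxiliary isometries $T^{(a)}_{ji}$ into an orthogonal isometric completion of $T_{ki}$ and reapplies Proposition~\ref{Tphi}, whereas you compose the unital $*$-homomorphisms directly and check intertwining by peeling, which is equivalent and slightly cleaner; your explicit argument for the existence and uniqueness of the chain also fills in a step the paper leaves implicit.
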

\begin{proof}
For each edge $i \to j$ in the Hasse diagram, we have Hilbert spaces $\mathcal{H}_i$ and $\mathcal{H}_j$ with integer $m_{ij} = {\rm dim} \mathcal{H}_j / {\rm dim} \mathcal{H}_i$. We can construct isometries $T^{(a)}_{ji} : \mathcal{H}_i \to \mathcal{H}_j$, $a = 1, \dots, m_{ij}$ with $T^{(1)}_{ji} = T_{ji}$ and the remaining isometries mapping to orthogonal subspaces of $\mathcal{H}_j$. For any $i < j$ with intermediate vertices, we have a unique path $i \to k_1 \to \cdots \to k_n \to j$ in the Hasse diagram, since we are assuming this is a tree. We can define $m_{ij} = m_{i k_1} m_{k_1 k_2} \cdots m_{k_n j}$ isometries 
\begin{equation} \label{eq:compositions}
T^{(a_1 \cdots a_{n+1})}_{ji} = T^{(a_1)}_{j k_n} T^{(a_2)}_{k_n k_{n-1}} \cdots T^{(a_{n+1})}_{k_1 i} \; .
\end{equation}
Again, these map to orthogonal subspaces. Finally, for any $i < j$ we define the operator map
\[
\varphi_{ji}({\cal O}_i)= \sum_a T^{(a)}_{ji} {\cal O}_i (T^{(a)}_{ji})^\dagger \; .
\]
According to Proposition \ref{Tphi}, since $\{T_{ji}^{(a)}\}$ defines an orthogonal isometric completion of $T_{ji}$, $\varphi_{ji}$ defines a unital $*$-homomorphism satisfying 
\[
\varphi_{ji}({\cal O}) T_{ji} = T_{ji} {\cal O}  \qquad \forall {\cal O} \in B(\mathcal{H}_i) \quad   i \le j \; .
\]
The remaining composition property for the operator maps follows immediately from \eqref{eq:compositions}. For $i \le j \le k$, we have
\[
\varphi_{kj} (\varphi_{ji}({\cal O}_i)) = \sum_{a,b}  T^{(a)}_{kj} T^{(b)}_{ji} {\cal O}_i (T^{(b)}_{ji})^\dagger (T^{(a)}_{kj})^\dagger = \sum_{a,b}  T^{(a,b)}_{ki} {\cal O}_i (T^{(a,b)}_{ki})^\dagger = \varphi_{ki} ({\cal O}_i) \qquad i \le j \le k \; .
\]
\end{proof}

\subsection{Duplex codes}

For a single quantum code, or for code network where the Hasse diagram is a simple chain of Hilbert spaces, the structure is entirely determined by the dimensions of the Hilbert spaces. Any two such code networks or associated factor code networks with the same set of Hilbert space dimensions are related by unitary equivalence. We have seen this explicitly in Proposition \ref{Tphi} for a single factor code.

On the other hand, when we have a network where two different code Hilbert spaces are mapping into the same physical Hilbert space, there is additional invariant (basis-independent) information in the code network, namely the relative position of the two code subspaces in the physical Hilbert space. Similarly, there are various inequivalent completions to factor codes, and the subalgebras associated with these different completions can have different properties with respect to commutation, intersection and the algenbra they generate (join). In this section, we will explore and characterize this additional structure.

We focus on the simple case of a single physical Hilbert space in which two different logical Hilbert spaces are encoded:
\begin{definition}
A {\it duplex code} is a triple of Hilbert spaces $(\mathcal{H}_A,\mathcal{H}_B,\mathcal{H})$ with isometries $T_A:\mathcal{H}_A \to \mathcal{H}$ and $T_B:\mathcal{H}_B \to \mathcal{H}$. We define the associated code subspaces
\[
A \equiv T_A \mathcal{H}_A~, \qquad B \equiv T_B \mathcal{H}_B
\]
and projectors
\[
P_A = T_A T_A^\dagger~, \qquad P_B = T_B T_B^\dagger \; .
\] 
\end{definition}

\subsubsection*{Principal angles between code subspaces}

The invariant geometric data describing the relative orientation of the two code subspaces $A$ and $B$ is a set of {\it principal angles} $\theta_i$ whose number is equal to the dimension of the smaller Hilbert space.\footnote{Principal angles and the associated canonical decomposition of a pair of subspaces are classical tools in operator theory and numerical linear algebra \cite{Halmos1969,BjorckGolub1973,KnyazevArgentati2002}.} One way to visualize these is via a decomposition of the full Hilbert space into a direct sum with the following orthogonal parts:
\begin{itemize}
    \item The intersection $A \cap B$. 
    \item A set of two-dimensional spaces $C_i$ such that $A \cap C_i$ and $B \cap C_i$ are one-dimensional subspaces with relative angle $0 < \theta_i < \pi/2$.
    \item The intersection $A \cap B^\perp$.
    \item The intersection $A^\perp \cap B$.
    \item The orthogonal space $A^\perp \cap B^\perp$.
\end{itemize}
Along with the principal angles $0 < \theta_i < \pi/2$ associated with the subspaces $C_i$, we define a number of principal angles $\theta_i = 0$ equal to the dimension of the intersection $A \cap B$ and a number of principal angles $\theta_i = \pi/2$ equal to the minimum of $\dim A \cap B^\perp$ and $\dim A^\perp \cap B$.

The principal angles can be efficiently calculated by noting that $\{\cos \theta_i \}$ are the singular values of the operator $P_{B}|_A$ if $\dim A < \dim B$ or $P_{A}|_B$ of $\dim B < \dim A$. Alternatively, the principal angles with $\theta_i < \pi/2$ are related to the nonzero eigenvalues of $P_A P_B P_A$ (or equivalently of $P_B P_A P_B$) by
$\lambda_i = \cos^2 \theta_i$.\footnote{For example, with two one-dimensional Hilbert spaces embedded in a higher dimensional space, we have $P_A = |v \rangle \langle v|$ and $P_B = |w \rangle \langle w|$ so $P_A P_B P_A = \cos^2(\theta) |v \rangle \langle v|$ where $\cos^2(\theta) = |\langle v |w \rangle|^2$, so we have a single angle characterizing the relative orientation, as expected.}

These principal angles will play an important role in some of our discussions below characterizing the possible factor code networks arising from networks that include duplex codes.

\subsubsection*{Properties of $\mathcal{M}_A$ and $\mathcal{M}_B$}

Given a duplex code with any nonzero principal angles between the code subspaces $A$ and $B$, there will be various inequivalent factor code networks that we can associate. When choosing the auxiliary isometries $T_A^{(a)}$ and $T_B^{(a)}$ to define unital operator maps, the subspaces $T_A^{(a)} \mathcal{H}_A$ now each have a set of principal angles with $B$ and the subspaces $T_B^{(a)} \mathcal{H}_B$ each have a set of principal angles with $A$. These various principal angles can differ in different factor code networks associated with the same underlying duplex code.

In some cases, it is possible to use this freedom to demand additional special properties of the factor code network. We will focus on properties of the algebras $\mathcal{M}_A = \varphi_A(B(\mathcal{H}_A))$ and $\mathcal{M}_B = \varphi_B(B(\mathcal{H}_A))$. The following propositions characterize when each of the following properties are possible and/or generic.
\begin{itemize}
    \item The algebra generated by $\mathcal{M}_A$ and $\mathcal{M}_B$ is the full algebra $B(\mathcal{H})$.
    \item The intersection of $\mathcal{M}_A$ and $\mathcal{M}_B$ is as small as possible, i.e. $\mathcal{M}_A \cap \mathcal{M}_B = \mathbb{C} I_{\rm phys}$.
    \item The algebras $\mathcal{M}_A$ and $\mathcal{M}_B$ commute with each other.
\end{itemize}

\subsubsection{Algebra generated by $\mathcal{M}_A$ and $\mathcal{M}_B$}

We begin with:
\begin{proposition}
\label{Union1}
    Suppose we have Hilbert spaces ${\cal H}_A$, ${\cal H}_B$, and ${\cal H}$ with isometries $T_A : {\cal H}_A \to {\cal H}$ and  $T_B : {\cal H}_B \to {\cal H}$. 
    In the space of possible extensions of the isometries to unital operator maps $\varphi_A$, $\varphi_B$ satisfying the properties of Proposition \ref{tree}, a Haar random choice will have the property that the subalgebras $\mathcal{M}_A = \varphi_A(B(\mathcal{H}_A))$ and $\mathcal{M}_B= \varphi_B(B(\mathcal{H}_B))$ together generate $B(\mathcal{H})$ unless \begin{itemize}
        \item $\dim \mathcal{H}_A = 1$ and $  \dim \mathcal{H}_B < \dim \mathcal{H}$~.
        \item $\dim \mathcal{H}_B = 1$ and $ \dim \mathcal{H}_A < \dim \mathcal{H}$~.
        \item
        $T_A \mathcal{H}_A = T_B \mathcal{H}_B \ne {\cal H}$~.
        \item
        $T_A \mathcal{H}_A$ is the orthogonal complement of $T_B \mathcal{H}_B$~.
    \end{itemize}
     In all of these cases, no choices for $\mathcal{M}_A$ and $\mathcal{M}_B$ generate $B(\mathcal{H})$.    
\end{proposition}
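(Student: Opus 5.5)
The approach is to reformulate the generation property in terms of commutants. By the double commutant theorem, the von Neumann algebra generated by $\mathcal{M}_A$ and $\mathcal{M}_B$ is $(\mathcal{M}_A \cup \mathcal{M}_B)'' = (\mathcal{M}_A' \cap \mathcal{M}_B')'$. It therefore equals $B(\mathcal{H})$ if and only if $\mathcal{M}_A'\cap\mathcal{M}_B' = \mathbb{C} I$. By Proposition \ref{factor}, $\mathcal{M}_A' = W_A(I_{A}\otimes M_{m_A}(\mathbb{C}))W_A^\dagger$, and similarly for $B$. Both parts of the statement (the four obstructed cases, and the generic case) will be phrased in terms of this joint commutant, or equivalently in terms of common invariant subspaces.

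\emph{Obstructed cases.} If $\dim\mathcal{H}_A=1$, then $\mathcal{M}_A=\mathbb{C}I$, so the generated algebra is $\mathcal{M}_B$. This is a proper subalgebra because $\dim \mathcal{M}_B = (\dim\mathcal{H}_B)^2 < (\dim\mathcal{H})^2$. The case $\dim\mathcal{H}_B=1$ is symmetric. For the other two cases I use item 3 of Proposition \ref{Tphi}: every admissible $\mathcal{M}_A$ leaves $A$ invariant, and every admissible $\mathcal{M}_B$ leaves $B$ invariant. A $*$-algebra that leaves a subspace invariant also leaves its orthogonal complement invariant. So if $A=B$, or if $A=B^\perp$, both algebras leave $A$ invariant. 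Hence $P_A$ lies in the joint commutant, and $P_A$ is a nontrivial projection because $0\neq A\neq\mathcal{H}$. The generated algebra is then proper, for every choice of operator maps.

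\emph{Generic case.} Fix reference maps $\varphi_A^0,\varphi_B^0$. By Proposition \ref{Tphi}, every admissible pair has the form $\varphi_A = \mathrm{Ad}_{U_A}\circ\varphi^0_A$ and $\varphi_B = \mathrm{Ad}_{U_B}\circ\varphi^0_B$, where $U_A$ ranges over unitaries acting trivially on $A$, and $U_B$ over unitaries acting trivially on $B$. This parameter space $G = U(A^\perp)\times U(B^\perp)$ is a connected compact real-analytic manifold. Consider the linear map
\[
L_{U_A,U_B}: X\mapsto \bigl([X,\varphi_A(E_{ij})],[X,\varphi_B(E_{kl})]\bigr)_{i,j,k,l}.
\]
The generated algebra is proper exactly when $\dim\ker L>1$, that is, when all minors of $L$ of size $(\dim\mathcal{H})^2-1$ vanish. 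These minors are polynomials in the entries of $U_A,U_B,\bar U_A,\bar U_B$, so the bad set is a real-analytic subvariety of the connected manifold $G$. Such a subvariety is either all of $G$ or has Haar measure zero; the measure on the quotient in Proposition \ref{Tphi} is the pushforward of Haar measure on $G$, so the same conclusion holds there. It therefore suffices to exhibit a single pair $(U_A,U_B)$ with trivial joint commutant whenever none of the four exceptional cases holds.

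\emph{Main obstacle: the existence of one good pair.} This is the step I expect to be hardest. I would argue by contradiction. Suppose every point of $G$ has nontrivial joint commutant. Then a measurable selection gives a nontrivial projection $Q(U_A,U_B)$ commuting with $\mathcal{M}_A$ and $\mathcal{M}_B$. Since $G$ is compact, a limit argument yields a point where such a $Q$ persists under all first-order variations of $U_A$ on $A^\perp$ and of $U_B$ on $B^\perp$.

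The first-order condition says that $Q$ commutes with $[K,\varphi_A(\mathcal{O})]$ for every anti-Hermitian $K$ supported on $A^\perp$, and similarly for $B$. Since $\varphi_A(\mathcal{O})$ mixes $A$ with $A^\perp$ when $m_A\ge2$ and $n_A\ge2$, I expect these conditions to force $Q$ to commute with every operator supported on $A^\perp$. That would make $Q$ a multiple of the identity on $A^\perp$ and leave $A$ invariant, and symmetrically for $B$. Putting these together:

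1. If $A^\perp\cap B^\perp\neq 0$, the constraints force $Q$ to be scalar on $A^\perp + B^\perp$.
2. The only ways to escape are $A=B$ or $A\perp B$ with $A+B=\mathcal{H}$, which are exactly the excluded cases.

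If this perturbative route becomes messy, the fallback is to build an explicit example. I would choose the orthogonal completions $T_A^{(a)}$ and $T_B^{(b)}$ so that some $T_A^{(a)}\mathcal{H}_A$ and $T_B^{(b)}\mathcal{H}_B$ sit in general position. I would then check directly that the only projections commuting with both $I\otimes M_{m_A}$-type and $I\otimes M_{m_B}$-type commutants are trivial.
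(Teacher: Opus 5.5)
Your commutant reformulation, your treatment of the four exceptional cases, and your reduction of the generic claim to a single good pair are all sound. That last reduction works as follows: the bad set is the common zero locus of the $(n^2-1)$-minors of $L$, which is a real-analytic subset of the connected group $U(A^\perp)\times U(B^\perp)$, so it is either everything or Haar-null. This is a clean and rigorous way to pass from one good point to genericity. However, it moves the entire content of the proposition into the claim that a good pair exists whenever none of the exceptional cases holds. That claim is not established.

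The perturbative argument breaks at the assertion that ``a limit argument yields a point where such a $Q$ persists under all first-order variations.'' A measurable selection gives no differentiability, and compactness only gives accumulation points. Restrict to the dense open set where $\mathcal{C}=\mathcal{M}_A'\cap\mathcal{M}_B'$ has locally constant dimension, so that $Q$ can be chosen analytically. Differentiating $[Q(t),e^{tK}\varphi_A(\mathcal{O})e^{-tK}]=0$ and using Jacobi gives only $[Q,K]+\dot Q\in\mathcal{M}_A'$, with $\dot Q$ tangent to the projections in $\mathcal{M}_B'$. The $\dot Q$ term cannot be dropped, because the joint commutant genuinely moves. For instance, replacing $(U_A,U_B)$ by $(uU_A,uU_B)$, with $u$ a unitary supported on $A^\perp\cap B^\perp$, is an admissible change that conjugates $\mathcal{C}$ by $u$. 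So you never obtain that $Q$ commutes with $[K,\varphi_A(\mathcal{O})]$. Your subsequent (correct) deduction that such a $Q$ would have to be $P_A$ or $P_{A^\perp}$ and also $P_B$ or $P_{B^\perp}$ therefore has no input.

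The fallback explicit example is not carried out, and for arbitrary principal angles between $A$ and $B$ it is not routine. What is missing is a genuine count showing that the incidence set $Y=\{(V,W,P)\}$, with $P$ a nontrivial projection in both commutants, projects onto a proper subset of $S_V\times S_W$. The paper does this in three steps:
\begin{enumerate}
\item It brings each candidate $P$ to the normal form $I_a\otimes P_\theta$, where $\theta$ is the angle between $e_1$ and the range of the $\mathbb{C}^k$ factor of $P$.
\item It stratifies by $\theta$ and the analogous angle $\chi$ on the $B$ side, each being $0$, $\pi/2$, or in $(0,\pi/2)$.
\item It computes the orbit and stabilizer dimensions $d_\theta, D_\theta$, and shows $\dim Y<\dim S_V+\dim S_W$ unless $P\mathcal{H}$ equals $A$ or $A^\perp$ and also equals $B$ or $B^\perp$.
\end{enumerate}
If you keep the $\dot Q$ term, your first-order analysis essentially becomes an infinitesimal version of that count, and you would still have to carry it out.
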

\begin{proof}
For $\dim \mathcal{H}_A = 1$ or $\dim \mathcal{H}_B = 1$, the algebra associated with the one-dimensional Hilbert space is scalar, so the two algebras together generate $B(\mathcal{H})$ if and only if the other algebra is already equal to $B(\mathcal{H})$, that is,  if the larger subspace is all of $\mathcal{H}$.

Otherwise, let $a > 1$ and $b > 1$ be the dimensions of $\mathcal{H}_A$ and $\mathcal{H}_B$, and let the dimension of $\mathcal{H}$ be $n = k a = m b$.
Let $A \equiv T_A \mathcal{H}_A \subset \mathcal{H}$ and $B \equiv T_B \mathcal{H}_B \subset \mathcal{H}$. Then ${\cal H}$ is unitarily equivalent to $A \otimes \mathbb{C}^k$ and to $B \otimes \mathbb{C}^m$. For each choice of operator maps, the algebra $\mathcal{M}_A$ can be represented as
\[
\mathcal{M}_A = V(B(A) \otimes I_k)V^\dagger~,
\]
where $V: A \otimes \mathbb{C}^k \to \mathcal{H}$ is a unitary satisfying $V( |\alpha\rangle \otimes e_1) = |\alpha\rangle$ (equivalent to a general unitary from $A \otimes \mathbb{C}^{k-1}$ to $A^\perp$). 
Similarly, the algebra $\mathcal{M}_B$ can be represented as
\[
\mathcal{M}_B = W(B(B) \otimes I_m)W^\dagger~,
\]
where $W: B \otimes \mathbb{C}^m \to \mathcal{H}$ is a unitary satisfying $W(|\beta\rangle \otimes e_1) = |\beta\rangle$.
The commutant algebras (within the algebra $B(\mathcal{H})$) are then given by
\[
\mathcal{M}_A' = V(I_a \otimes M_k(\mathbb{C}))V^\dagger~, \quad \mathcal{M}_B' = W(I_b \otimes M_m(\mathbb{C}))W^\dagger \; .
\]
If $\mathcal{M}_A$ and $\mathcal{M}_B$ together generate the full algebra on $\mathcal{H}$, then $\mathcal{M}_A' \cap \mathcal{M}_B' = (\mathcal{M}_A \vee \mathcal{M}_B)' = \mathbb{C} I_\mathcal{H}$. Conversely, if $\mathcal{M}_A' \cap \mathcal{M}_B' = \mathbb{C} I_\mathcal{H}$, then $\mathcal{M}_A \vee \mathcal{M}_B = (\mathcal{M}_A \vee \mathcal{M}_B)'' = (\mathcal{M}'_A \cap \mathcal{M}_B')' = (\mathbb{C} I_\mathcal{H})' = B(\mathcal{H})$. 
Thus, the statement that $\mathcal{M}_A$ and $\mathcal{M}_B$ generate $B(\mathcal{H})$ is equivalent to the statement that $$
\mathcal{M}'_A \cap \mathcal{M}_B' = \mathbb{C} I_\mathcal{H}~.
$$

We will argue that the latter is true for typical (Haar-random) choices of $V$ and $W$. First, we note that $\mathcal{M}'_A \cap \mathcal{M}_B' \ne \mathbb{C} I_\mathcal{H}$ if and only if the intersection contains some projection $P_R$ of rank $0 < R < n$. Since $P_R \in \mathcal{M}'_A$, we can write it as $V (I_a \otimes P_A) V^\dagger$ for some projection $P_A$, and since $P_R \in \mathcal{M}'_B$, we can also write it as $W (I_b \otimes P_B) W^\dagger$ for some projection $P_B$. This shows that the rank $R$ is divisible by both $a$ and $b$. 

Let $C_V$ be the space of operators of the form $V(I_a \otimes M_k(\mathbb{C}))V^\dagger$ for $V$ satisfying the conditions above and $C_W$ be the space of operators of the form $W (I_b \otimes M_k(\mathbb{C})) W^\dagger$ for $W$ as above. The set $S_V$ of allowed $V$s is isomorphic to $U(n-a)$ since $V$ is a unitary on an $n$-dimensional space with a fixed action on an $a$-dimensional subspace. We thus have real dimension $\dim S_V = (n-a)^2$. Similarly, the set $S_W$ of allowed $W$s is isomorphic to $U(n-b)$ and has real dimension $\dim S_B = (n-b)^2$.

In the space of operators on $\mathcal{H}$, let $\mathcal{P}_A$ be the set of all projectors of rank $R$ that appear in some $C_V$, $\mathcal{P}_B$ be the set of all projectors of rank $R$ that appear in some $C_W$, and $\Sigma \equiv \mathcal{P}_A \cap \mathcal{P}_B$ be the intersection.

Consider the space of triples 
\[
Y \equiv \{(V,W,P_R) \,| \, P_R \in C_V \cap C_W \} \; .
\]
This is a subset of $S_V \times S_W \times \Sigma_R$. We will show that the image $Y \to S_V \times S_W$ is a measure zero set and thus that generic $\mathcal{M}_A'$ and $\mathcal{M}_B'$ intersect only on multiples of the identity.

We can think of $Y$ as having a base $\Sigma$ with fibers $F_A(P) \equiv \{ V \in S_V \,|\, P \in C_V\}$ and $F_B(P) \equiv \{ W \in S_W \, | \, P \in C_W\}$ in $S_V$ and $S_W$ respectively. We will see that the possible $P$s in $\Sigma$ can be divided into different classes (related to how the subspace $P \mathcal{H}$ sits relative to $A$ and $B$). The dimensions of the fibers and the relevant portion of $\Sigma$ depends on the class. For the portion of $Y_R$ associated with each class of projector $P^{(i)}$, we have 
\begin{equation}
\dim Y^{(i)} \le \dim \Sigma^{(i)} +  \dim F_A(P^{(i)}) +  \dim F_B(P^{(i)})~.
\label{E1}
\end{equation}
We will show that the right-hand side is smaller than $\dim (S_V \times S_W)$ apart from the exceptional cases.

We will first understand how projectors can be divided into different classes (i.e. we will understand the so-called ``stratification'' of $\Sigma$). 
We recall that the space $\mathcal{P}_A$ describes operators of the form
\[
P = V (I_a \otimes P_A) V^\dagger~,
\]
for some projector $P_A$ on $\mathbb{C}^k$ and some $V \in S_V$. We will show that this space 
splits up into a one-dimensional space of $S_V-$orbits labeled by the angle $\theta$ between the distinguished vector $e_1 \in \mathbb{C}^k$ associated with $A$ and the subspace $P_A \mathbb{C}^k$.

Consider any specific projector $P = V_0 (I_a \otimes P_A) V_0^\dagger$. We will use $V_0 : A \otimes \mathbb{C}^k \to  \mathcal{H}$ to identify $A \otimes \mathbb{C}^k \equiv \mathcal{H}$. In this case, using a basis $|\alpha_k\rangle \otimes e_i$ ordered as $|\alpha_1\rangle \otimes e_1, |\alpha_2\rangle \otimes e_1, \dots$, the matrix representation for $P$ takes a block form $P = I_a \otimes P_A$, and the allowed $V$ in $S_V$ take the form 
\[
V = \left(\begin{array}{cc}I_a & 0 \cr 0 & V_\perp \end{array}\right)~, \qquad V_\perp \in U(n-a) \; .
\]
We will show that by choosing some $V$ of this form with $V_\perp = I_a \otimes v$ and $v \in U(k-1)$, we can arrange that $V P V^\dagger = I_a \otimes P_\theta$ with
\begin{equation} \label{eq:Ptheta}
P_\theta =  \left(\begin{array}{cccc} \cos^2 \theta & \cos \theta \sin \theta & 0 & 0 \cr \cos \theta \sin \theta & \sin^2 \theta & 0 & 0 \cr 0 & 0 & I_{r-1} & 0 \cr 0 & 0 & 0 & 0_{k-r-1} \end{array} \right)~. 
\end{equation}
Indeed, taking the projector
\[
P_A =  \left(\begin{array}{cc} p & q^\dagger \cr q & M \end{array}\right) \; ,
\]
and $V$ of the form described above, we have $V (I_a \otimes P_A) V^\dagger = I_a \otimes \hat{P}$ where
\[
\hat{P} = \left(\begin{array}{cc} 1 & 0 \cr 0 & v \end{array}\right) \left(\begin{array}{cc} p & q^\dagger \cr q & M \end{array}\right) \left(\begin{array}{cc} 1 & 0 \cr 0 & v^\dagger \end{array}\right) = \left(\begin{array}{cc} p & q^\dagger v^\dagger \cr v q & vMv^\dagger \end{array}\right)~.
\]
Choosing $v$ to diagonalize $M$, defining $D = v M v^\dagger$ and defining $s \equiv vq$, the condition $\hat{P}^2 = \hat{P}$ gives
\[
p^2 + s^\dagger s = p~, \qquad (D + (p-1)I_{k-1}) s = 0~, \qquad s s^\dagger + D^2 = D \; .
\]
The first equation implies $p - p^2 \ge 0$, so $0 \le p \le 1$ and we can set $p = \cos^2 \theta$. Since $s s^\dagger$ has rank 1 with non-zero eigenvalue (according to the first equation) $p - p^2 = \cos^2 \theta \sin^2 \theta$, the last equation implies that $D$ has one eigenvalue $\lambda$ with $\lambda - \lambda^2 = \cos^2 \theta \sin^2 \theta$, and the remaining eigenvalues equal to 0 or 1. We can thus write (for some choice of $v$) $D = \text{diag}(\lambda, 1, \dots, 1, 0, \dots, 0)$. Since $s s^\dagger = D - D^2 = \text{diag}(\cos^2 \theta \sin^2 \theta, 0, \dots, 0)$, we conclude that the first component of $s$ has magnitude $\sin \theta \cos \theta$ and the rest are zero. Finally, the first component of the middle equation gives $\lambda = 1 - p = \sin^2 \theta$. We can take the first component of $s$ to be real and positive by a choice of the top left component of $v$. Thus, we have shown the $P_A$ can be placed in the claimed standard form depending only on a single parameter $\theta$.\footnote{This parameter has the interpretation as the angle between $e_1$ and the hyperplane associated with $P_A$.}

The classification above gives three classes of  $P \in \mathcal{P}_A$, depending on whether $\theta = 0$, $\theta = \pi/2$, or $0 < \theta < \pi/2$. Similarly for $P \in \mathcal{P}_B$, we have an equivalent classification, with another angle $\chi$. We will now bound the dimension of the fibers $F_A(P)$ and $F_B(P)$ for each case. 
Let $P = V_0(I_a \otimes P_\theta) V_0^\dagger$. The fiber $F_A(P)$ is the set of $V$s for which there is a $P_A$ satisfying $V(I_a \otimes P_A)V^\dagger = P$. Here, $P_A$ must be in the same class as $P_\theta$. Let $d_\theta(P)$ be the dimension of the space of all such $P_A$s. For a fixed choice of $(V,P_A)$ satisfying $V(I_a \otimes P_A)V^\dagger = P$, some other $\hat{V} \in S_V$ will give $\hat{V}(I_a \otimes P_A)\hat{V}^\dagger = P$ for the same $P_A$ if and only if $\hat{V} V^\dagger P (\hat{V} V^\dagger)^\dagger = P$, that is, $\hat{V} V^\dagger$ is in the \emph{stabilizer} of $P$. Defining $D_\theta(P)$ to be the dimension of this stabilizer, we thus have the upper bound:
\begin{equation}
\label{E2}
\dim F_A(P) \le d_\theta(P) + D_\theta(P) \; .
\end{equation}
The dimensions $d_\theta(P)$ and $D_\theta(P)$ depend only on which of the three classes $P$ is in. To compute $d_\theta$, we can use that it is the dimension of the space of matrices of the form
\[
\hat{P} = \left(\begin{array}{cc} 1 & 0 \cr 0 & v \end{array}\right) P_\theta \left(\begin{array}{cc} 1 & 0 \cr 0 & v^\dagger \end{array}\right)~, \qquad v \in U(k-1) \; .
\]
This is equal to the dimension $(k-1)^2$ of the unitary group $U(k-1)$ minus the dimension of the stabilizer of $P_\theta$ in $U(k-1)$. This stabilizer dimension is equal to the dimension of the space of $k\times k$ hermitian matrices of the form $\text{diag}(0,h)$ that commute with $P_\theta$. Using the explicit form \eqref{eq:Ptheta}, a straightforward computation gives:
\[
d_\theta(P) = \left\{ \begin{array}{cc} 2(r-1)(k-r)~, & \qquad \theta = 0~, \cr
2r(k-r-1)~, & \qquad \theta = \pi/2~, \cr
2r(k-r) - 1~, & \qquad \text{otherwise}~. \end{array} \right.
\]
Similarly, the stabilizer dimension $D_\theta(P)$ is the dimension of the space of $n \times n$ Hermitian matrices of the form $\text{diag}(0,H)$ with $H$ hermitian on $A^\perp$ that commute with $I_a \otimes P_\theta$.
An explicit computation shows that:
\[
D_\theta(P) = \left\{ \begin{array}{cc} (R-a)^2 + (n-R)^2~, & \qquad \theta = 0~, \cr
R^2 + (n-R-a)^2~, & \qquad \theta = \pi/2~, \cr
(R-a)^2 + (n - R - a)^2~, & \qquad \text{otherwise}~. \end{array} \right.
\]
We have a completely parallel calculation for the dimensions relevant to the fiber $F_B(P)$.

Let us now return to our dimension comparison. Each $P$ in $\Sigma$ has an angle $\theta$ associated with its standard form in ${\cal P}_A$ and an angle $\chi$ associated with its standard form in ${\cal P}_B$. We can stratify $\Sigma$ into nine parts each labelled with one of the three possibilities $\theta = 0,\theta = \pi/2,\theta \in (0,\pi/2)$ for $\theta$ and one of the three possibilities $\chi = 0,\chi = \pi/2, \chi \in (0,\pi/2)$ for $\chi$. Labelling these discrete parts of $\Sigma$ as $\Sigma^{(i)}$, and calling $Y^{(i)}$ the part of $Y$ over the base $\Sigma^{(i)}$, we wish to understand when
\begin{equation}
\label{Ycond}
\dim Y^{(i)} < \dim S_V + \dim S_W \; 
\end{equation}
starting from (\ref{E1}). We note that there is the following inequality:
\begin{equation}
\dim \Sigma^{(i)} \le \min(\dim {\cal P}^{(i)}_A,\dim {\cal P}^{(i)}_B) = \dim {\cal P}^{(i)}_A + \dim {\cal P}^{(i)}_B - \max(\dim {\cal P}^{(i)}_A,\dim {\cal P}^{(i)}_B) \; .
\label{E3}
\end{equation}
Furthermore, since for any $P \in {\cal P}_A$, we can write $P = V(I_a \otimes P_\theta)V^\dagger$, we have
\begin{equation}
\dim {\cal P}_A^i = \dim S_V - D_\theta^{(i)} + \dim S_\theta^{(i)} \; ,
\label{E4}
\end{equation}
where $\dim S_\theta^{(i)} = 0,1$ is the dimension of the set of $\theta$s corresponding to the $i$th stratum. Combining equations (\ref{E1}), (\ref{E2}), (\ref{E3}), and (\ref{E4}), we find that
\[
\dim Y^{i} \le [\dim S_V + \dim S_W] + [d_\theta^{(i)} + d_\chi^{(i)} + \dim S_\theta^{(i)} + \dim S_\chi^{(i)} - \max(\dim {\cal P}^{(i)}_A,\dim {\cal P}^{(i)}_B)]~.
\]
So the condition (\ref{Ycond}) will be satisfied provided that
\[
\max(\dim {\cal P}^{(i)}_A,\dim {\cal P}^{(i)}_B) > d_\theta^{(i)} + d_\chi^{(i)} + \dim S_\theta^{(i)} + \dim S_\chi^{(i)} \; ,
\]
which will hold if
\begin{equation}
{1 \over 2}(\dim {\cal P}^{(i)}_A+\dim {\cal P}^{(i)}_B) > d_\theta^{(i)} + d_\chi^{(i)} + \dim S_\theta^{(i)} + \dim S_\chi^{(i)} \; .
\label{final}
\end{equation}
Using the explicit results above, we have that
\[
{1 \over 2} \dim {\cal P}^{(i)}_A - d_\theta^{(i)} - \dim S_\theta^{(i)} = {1 \over 2} (\dim S_V - D_\theta^{(i)} - 2 d_\theta^{(i)} - \dim S_\theta^{(i)}) = \left\{ \begin{array}{cc} (r-1)(k-r)(a^2-2)~, & \qquad \theta = 0~, \cr
r(k-r-1)(a^2-2)~,& \qquad \theta = \pi/2~, \cr
(r(k-r)-1/2)(a^2-2) - 1/2~, & \qquad \text{otherwise}~. \end{array} \right.
\]
with similar results for the $B$ side. Since we have $a > 1$ and $0 < r < k$, the right side is non-negative in all cases and zero if and only if $(\theta = 0,r=1)$ or $(\theta = \pi/2, r= k-1)$. Looking at the projector \eqref{eq:Ptheta}, these correspond to $P {\cal H}=A$ and $P {\cal H} = A^{\perp}$ respectively. Taking into account the parallel results for ${\cal P}_B$, (\ref{final}) will hold in all cases unless we have both $P {\cal H} \in \{A, A^\perp\}$ and $P {\cal H} \in \{B, B^\perp\}$. 
 Since $A$  and $A^\perp$ are always fixed by ${\cal M}_A$, in the cases $A = B$ and $A=B^\perp$, $A$ will be fixed by both $\mathcal{M}_A$ and $\mathcal{M}_B$ and the algebra they generate, so this cannot be all of $B({\cal H})$ unless $A=B={\cal H}$. Apart from these special cases, every stratum of $Y$ has dimension strictly smaller than $\dim S_V + \dim S_W$, so the projection of $Y$ to $S_V \times S_W$ has measure zero. This completes the proof.
\end{proof}

Via Proposition \ref{Tphi}, an equivalent version of the proposition is the following:
\begin{corollary}
\label{Cunion}
    Let A and B be $a$ and $b$-dimensional subspaces of an $n$-dimensional Hilbert space $\mathcal{H}$, such that $n = k a = m b$ for integers $k,m$. Let $F_A$ be the set of von Neumann factor subalgebras $\mathcal{M}_A \subset B(\mathcal{H})$ with $\mathcal{M}_A \equiv M_a(\mathbb{C})$ and $\mathcal{M}_A A = A$. Let $F_B$ be the set of von Neumann factor subalgebras $\mathcal{M}_B \subset B(\mathcal{H})$ with $\mathcal{M}_B \equiv M_b(\mathbb{C})$ and $\mathcal{M}_B B = B$. Then away from a measure zero set in $F_A \times F_B$, $\mathcal{M}_A$ and $\mathcal{M}_B$ together generate $B(\mathcal{H})$ unless
    \begin{itemize}
        \item $\dim A = 1 $ and $ \dim B < \dim \mathcal{H}$~.
        \item $\dim B = 1 $ and $ \dim A < \dim \mathcal{H}$~.
        \item
        $A = B \ne {\cal H}$~.
        \item
        $A = B^\perp$~.
    \end{itemize}
\end{corollary}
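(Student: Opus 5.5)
The plan is to deduce Corollary \ref{Cunion} from Proposition \ref{Union1} by transporting the statement through the bijection of Proposition \ref{Tphi}. First choose isometries $T_A:\mathcal{H}_A\to\mathcal{H}$ and $T_B:\mathcal{H}_B\to\mathcal{H}$ with $\mathcal{H}_A\cong\mathbb{C}^a$, $\mathcal{H}_B\cong\mathbb{C}^b$ and images $T_A\mathcal{H}_A=A$, $T_B\mathcal{H}_B=B$; for instance, take $T_A$ to be the inclusion of $A$. The divisibility $n=ka=mb$ is precisely the hypothesis under which Proposition \ref{Tphi} applies to each code.

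By the equivalence of items 1 and 3 in Proposition \ref{Tphi}, the set $F_A$ of factors $\mathcal{M}_A\cong M_a(\mathbb{C})$ leaving $A$ invariant is in bijection with the unital operator maps $\varphi_A$ for $T_A$, via $\varphi_A\mapsto\varphi_A(B(\mathcal{H}_A))$. The condition $\mathcal{M}_A A=A$ is the same as invariance: the inclusion $\mathcal{M}_A A\subset A$ is invariance, and equality follows because $I\in\mathcal{M}_A$. Next, check that $F_A$ depends only on the subspace $A$ and not on the choice of $T_A$. Any other isometry with image $A$ has the form $T_Au$ with $u$ unitary, and replacing $\varphi_A$ by $\varphi_A(u\,\cdot\,u^\dagger)$ leaves the image algebra unchanged. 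The same argument applies to $B$.

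The remaining step is to match the measures. By Proposition \ref{Tphi}, both the Haar measure on unital operator maps and the parametrization used in the proof of Proposition \ref{Union1} come from $U(n-a)/U(k-1)$, the unitaries $V$ acting on $A^\perp$ modulo the commutant freedom. These therefore induce the same measure class on $F_A$, and likewise on $F_B$. The set of pairs $(\mathcal{M}_A,\mathcal{M}_B)$ that fail to generate $B(\mathcal{H})$ is the image of the measure-zero exceptional set of $(V,W)$ from Proposition \ref{Union1}, so it has measure zero in $F_A\times F_B$. The four exceptional cases translate directly: $\dim\mathcal{H}_A=\dim A$, and $T_A\mathcal{H}_A=A$.

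The only step that needs real care is this measure compatibility, since the corollary is stated for $F_A\times F_B$ while the proposition is stated for operator maps. A measure-zero statement survives because the quotient map $S_V\to F_A$ is a smooth submersion of homogeneous spaces. The preimage of a null set under such a map is null, and conversely the image of a saturated null set is null.
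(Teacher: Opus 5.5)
Your proposal is correct and takes the paper's route: the paper obtains the corollary in one line, as the reformulation of Proposition \ref{Union1} through the correspondence between items 1 and 3 of Proposition \ref{Tphi}. The proof of Proposition \ref{Union1} already parametrizes the algebras directly by $V\in S_V$ and $W\in S_W$, and your extra check that the saturated null set of bad pairs $(V,W)$ descends to a null set under $S_V\times S_W\to F_A\times F_B$ makes explicit the measure compatibility that the paper leaves implicit.
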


\subsubsection{The intersection of $\mathcal{M}_A$ and $\mathcal{M}_B$}

Next, we ask when the intersection of $\mathcal{M}_A$ and $\mathcal{M}_B$ is as small as possible:
\begin{proposition}
\label{prop:intersection}
    Suppose we have Hilbert spaces ${\cal H}_A$, ${\cal H}_B$, and ${\cal H}$ with isometries $T_A : {\cal H}_A \to {\cal H}$ and  $T_B : {\cal H}_B \to {\cal H}$. 
    In the space of possible extensions of the isometries to unital operator maps $\varphi_A$, $\varphi_B$ satisfying the properties of Proposition \ref{tree}, a Haar random choice will have the property that $\mathcal{M}_A = \varphi_A(B(\mathcal{H}_A))$ and $\mathcal{M}_B = \varphi_B(B(\mathcal{H}_B))$ intersect only on multiples of the identity unless one of following holds:
    \begin{itemize}
    \item $A = \mathcal{H}$ and $ \dim B  > 1$~.
    \item $B = \mathcal{H}$ and $ \dim A  > 1$~.
    \item $\dim A = \dim B = \dim \mathcal{H}/2 > 1$ with $A=B$ or $A=B^\perp$~.
    \end{itemize}
    In these cases, the intersection necessarily includes non-scalar operators. 
\end{proposition}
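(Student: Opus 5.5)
We first dispose of the exceptional cases. If $A=\mathcal{H}$, then $\mathcal{M}_A = B(\mathcal{H})$ and $\mathcal{M}_A\cap\mathcal{M}_B=\mathcal{M}_B$, which is non-scalar exactly when $\dim B>1$; the case $B=\mathcal{H}$ is symmetric. Now suppose $\dim A=\dim B=n/2>1$ with $A=B$ or $A=B^\perp$. Each factor of type $M_{n/2}$ has commutant $M_2$. The projector $P_A$ onto $A$ commutes with $\mathcal{M}_A$, since $A$ and $A^\perp$ are invariant. Similarly $P_B = P_A$ or $I-P_A$ commutes with $\mathcal{M}_B$. Hence $P_A\in\mathcal{M}_A'\cap\mathcal{M}_B'$, which is not quite what we want. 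Instead, use the fact that when $k=2$ the commutant $\mathcal{M}_A'\cong M_2$ is generated by $P_A$ together with a partial isometry $A\to A^\perp$. The complementary statement should be that $\mathcal{M}_A$ and $\mathcal{M}_B$ both lie in the algebra of operators commuting with $P_A$ up to the chosen identification of $A$ with $A^\perp$. I would verify this directly: writing $\mathcal{H}=A\oplus A^\perp\cong A\otimes\mathbb{C}^2$, we have $\mathcal{M}_A=\{X\oplus V X V^\dagger\}$ and $\mathcal{M}_B=\{Y\oplus W Y W^\dagger\}$ (with the roles of the blocks swapped if $A=B^\perp$). An element $X\oplus VXV^\dagger$ lies in $\mathcal{M}_B$ iff $VXV^\dagger = WXW^\dagger$, i.e. $X$ commutes with $W^\dagger V\in U(n/2)$. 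Since every unitary has a non-scalar commutant once $n/2>1$, the intersection is non-scalar. In the $A=B^\perp$ case the condition becomes $VXV^\dagger=Y$ and $X=WYW^\dagger$, i.e. $X$ commutes with $WV$, with the same conclusion.

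For the generic statement, I would follow the dimension-counting strategy of Proposition \ref{Union1}. The cases $\dim A=1$ or $\dim B=1$ are trivial, since then one algebra is already $\mathbb{C}I$. Otherwise, $\mathcal{M}_A\cap\mathcal{M}_B\neq\mathbb{C}I$ iff the intersection contains a nontrivial projector $Q$. Since $Q\in\mathcal{M}_A$, it has the form $V(Q_A\otimes I_k)V^\dagger$ with $Q_A$ a rank-$s$ projector on $A$; likewise $Q=W(Q_B\otimes I_m)W^\dagger$. The rank of $Q$ is therefore $sk=tm$. Form the incidence variety $Y=\{(V,W,Q)\}$ over $S_V\times S_W\cong U(n-a)\times U(n-b)$. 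The aim is to show $\dim Y<\dim S_V+\dim S_W$ by stratifying the base of projectors $Q$ according to their relative position with respect to $A$ and $B$. The analog of the standard form \eqref{eq:Ptheta} is now simpler: $Q$ is determined by the subspace $Q_A A\subset A$ and the image of that subspace in each of the $k-1$ copies of $A$ inside $A^\perp$. So $\mathcal{P}_A$ (the projectors lying in some $\mathcal{M}_A$) has dimension roughly $\dim \mathrm{Gr}(s,a)$ plus the freedom to choose an $s(k-1)$-dimensional subspace of $A^\perp$. The fibre over $Q$ is the set of $V$ mapping $Q_A A\otimes\mathbb{C}^{k-1}$ onto $QA^\perp$, a coset of $U(s(k-1))\times U((a-s)(k-1))$. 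Repeating the bounds \eqref{E1}–\eqref{final} with these dimensions, I expect the inequality to be strict except when $k=m=2$ and $Q$ can be aligned with both structures, i.e. $A=B$ or $A=B^\perp$. The cases $k=1$ or $m=1$ are already covered by the exceptions.

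The main obstacle is this last computation. The stratification of $\Sigma=\mathcal{P}_A\cap\mathcal{P}_B$ is more delicate here, because $Q$ need not commute with $P_A$ in a simple way relative to $B$. The crude bound $\dim\Sigma\le\min(\dim\mathcal{P}_A,\dim\mathcal{P}_B)$ might also fail to be strict enough when $k=2$ but $A\neq B,B^\perp$. In that case I would first try a direct argument: fix $V$ and show that for Haar-random $W$ the intersection is scalar. Conjugating $\mathcal{M}_B$ by a random unitary fixing $B$ moves it off the proper algebraic subset where it meets $\mathcal{M}_A$ nontrivially. It then suffices to exhibit a single pair $(V,W)$ with trivial intersection and invoke analyticity, since the condition is a real-algebraic non-generic condition on the connected group $S_V\times S_W$.
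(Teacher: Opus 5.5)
Your handling of the exceptional cases is correct and matches the paper. In the case $k=m=2$ with $A=B$ or $A=B^\perp$, you identify the intersection with the commutant of a single unitary ($W^\dagger V$ or $WV$) on $A$, which is non-scalar because $\dim A>1$. The paper does the same thing by taking a projector onto an eigenvector of $U_2^\dagger U_1$. The detour through $\mathcal{M}_A'\cap\mathcal{M}_B'$ is irrelevant and can be dropped.

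The generic statement, however, is not proved. You set up the right incidence variety. You also correctly identify $\mathcal{P}_A\cong\mathrm{Gr}(r,A)\times\mathrm{Gr}((k-1)r,A^\perp)$ and the fibre $F_A(P)$, a coset of $U((k-1)r)\times U((k-1)(a-r))$. Since this fibre dimension does not depend on $P$, no stratification of $\Sigma$ is needed. But you stop at ``I expect''. Carrying the count out gives $\dim\mathcal{P}_A=2r(a-r)\bigl((k-1)^2+1\bigr)$ and $\tfrac12\dim\mathcal{P}_A-\dim S_A=r(a-r)\bigl((k-1)^2-1\bigr)$. So the crude bound $\dim\Sigma\le\min(\dim\mathcal{P}_A,\dim\mathcal{P}_B)$ already suffices whenever $(k,m)\neq(2,2)$, including $k=2$, $m>2$, contrary to your worry. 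The one hard case is $k=m=2$, where the needed inequality becomes exactly $\dim\Sigma<4r(a-r)=\dim\mathcal{P}_A=\dim\mathcal{P}_B$, and the crude bound gives only $\le$.

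This is precisely where the hypothesis $A\neq B,B^\perp$ must be used, and your proposal has no mechanism by which it enters. The claim that strictness fails only when ``$Q$ can be aligned with both structures'' is asserted, not argued. Your fallback is sound as a reduction: the bad set is cut out by vanishing of minors, which are polynomial in $V,W,\bar V,\bar W$, on the connected group $S_V\times S_W$, so one good pair $(V,W)$ would suffice. But it merely relocates the whole difficulty into constructing such a pair for an arbitrary relative position of $A$ and $B$, which you do not do.

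The missing idea, used in the paper, is an irreducibility argument. Suppose $\dim\Sigma=\dim\mathcal{P}_A$. Then $\Sigma$ contains an open subset of $\mathcal{P}_A\cong\mathrm{Gr}(r,A)\times\mathrm{Gr}(r,A^\perp)$, which is an irreducible real algebraic variety. Every element of $\mathcal{P}_B$ commutes with the projector $\Pi_B$ onto $B$, so the polynomial identity $[P,\Pi_B]=0$ holds on that open set, and hence on all of $\mathcal{P}_A$. Varying the $A$-block projector with the $A^\perp$-block fixed shows that $\Pi_B$ commutes with all traceless Hermitian operators on $A$, and likewise on $A^\perp$. This forces $\Pi_B=\lambda I_A\oplus\mu I_{A^\perp}$, i.e.\ $B=A$ or $B=A^\perp$. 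Hence for $A\neq B,B^\perp$ one gets the strict inequality $\dim\Sigma<\dim\mathcal{P}_A$, and the image of $Y$ in $S_V\times S_W$ has measure zero. Without this step, or an explicit good pair $(V,W)$, the proposal does not establish the generic claim in the only case where the geometry of $A$ and $B$ matters.
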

Via Proposition \ref{Tphi}, the following is equivalent:
\begin{corollary}
\label{intersection}
    Let A and B be $a$ and $b$-dimensional subspaces of an $n$-dimensional Hilbert space $\mathcal{H}$, such that $n = k a = m b$ for integers $k,m$. Let $F_A$ be the set of von Neumann factor subalgebras $\mathcal{M}_A \equiv B(\mathcal{H})$ with $\mathcal{M}_A \equiv M_a(\mathbb{C})$ and $\mathcal{M}_A A = A$. Let $F_B$ be the set of von Neumann factor subalgebras $\mathcal{M}_B \subset B(\mathcal{H})$ with $\mathcal{M}_B \equiv M_b(\mathbb{C})$ and $\mathcal{M}_B B = B$. Then away from a measure zero set in $F_A \times F_B$,  $\mathcal{M}_A$ and $\mathcal{M}_B$ intersect only on multiples of the identity unless one of following holds:
    \begin{itemize}
    \item $A = \mathcal{H}$ and $ \dim B  > 1$~.
    \item $B = \mathcal{H}$ and $ \dim A  > 1$~.
    \item $\dim A = \dim B = \dim \mathcal{H}/2 > 1$ with $A=B$ or $A=B^\perp$~.
    \end{itemize}
    In these cases, the intersection necessarily includes non-scalar operators.
\end{corollary}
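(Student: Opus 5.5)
The corollary is a restatement of Proposition \ref{prop:intersection}: by Proposition \ref{Tphi} (the correspondence $1 \leftrightarrow 3$), the factors $\mathcal{M}_A \in F_A$ are exactly the images $\varphi_A(B(\mathcal{H}_A))$ of unital operator maps for an isometry with image $A$. Each of $F_A$ and $F_B$ is parametrized by $U(n-a)/U(k-1)$, respectively $U(n-b)/U(m-1)$, with the induced Haar measure. So it suffices to prove the proposition in the unitary parametrization used in the proof of Proposition \ref{Union1}:
\[
\mathcal{M}_A = V(B(A)\otimes I_k)V^\dagger, \qquad \mathcal{M}_B = W(B(B)\otimes I_m)W^\dagger,
\]
with $V \in S_V \cong U(n-a)$ and $W \in S_W \cong U(n-b)$.

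\textbf{Exceptional cases.} I would first check directly that the intersection is non-scalar for every choice.
\begin{itemize}
\item If $A=\mathcal{H}$, then $\mathcal{M}_A=B(\mathcal{H})\supset \mathcal{M}_B$, and $\mathcal{M}_B \cong M_b(\mathbb{C})$ is non-scalar when $b>1$.
\item If $A=B$ with $k=m=2$, write $\mathcal{H}=A\oplus A^\perp$. Then $\mathcal{M}_A=\{X\oplus vXv^\dagger\}$ and $\mathcal{M}_B=\{X\oplus wXw^\dagger\}$ for unitaries $v,w:A\to A^\perp$. The intersection is $\{X\oplus vXv^\dagger : [X,v^\dagger w]=0\}$. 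This contains the spectral projections of the unitary $v^\dagger w$, or everything if $v^\dagger w$ is scalar, so it is non-scalar when $a>1$.
\item If $A=B^\perp$, elements of the intersection satisfy $X=w(vXv^\dagger)w^\dagger$. So $X$ lies in the commutant of $wv$, which is again non-scalar.
\end{itemize}

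\textbf{Generic case.} Rather than redo the stratified dimension count, I would use an algebraicity argument. The condition $\dim(\mathcal{M}_A\cap\mathcal{M}_B)>1$ is equivalent to a rank drop of the real-linear map $(X,Y)\mapsto V(X\otimes I_k)V^\dagger - W(Y\otimes I_m)W^\dagger$ on $B(A)\times B(B)$. Its kernel always contains the scalars, so the condition says the rank is below $a^2+b^2-1$. This is the vanishing of finitely many minors, which are polynomials in the real and imaginary parts of the entries of $V$ and $W$. Hence the bad set is a real-algebraic subset of the connected real-analytic manifold $U(n-a)\times U(n-b)$, and it is either everything or of Haar measure zero. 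It therefore suffices to exhibit, for each non-exceptional configuration $(A,B)$, a single pair $(V,W)$ with $\mathcal{M}_A\cap\mathcal{M}_B=\mathbb{C}I$.

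Here $(A,B)$ is fixed up to unitary equivalence by $a,b,n$ and the principal angles. As a fallback, I would transplant the Proposition \ref{Union1} computation. Any nonscalar intersection contains a projection of the form $P=V(P_A\otimes I_k)V^\dagger$ with $P_A$ a projection on $A$; such a $P$ leaves $A$ invariant, and $P\mathcal{H}\supset P_A A$. I would stratify by the position of $P_AA$ inside $A$ and of $P_BB$ inside $B$, and compare fiber and base dimensions exactly as in (\ref{E1})--(\ref{final}).

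\textbf{Main obstacle.} I expect the hard step to be the existence of a good example, or equivalently the strata inequalities, in the degenerate regimes. These are $k=2$ or $m=2$, where the complement has the same size as the code space, $a$ or $b$ equal to $1$, and configurations with large $A\cap B$ or $A\cap B^\perp$.

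My first attempt would be inductive. Choose $V$ generically. Then $\mathcal{M}_A$ restricted to $A^\perp$ is a generic copy of $B(A)\otimes I_{k-1}$, and the intersection condition reduces to asking that no nonscalar $X\in B(A)$ have $V(X\otimes I_k)V^\dagger$ preserving $B$ and acting on $B^\perp$ compatibly with $W$. Freedom in $W$ on $B^\perp$ should kill this condition whenever $B^\perp$ is not forced to equal $A$ or $A^\perp$ with $k=m=2$. The delicate point is verifying that the rigidity appears only in exactly the listed cases.
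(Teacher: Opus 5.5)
\textbf{Verdict: genuine gap.} The generic, non-exceptional case, which is the real content of the corollary, is never proved.

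\textbf{What is correct.} Your treatment of the exceptional cases is correct and essentially matches the paper's. For $A=B$ or $A=B^\perp$ with $k=m=2$, the intersection contains the commutant in $B(A)$ of the unitary $v^\dagger w$ (respectively $wv$), which is non-scalar for $a>1$. Your algebraicity reduction is also sound. A non-scalar intersection means all minors of size $a^2+b^2-1$ of the complex-linear map $(X,Y)\mapsto V(X\otimes I_k)V^\dagger-W(Y\otimes I_m)W^\dagger$ vanish. (If you view the map as real-linear, the threshold is $2(a^2+b^2-1)$.) These minors are polynomials on the connected manifold $S_V\times S_W$, and the bad set is saturated under the $U(k-1)\times U(m-1)$ redundancy. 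So the bad set is either everything or null in $F_A\times F_B$.

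\textbf{The gap.} The reduction only relocates the difficulty. In the range $1<a,b<n$ you must show the bad set is \emph{not} everything. You neither exhibit a witness pair $(V,W)$ with $\mathcal{M}_A\cap\mathcal{M}_B=\mathbb{C}I$ nor carry out a dimension count. Your final paragraph states the needed claim (``freedom in $W$ on $B^\perp$ should kill this condition'') rather than proving it, and you leave its verification open. As written, the proposal proves only the easy half of the statement.

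\textbf{What the paper does, and what your proof needs.}
\begin{itemize}
\item A non-scalar intersection contains a projection $P$ of rank $R=kr=ms$ with $0<R<n$.
\item Every $V\in S_V$ acts trivially on $A$, so $P=V(P_A\otimes I_k)V^\dagger$ determines $P_A$ as its $A$-block. Hence all fibers $F_A(P)$ have the same dimension $D_A=(k-1)^2\bigl(r^2+(a-r)^2\bigr)$; no stratification of the kind you suggest is needed.
\item This gives $\dim\mathcal{P}_A=\dim S_V-D_A+2r(a-r)$.
\item Bounding $\dim Y\le\dim\Sigma+D_A+D_B$, with $\dim\Sigma\le\min(\dim\mathcal{P}_A,\dim\mathcal{P}_B)$, yields $\dim Y<\dim S_V+\dim S_W$ whenever $r(a-r)\bigl((k-1)^2-1\bigr)+s(b-s)\bigl((m-1)^2-1\bigr)>0$, that is, unless $k=m=2$.
\item So $k=2$ or $m=2$ alone is not an obstacle, and $a=1$ or $b=1$ is trivially scalar.
\item In the borderline case $k=m=2$ the count reduces to $\dim\Sigma<4r(a-r)$, and an extra idea is required. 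Here $\mathcal{P}_A\cong\operatorname{Gr}(r,A)\times\operatorname{Gr}(r,A^\perp)$ is irreducible. If $\Sigma$ had full dimension, the polynomial identity $[P,\Pi_B]=0$ would hold on all of $\mathcal{P}_A$. That forces $\Pi_B=\lambda I_A\oplus\mu I_{A^\perp}$, hence $B\in\{A,A^\perp\}$.
\end{itemize}
Your proof still needs either this argument, or explicit witnesses covering every configuration of principal angles in the $k=m=2$ case.
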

The proof follows the same strategy as for Proposition \ref{Union1}:
\begin{proof}
For $a=1$,  $\mathcal{M}_A$ and thus also $\mathcal{M}_A \cap \mathcal{M}_B$ includes only multiples of the identity. The same is true when $b=1$. For $a > 1$ and $b=n$, $\mathcal{M}_B$ is all of $B(\mathcal{H})$ so the intersection is $\mathcal{M}_A$ which includes more than multiples of the identity. The same conclusion holds for $b > 1$ and $a=n$. In the remainder, we assume $1 < a,b < n$ so that $k,m>1$.

Following the proof of the previous proposition, we have that
\[
\mathcal M_A=V(B(A)\otimes I_k)V^\dagger~,
\]
for $V:A\otimes \mathbb C^k\to \mathcal H$ and $
V(|\alpha\rangle \otimes e_1)=|\alpha\rangle$
and 
\[
\mathcal M_B=W(B(B)\otimes I_m)W^\dagger~,
\]
for $W:B\otimes \mathbb C^m\to \mathcal H$ and $
W(|\beta\rangle\otimes e_1)=|\beta\rangle$. As in the preceding proposition, $\mathcal{M}_A \cap \mathcal{M}_B$ is non-scalar if and only if it contains some non-trivial projection $P_R$ of rank $0 < R < n$. Since $P_R \in \mathcal{M}_A$, we can write it as $V (P_A \otimes I_k) V^\dagger$ for some projection $P_A$, and since $P_R \in \mathcal{M}_B$, we can write it as $W (P_B \otimes I_m) W^\dagger$ for some projection $P_B$. This shows that the rank $R$ is divisible by both $k$ and $m$. 

Let $C_V$ be the space of operators of the form $V(M_a(\mathbb{C}) \otimes I_k)V^\dagger$ for $V$ satisfying the conditions above and $C_W$ be the space of operators of the form $W (M_b(\mathbb{C}) \otimes I_m) W^\dagger$ for $W$ as above. The set $S_V$ of allowed $V$s is isomorphic to $U(n-a)$ since $V$ is a unitary on an $n$-dimensional space with a fixed action on an $a$-dimensional subspace. We thus have real dimension $\dim S_V = (n-a)^2$. Similarly, the set $S_W$ of allowed $W$s is isomorphic to $U(n-b)$ and has real dimension $\dim S_W = (n-b)^2$. In the space of operators on $\mathcal{H}$, let $\mathcal{P}_A$ be the set of all projectors of rank $R$ that appear in some $C_V$, $\mathcal{P}_B$ be the set of all projectors of rank $R$ that appear in some $C_W$, and $\Sigma = \mathcal{P}_A \cap \mathcal{P}_B$ be the intersection.

Consider the space of triples 
\[
Y = \{(V,W,P_R) \,|\, P_R \in C_V \cap C_W \} \; .
\]
This is a subset of $S_V \times S_W \times \Sigma$. We will show that (apart from the exceptional cases) the image $Y \to S_V \times S_W$ is a measure zero set and thus that generic $\mathcal{M}_A$ and $\mathcal{M}_B$ intersect only on multiples of the identity. We can think of $Y$ as having a base $\Sigma$ with fibers $F_A(P) = \{ V \in S_V \,|\, P \in C_V\}$ and $F_B(P) = \{ W \in S_W \, |\, P \in C_W\}$ in $S_V$ and $S_W$ respectively. We will see that the fiber dimensions are the same for any $P$, so this time we can simply write
\begin{equation}
\dim Y \le \dim \Sigma +  \dim F_A +  \dim F_B~.
\label{F1}
\end{equation}
We will show that the right hand side is smaller than $\dim (S_V \times S_W)$ apart from the exceptional cases.

We recall that the space $\mathcal{P}_A$ describes operators of the form
\[
P = V (P_A \otimes I_k) V^\dagger
\]
for some projector $P_A$ on $\mathbb{C}^a$ and some $V \in S_V$. We will show that this space 
is foliated by $S_V-$orbits labeled by an arbitrary rank $r=R/k$ projector in $M_a(\mathbb{C})$. Consider any specific projector $P = V_0 (P_A \otimes I_k) V_0^\dagger$. We will use $V_0 : A \otimes \mathbb{C}^k \to  \mathcal{H}$ to identify $A \otimes \mathbb{C}^k \cong \mathcal{H}$. In this case, using a basis $|\alpha_n\rangle \otimes e_i$ ordered as $|\alpha_1\rangle \otimes e_1, |\alpha_2\rangle \otimes e_1, \dots$, the matrix representation for $P$ takes a block form 
\[
P = \left(\begin{array}{ccc}P_A & 0 & 0 \cr 0 & \ddots & 0 \cr 0 & 0 & P_A \end{array}\right) \; ,
\]
and the allowed $V \in S_V$ take the form 
\[
V = \left(\begin{array}{cc}I_a & 0 \cr 0 & V_\perp \end{array}\right)~, \qquad V_\perp \in U(n-a) \; .
\]
The action $P \to V P V^\dagger$ does not affect the upper left block, but can bring the full matrix to the standard form
\[
P_0(P_A) = \left(\begin{array}{ccc}P_A & 0 & 0 \cr 0 & I_{(k-1)r} & 0 \cr 0 & 0 & 0_{(k-1)(a-r)} \end{array}\right) \; .
\]

We now compute the fiber dimension $\dim F_A(P)$ and show that it is independent of $P$. Let $P = V_0 (P_A \otimes I_k) V_0^\dagger$. The fiber $F_A(P)$ is the set of $V$s for which there is a $\hat{P}_A$ satisfying $V(\hat{P}_A \otimes I_k)V^\dagger = P$. But we must have $\hat{P}_A = P_A$, since $V$ does not change the upper-left block. In this case, we have $P = V_0 (P_A \otimes I_k) V_0^\dagger = V (P_A \otimes I_k) V^\dagger$ so the allowed $V$s are those that satisfy $(V V_0^\dagger) P (V V_0^\dagger)^\dagger = P$. Thus $F_A(P)$ is the subset of $S_V$ where $V V_0^\dagger$ is in the stabilizer of $P$. The dimension $\dim F_A(P)$ is therefore the dimension of the stabilizer of $P$ in $S_V$. This is the same as the dimension $D_A$ of the stabilizer of the standard form $P_0(P_A)$, which is
\[
\dim F_A = D_A = (k-1)^2 r^2 + (k-1)^2 (a-r)^2 \; .
\]
A similar computation gives
\[
\dim F_B = D_B = (m-1)^2 s^2 + (m-1)^2 (b-s)^2 \; ,
\]
where $s=R/m$.

For our dimension comparison, we also need to bound $\dim \Sigma$. We have
\begin{equation}
\dim \Sigma \le \min(\dim {\cal P}_A,\dim {\cal P}_B) = \dim {\cal P}_A + \dim {\cal P}_B - \max(\dim {\cal P}_A,\dim {\cal P}_B) \; .
\label{F3}
\end{equation}
Further, since for any $P \in {\cal P}_A$, we can write $P = V(P_0(P_A))V^\dagger$, we have
\begin{equation}
\dim {\cal P}_A = \dim S_V - D_A  + \dim S_A \; ,
\label{F4}
\end{equation}
where $S_A$ is the set of $P_A$s. Since $P_A$ is an arbitrary rank $r$ projector in $M_a(\mathbb{C})$, and these form a single $U(a)$ orbit, the dimension is
\[
\dim S_A = a^2 - \dim \text{Stab}(P_A) = a^2 - r^2 - (a-r)^2 = 2r (a-r) \; .
\]
Combining equations (\ref{F1}), (\ref{F3}), and (\ref{F4}), we find that
\[
\dim Y \le [\dim S_V + \dim S_W] + [\dim S_A + \dim S_B - \max(\dim {\cal P}_A,\dim {\cal P}_B)]~.
\]
Hence, the dimension of $Y$ will be smaller than the dimension of $S_V \times S_W$ provided that
\[
\max(\dim {\cal P}_A,\dim {\cal P}_B) > \dim S_A + \dim S_B \; .
\]
This will hold if
\begin{equation}
{1 \over 2}(\dim {\cal P}_A+\dim {\cal P}_B) > \dim S_A + \dim S_B \; .
\label{final2}
\end{equation}
Using the explicit results above, we have that
\[
{1 \over 2} \dim {\cal P}_A - \dim S_A = {1 \over 2} (\dim S_V - D_A - \dim S_A) = r(a-r)((k-1)^2 - 1) \; ,
\]
with similar results for the $B$ side. Since we have $k>1$ and $0 < r < a$, the right side here is non-negative, vanishing only for $k=2$, i.e. for $\dim A = \dim \mathcal{H}/2$. A parallel argument shows that ${1 \over 2} \dim {\cal P}_B - \dim S_B$ is positive except in the case $m=2$ i.e. for $\dim B = \dim \mathcal{H}/2$.
Thus, (\ref{final2}) can fail to be true only if 
$\dim A = \dim B = \dim \mathcal{H}/2$. In all other cases, the image of $Y$ in $S_V \times S_W$ is measure zero so typical choices for $\mathcal{M}_A$ and $\mathcal{M}_B$ will intersect only on multiples of the identity.

We now analyze the special case separately. In this special case, we have $a = b = n/2$ and so $k=m=2$. The constraints $R=kr=ms$ on the rank of the projector give $r=s=R/2$. Using the results above, we have
\[
\dim S_V = \dim S_W = a^2~, \qquad \dim S_A = \dim S_B = 2r(a-r)~, \qquad D_A = D_B = r^2 + (a-r)^2 \; .
\]
From (\ref{F4}), we have 
\[
\dim \mathcal{P}_A = \dim \mathcal{P}_B = 4r(a-r) \; .
\]
From (\ref{F1}) we can say that the dimension of $Y$ will be less than that of $S_V \times S_W$ if
\[
\dim \Sigma + D_A + D_B < \dim S_V + \dim S_W \quad \Longleftrightarrow \quad \dim \Sigma < 4r(a-r) \; .
\]
Since $\Sigma = \mathcal{P}_A \cap \mathcal{P}_B$ which both have dimension $4r(a-r)$, this inequality will hold unless $\mathcal{P}_A$ and $\mathcal{P}_B$ have the same dimension as their intersection. We will show that this happens if and only if $B = A$ or $B= A^\perp$.

From our discussion above about the explicit form of $P$, we have in this case that in a basis where the first $a$ elements span the subspace $A$, the operators in $\mathcal{P}_A$ take the form $\text{diag}(P_1,P_2)$,
where the projectors $P_1$ and $P_2$ each have rank $0<r<a$. Thus, the space  $\mathcal P_A$ is equivalent to a product $\operatorname{Gr}(r,A)\times \operatorname{Gr}(r,A^\perp)$ of Grassmanians, where $\operatorname{Gr}(r,A)$ is the space of all $r$-dimensional linear subspaces of $A$ (in this case, the spaces $P_1 A$ and $P_2 A$). As a result, we can say that $\mathcal P_A$ is an irreducible real algebraic manifold (i.e. it can be specified by a set of polynomial equations and cannot be divided into multiple non-trivial parts that are each specified by polynomial equations). Since the intersection $\mathcal P_A \cap \mathcal P_B$ is the same dimension as $\mathcal P_A$, it includes some open set of $\mathcal P_A$. Every element of $\mathcal P_B$ commutes with $\Pi_B$, so the polynomial equation $[P,\Pi_B]=0$ on the elements of $P$ holds on $\mathcal P_A \cap \mathcal P_B$. But polynomial equations that hold on an open subset of an irreducible algebraic manifold must hold everywhere on that manifold, so $[P,\Pi_B]=0$ for all $P \in \mathcal{P}_A$.

Thus, for all rank $r$ projectors $P_1$ on $A$ and $P_2$ on $A^\perp$, we have
\[
[\Pi_B, P_1 \oplus P_2]=0~.
\]
Varying $P_1$ while holding $P_2$ fixed shows that $\Pi_B$ commutes
with all differences of rank $r$ projectors on $A$. Such differences span the full space of traceless Hermitian operators on $A$. Similarly, $\Pi_B$ commutes with all
traceless self-adjoint operators on $A^\perp$. This implies that in our basis adapted to $\mathcal H=A\oplus A^\perp$,
\[
\Pi_B=
\begin{pmatrix}
\lambda I_A & 0\\
0 & \mu I_{A^\perp}
\end{pmatrix}.
\]
for some constants $\lambda$ and $\mu$.
Since $\Pi_B$ is a projection and $\dim B=\dim A$, either
$(\lambda,\mu)=(1,0)$ or $(\lambda, \mu) = (0,1)$. Hence $B=A$ or $B=A^\perp$, as we wished to show. In all other cases, we must have $\dim Y < \dim S_V \times S_W$ so $\mathcal{M}_A \cap \mathcal{M}_B = \mathbb{C} I$ except for a measure zero set in $S_V \times S_W$.

Finally, we analyze the exceptional cases $A=B$ and $A=B^\perp$. In either of these cases, using the same basis as above, any algebra $\mathcal{M}_A$ will correspond to the set of operators of the form $\text{diag}(\mathcal{O}, U_1  {\cal O} U_1^\dagger)$ for some unitary $U_1 \in U(a)$, while any algebra $\mathcal{M}_B$ will correspond to the set of operators of the form $\text{diag}(\mathcal{O}, U_2  {\cal O} U_2^\dagger)$
for some unitary $U_2 \in U(a)$. Taking ${\cal O}$ to be a projector to one of the eigenvectors of $U_2^\dagger U_1$, we have that $U_2^\dagger U_1 {\cal O} = {\cal O} U_2^\dagger U_1$, so $\text{diag}(\mathcal{O}, U_1  {\cal O} U_1^\dagger)=\text{diag}(\mathcal{O}, U_2  {\cal O} U_2^\dagger)$
is a projector of rank 2. Thus, we always have a non-scalar operator common to $\mathcal{M}_A$ and $\mathcal{M}_B$.
\end{proof}

\subsubsection{Commutativity of $\mathcal{M}_A$ and $\mathcal{M}_B$}

The next proposition characterizes when the algebras $\mathcal{M}_A$ and $\mathcal{M}_B$ can be chosen to commute with one another.
\begin{proposition}
\label{commuting1}
Let $\mathcal H$ be an $n$-dimensional Hilbert space, and let
$A,B\subset \mathcal H$ have dimensions $a$ and $b$. Let $P_A$ and $P_B$ be the orthogonal
projections onto $A$ and $B$. There exist commuting von Neumann factor subalgebras
\[
  \mathcal{M}_A \equiv M_a(\mathbb{C})~, \quad \mathcal{M}_A A = A~, \qquad \mathcal{M}_B \equiv M_b(\mathbb{C})~, \quad \mathcal{M}_B B = B~,
\]
if and only if one of the following holds 
\begin{itemize}
    \item 
     $n = abr, r=1$ and $\text{rank}  (P_AP_B)=1$ and $\|P_AP_B\|_1=1$~, \item 
     $n = abr, r=2,3,\dots$ and $\text{rank}  (P_AP_B) \le r$ and $\|P_AP_B\|_1 \le 1$~,
\end{itemize}
where $\|\cdot\|_1$ denotes the trace norm.
\end{proposition}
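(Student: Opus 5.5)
The plan is to reduce everything to the structure theory of a pair of commuting type-I factors, and then to read the conditions on $P_AP_B$ off a single $a\times b$ matrix.

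\emph{Structure step.} If $\mathcal{M}_A\cong M_a(\mathbb{C})$ and $\mathcal{M}_B\cong M_b(\mathbb{C})$ are commuting unital factors in $B(\mathcal H)$, they generate a unital copy of $M_a(\mathbb{C})\otimes M_b(\mathbb{C})\cong M_{ab}(\mathbb{C})$. By the representation fact used in Proposition \ref{factor}, $ab\mid n$, so $n=abr$. Moreover there is a unitary identification
\[
\mathcal H\cong \mathbb{C}^a\otimes\mathbb{C}^b\otimes\mathbb{C}^r
\]
with $\mathcal{M}_A=B(\mathbb{C}^a)\otimes I\otimes I$ and $\mathcal{M}_B=I\otimes B(\mathbb{C}^b)\otimes I$. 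An $\mathcal{M}_A$-invariant subspace has the form $\mathbb{C}^a\otimes K$ with $K\subset \mathbb{C}^b\otimes\mathbb{C}^r$, since it is invariant under the factor and hence reduces it. So $\dim A=a$ forces $A=\mathbb{C}^a\otimes v$ for a unit vector $v\in\mathbb{C}^b\otimes\mathbb{C}^r$. Similarly $B=\mathbb{C}^b\otimes w$ for a unit vector $w\in\mathbb{C}^a\otimes\mathbb{C}^r$, with the factors suitably ordered. Conversely, any such $v,w$ give subspaces $A,B$ that admit commuting factors, namely the ones just described. Thus the proposition reduces to the following question: which relative positions of $(A,B)$ arise from unit vectors $v,w$ in this way?

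\emph{Computing the invariant.} By the principal-angle discussion, the unitary class of the pair $(A,B)$ in a space of fixed dimension $n$ is determined by the singular values of $T_A^\dagger T_B$, which has the same nonzero singular values as $P_AP_B$. Regard $v$ as a $b\times r$ matrix and $w$ as an $a\times r$ matrix. A direct computation of $(\langle\alpha|\otimes\langle v|)(|w\rangle\otimes|\beta\rangle)$ gives, up to complex conjugation conventions,
\[
T_A^\dagger T_B = w\,v^\dagger ,
\]
which is an $a\times b$ matrix. Necessity then follows at once:
\[
\operatorname{rank}(wv^\dagger)\le r, \qquad \|wv^\dagger\|_1\le \|w\|_2\,\|v\|_2=1 ,
\]
using Hölder's inequality for Schatten norms. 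When $r=1$, $w$ and $v$ are nonzero column vectors, so $wv^\dagger$ has rank exactly $1$ and trace norm $\|w\|\|v\|=1$.

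\emph{Sufficiency.} Given $n=abr$ and a pair $(A,B)$ whose $T_A^\dagger T_B$ has singular values $s_1,\dots,s_q$ with $q\le r$ and $\sum s_i\le 1$, we must produce unit matrices $w$ (of size $a\times r$) and $v$ (of size $b\times r$) with $wv^\dagger$ having exactly those singular values. Unitary equivalence of pairs with the same principal data then transports the resulting commuting factors to the given $A,B$. In the case $r=1$ we take $w,v$ to be the singular vectors of the rank-one matrix, and the trace-norm condition gives normalization. For $r\ge2$, the first attempt is the diagonal choice $w_{ii}=v_{ii}=\sqrt{s_i}$ for $i\le q$. The deficit $1-\sum s_i$ then has to be absorbed so that it contributes nothing to $wv^\dagger$. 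One way is to place it in a column $k>q$ where $w$ is nonzero and $v$ vanishes; the corresponding deficit of $v$ goes in another column where $w$ vanishes, or in the same column, provided the resulting cross term $w_{\cdot k}v_{\cdot k}^\dagger$ can be cancelled or avoided.

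\emph{Main obstacle.} The hard case is $q=r$ with $\sum s_i<1$, especially when $a$ or $b$ equals $q$, because then there are no spare rows or columns. There I would drop the diagonal ansatz and take $w=U\Sigma^{1/2}X$ and $v=U'\Sigma^{1/2}Y$ with more general factors. One option is to spread the excess norm along directions in a common column that are orthogonal in the product, using $r\ge2$ to pick two columns with relative phases so that the extra contributions to $wv^\dagger$ cancel. Another is to show by a continuity (intermediate value) argument in the column weights that every point in the region $\{\text{rank}\le r,\ \|\cdot\|_1\le1\}$ is attained. I would also check degenerate cases separately, such as $A\perp B$ where $wv^\dagger=0$, which requires orthogonal supports in $\mathbb{C}^r$ and is possible only when $r\ge2$.
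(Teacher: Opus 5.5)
Your structure step and your necessity argument are correct and essentially the paper's: your $v$ and $w$ are its $e_1$ and $f$, and your Hölder bound $\|wv^\dagger\|_1\le\|w\|_2\|v\|_2$ is a cleaner version of its Cauchy--Schwarz step. Your reduction of sufficiency is also the paper's: find Frobenius-unit $w$ ($a\times r$) and $v$ ($b\times r$) with $wv^\dagger$ having the prescribed nonzero singular values $s_1,\dots,s_q$, then transport via principal angles.

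\textbf{The gap: sufficiency for $r\ge2$ is never constructed.}
\begin{itemize}
\item Your balanced ansatz $w_{ii}=v_{ii}=\sqrt{s_i}$ leaves a deficit $1-S$ in each factor, where $S=\sum_i s_i$.
\item If columns $1,\dots,q$ are kept fixed, the spare columns must satisfy $\sum_{k>q}w_{\cdot k}v_{\cdot k}^\dagger=0$ while carrying norm $1-S>0$ in both factors. With a single spare column this forces $w_{\cdot r}=0$ or $v_{\cdot r}=0$, which is impossible.
\item So padding needs $r-q\ge2$. It fails whenever $q\in\{r-1,r\}$ and $S<1$, which already happens for $r=2$, $q=1$, $s_1<1$. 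This is broader than the case $q=r$ you flag, and spare rows ($a,b>q$) are irrelevant.
\item Your phase-cancellation option cannot work when $q=r$. If $wv^\dagger=w_0v_0^\dagger$ has rank $r$, then $w$ and $v$ have full column rank, so $w=w_0G$ and $v=v_0G^{-\dagger}$ for some $G\in GL(r,\mathbb{C})$. There is no room for extra norm that cancels in the product; the only freedom is rebalancing by $G$.
\item You mention a continuity argument but never specify what to vary, so the proposal stops at a list of options.
\end{itemize}

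\textbf{The missing idea, which the paper uses: unbalance the diagonal instead of padding it.}
\begin{itemize}
\item Set $w_{ii}=u_i$ and $v_{ii}=s_i/u_i$ for $i\le q$, with all other entries zero. Then $wv^\dagger=\mathrm{diag}(s_i)$, $\|w\|_2^2=\sum_i u_i^2$ and $\|v\|_2^2=\sum_i s_i^2/u_i^2$.
\item For $q\ge2$, consider $g(u)=\sum_i s_i^2/u_i^2$ on the connected set $\{u_i>0,\ \sum_i u_i^2=1\}$. It equals $S^2\le1$ at $u_i=\sqrt{s_i/S}$ and diverges as $u_1\to0$. The intermediate value theorem then gives $g(u)=1$. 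This is exactly where $\|P_AP_B\|_1\le1$ is used.
\item For $q\le1$, use one spare column instead: $w_{11}=s_1$, $w_{12}=\sqrt{1-s_1^2}$, $v_{11}=1$, all else zero. This is the only place $r\ge2$ is needed, and it is why $r=1$ forces $s_1=1$.
\end{itemize}
With these choices, your principal-angle transport argument completes the proof.
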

The necessary and sufficient conditions in the proposition can be equivalently formulated in terms of the principal angles between $A$ and $B$, since $\text{rank}(P_A P_B)$ is equal to the number of principal angles with nonzero $\cos \theta$ and 
\[
||P_A P_B||_1 = \sum_i \cos \theta_i \; .
\]
\begin{proof}
Suppose that algebras with the stated properties exist. Then $k = n/a$ and $m = n/b$ must be integers since the algebras are factors. We can define an orthonormal basis $a_n \otimes e_i$ of $\mathcal{H}$ with $a_n \otimes e_1$ a basis for $A$ such that $\mathcal{M}_A = M_a(\mathbb{C}) \otimes I_k$ and $\mathcal{M}_A' = I_a \otimes M_k(\mathbb{C})$. Since $\mathcal{M}_B \subset \mathcal{M}_A'$, the operators in $\mathcal{M}_B$ must be of the form $I_a \otimes \hat{\mathcal{B}}$ for some subalgebra $\hat{\mathcal{B}}$ of $M_k(\mathbb{C})$. The algebra $\hat{\mathcal{B}}$ must be a subfactor of $M_k(\mathbb{C})$, since any non-scalar operator ${\cal O}$ in its center would give a non-scalar operator $I_a \otimes {\cal O}$ in the center of $\mathcal{M}_B$. Thus, $k = b r$ for some positive integer $r$ and so $n = a b r$. We can choose a basis $c_i \otimes b_m$ for $\mathbb{C}^k$ so that $a_n \otimes c_i \otimes b_m$ is a basis for $\cal{H}$ in which $\mathcal{M}_A = M_a(\mathbb{C}) \otimes I_r \otimes I_b$ and $\mathcal{M}_B = I_a \otimes I_r \otimes M_b(\mathbb{C})$. We can write
\[
e_1 = \sum_{i,m}U_{i m} c_i  \otimes b_m~,  \; 
\]
for some $U_{im}$ with $\sum_{i,m} |U_{im}|^2 = 1$. The projector $P_B$ has rank $b$ and commutes with $\mathcal{M}_B$, so we can write it as
\[
P_B = \hat{P}_B \otimes I_b~,
\]
for some projector $\hat{P}_B$ of rank 1. Letting $f$ be a normalized eigenvector of $\hat{P}_B$ with eigenvalue 1, we can write
\[
f = \sum_{n,i}V_{n i} a_n \otimes c_i~, 
\]
for some $V_{ni}$ with $\sum_{i,n} |V_{ni}|^2 = 1$. 

We now have an orthonormal bases
\[
\alpha_n \equiv a_n \otimes e_1 = \sum_{i,m} U_{i m} a_n \otimes  c_i  \otimes b_m~, \qquad \beta_m \equiv f\otimes b_m = \sum_{n,i} V_{n i} a_n \otimes c_i \otimes b_m~,
\]
for $A$ and $B$ respectively. Let $C$ be the $a\times b$ overlap matrix between the bases:
\[
  C_{nm} \equiv \langle \alpha_n|\beta_m\rangle = (V U^{\dagger})_{nm}~.
\]
Viewing $U^{\dagger}$ as a linear map from $\mathbb{C}^m $ to $\mathbb{C}^r$, its rank is at most $r$, so we have:
\[
  \operatorname{rank}(C)\le r~.
\]
To compute the trace norm $\|C\|_1 =\tr \sqrt{C^{\dagger}C}$, we define $u = \sqrt{U U^\dagger}$ and $v = \sqrt{V^\dagger V}$, and by the Cauchy-Schwarz inequality it follows that:
\[
  \|C\|_1 = \sum_{ij} u_{ij} v_{ij} \le \sqrt{ \sum_{ij} |u_{ij}|^2} \sqrt{ \sum_{ij} |v_{ij}|^2}
  =
  1~.
\]
Since 
\[
P_A P_B = \sum_{n,m} |\alpha_n \rangle \langle \alpha_n | \beta_m \rangle \langle \beta_m| \; ,
\]
the singular values of $C$ (i.e. the eigenvalues of $\sqrt{C^{\dagger}C}$) are precisely the singular values of
$P_AP_B$. Hence, we have shown that
\[
  \operatorname{rank}(P_AP_B)\le r
  \qquad\text{and}\qquad
  \|P_AP_B\|_1\le 1~.
\]
If $r=1$, then $U$ is a single row and $V$ is a single column, each of norm
one. Hence $VU^*$ has rank one and trace norm one. Thus, in the case $r=1$ we get the stronger condition
\[
  \operatorname{rank}(P_AP_B)=1\qquad\text{and}\qquad
  \|P_AP_B\|_1=1~.
\]
This proves necessity.

We now prove sufficiency. Suppose that one of the two conditions holds (where $r=1$ corresponds to the first condition, and $r \ge 2$ corresponds to the second). Let $q = \operatorname{rank}(P_A P_B)$ and let $\sigma_i,\, i=1, \ldots, q$ be the non-zero singular values of $P_A P_B$, so that $\sum_i \sigma_i = ||P_A P_B||_1$. These are precisely the cosines of the principal angles between $A$ and $B$, so we can choose a basis $\alpha_i$ for $A$ and $\beta_i$ for $B$ so that
\[
\langle \alpha_i | \beta_i \rangle = \sigma_i~, \qquad i = 1,\dots , q~,
\]
and all other inner products vanish. Now, suppose that there exist vectors $w, u \in \mathbb{R}^r$ with non-negative entries such that 
\begin{equation}
\label{properties}
w_i u_i = \sigma_i~, \qquad \sum_{i=1}^r w_i^2 = \sum_{i=1}^r u_i^2 = 1~,
\end{equation}
where we define $\sigma_i = 0$ for $i > q$.  We introduce a reference Hilbert space $\hat{\mathcal{H}} = \mathbb{C}^a \otimes \mathbb{C}^r \otimes \mathbb{C}^b$ endowed with the orthonormal basis $\hat{a}_n \otimes \hat{c}_i \otimes \hat{b}_m$ and define
\[
\hat{\alpha}_n \equiv  \hat{a}_n \otimes \left( \sum_{i} u_i \hat{c}_i \otimes \hat{b}_i \right) \equiv \hat{a}_n \otimes \hat{e}_1~,\qquad 
\hat{\beta}_m \equiv  \left( \sum_{i} w_{i} \hat{a}_i \otimes \hat{c}_i \right) \otimes \hat{b}_m \equiv \hat{f} \otimes \hat{b}_m~.
\]
The properties (\ref{properties}) ensure that the inner products between the $\hat{\alpha}$ and $\hat{\beta}$ vectors match those of the $\alpha$ and $\beta$ vectors. Therefore, there exists a unitary $W$ mapping $\hat{\mathcal{H}}$ to $\mathcal{H}$ with the property that $W \hat{\alpha}_n = \alpha_n$ and $W \hat{\beta}_n = \beta_n$. It is straightforward to check that the algebras $\mathcal{M}_A = W(\mathcal{M}_a(\mathbb{C}) \otimes I_r \otimes I_b) W^\dagger$ and $\mathcal{M}_B = W(I_a \otimes I_r \otimes \mathcal{M}_B(\mathbb{C})) W^\dagger$ then have the desired properties.

It remains to show that there are vectors $w, u \in \mathbb{R}^r$ such that (\ref{properties}) holds:
\begin{itemize}
    \item If $r=1$, we have $q=1$ and $\sigma_1 = 1$, so we can take $w_1 = u_1 = 1$.
    \item For $r \ge 2$, $q=0,1$, we can take $u_1 = \sigma_1, u_2 = \sqrt{1- \sigma_1^2}, u_{n >2} = 0$ and $w_1 = 1, w_{n > 1}=0$.
    \item 
    For $r \ge 2, q \ge 2$ we take $w_i = \sigma_i/u_i$ for $u_i \ne 0$ and $w_i = 0$ for $u_i = 0$. We need to show that there are simultaneous solutions to 
\[
\sum_{i=1}^q u_i^2 = 1~, \qquad \sum_{i=1}^q {\sigma_i^2 \over u_i^2} = 1~,
\]
with $u_i > 0$. The first equation defines the boundary of the unit sphere. We will show that the surface defined by the second equation has a continuous branch with points both inside/on and outside this sphere, and therefore must intersect the sphere. Let $S = \sum_{i=1}^q \sigma_i$ and consider $u_i = u_i^0 \equiv \sqrt{S \sigma_i}$. Then we have
\[
\sum_{i=1}^q {\sigma_i^2 \over u_i^2} = \sum_{i=1}^q {\sigma_i^2 \over S \sigma_i} = 1~, \qquad \sum_{i=1}^q u_i^2 = S \sum_{i=1}^q \sigma_i = S^2 \le 1~,
\]
so this set of $u_i$s corresponds to a point inside/on the unit sphere satisfying the second equation. Now, we note that it is possible to vary $u_1$ and $u_2$ continuously so that the second equation remains satisfied and $\vec{u}=(u_1,\ldots, u_q)$ moves outside the unit sphere. Indeed, for positive $u_1,u_2$ the curve
\[
{\sigma_1^2 \over u_1^2} + {\sigma_2^2 \over u_2^2} = {\sigma_1^2 \over (u_1^0)^2} + {\sigma_2^2 \over (u_2^0)^2}~,
\]
is continuous and has points with arbitrarily large $u_1$.
\end{itemize}
\end{proof}

\section{Factor codes for diamond networks}

In this section, we consider quantum code networks whose corresponding Hasse diagram has a diamond shape as in Figure \ref{fig:Hasse} (right). In this case, we will see that even if the required divisibility conditions on Hilbert space dimensions hold, it is not always possible to promote the code network to a factor code network.

As a very simple explicit counterexample, consider Hilbert spaces of dimension 2,4,4, and 8 with isometries defined as shown in Figure \ref{fig:CE2448}.

\begin{figure}
    \centering
    \includegraphics[width=0.4\linewidth]{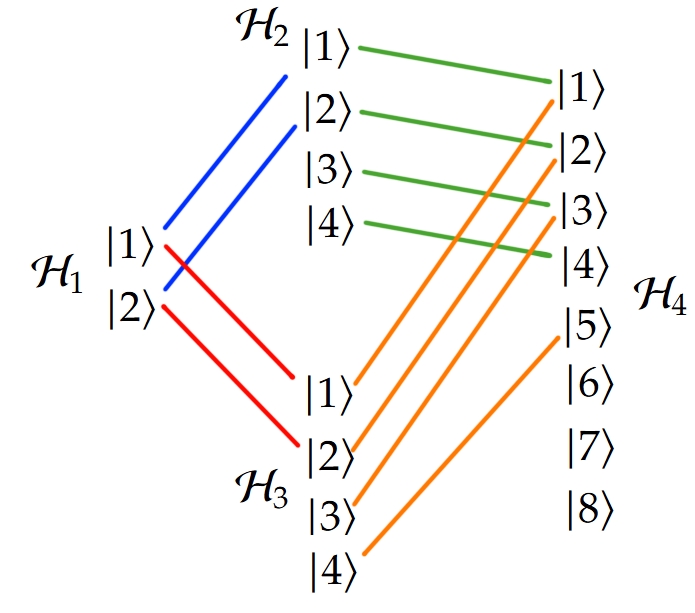}
    \caption{A set of isometries for Hilbert spaces of dimensions 2,4,4, and 8. These isometries cannot be upgraded to a set of operator maps satisfying the properties in Proposition 1.}
    \label{fig:CE2448}
\end{figure}

Regardless of the choices for our auxiliary isometries, an operator in $B(\mathcal{H}_1)$ maps via $\varphi_{42} \circ \varphi_{21}$ to an operator that leaves invariant the subspace spanned by $|3 \rangle$ and $|4 \rangle$. On the other hand, we can see that $\varphi_{43} \circ \varphi_{31} ({\cal O})$ leaves invariant the subspace spanned by $|3 \rangle$ and $|5 \rangle$, but for an appropriate choice of ${\cal O}$ we can realize any operator acting on this subspace. In particular, there will be operators that take $|3 \rangle$ into linear combinations with $|5 \rangle$ and thus out of the subspace spanned by $|3 \rangle$ and $|4 \rangle$. This shows that $\varphi_{43} \circ \varphi_{31}(B(\mathcal{H}_1))$ cannot be the same as $\varphi_{42} \circ \varphi_{21}(B(\mathcal{H}_1))$, since we have already seen that the latter leaves the 3-4 subspace invariant. Thus, the construction will not be possible for a diamond-shaped Hasse diagram with certain isometries. In general, the success of the construction depends on how the subspaces $T_{42} \mathcal{H}_2$ and $T_{43} \mathcal{H}_3$ sit inside $\mathcal{H}_4$ (see Figure \ref{fig:DiamondMaps}).

\begin{figure}
    \centering
\includegraphics[width=0.5\linewidth]{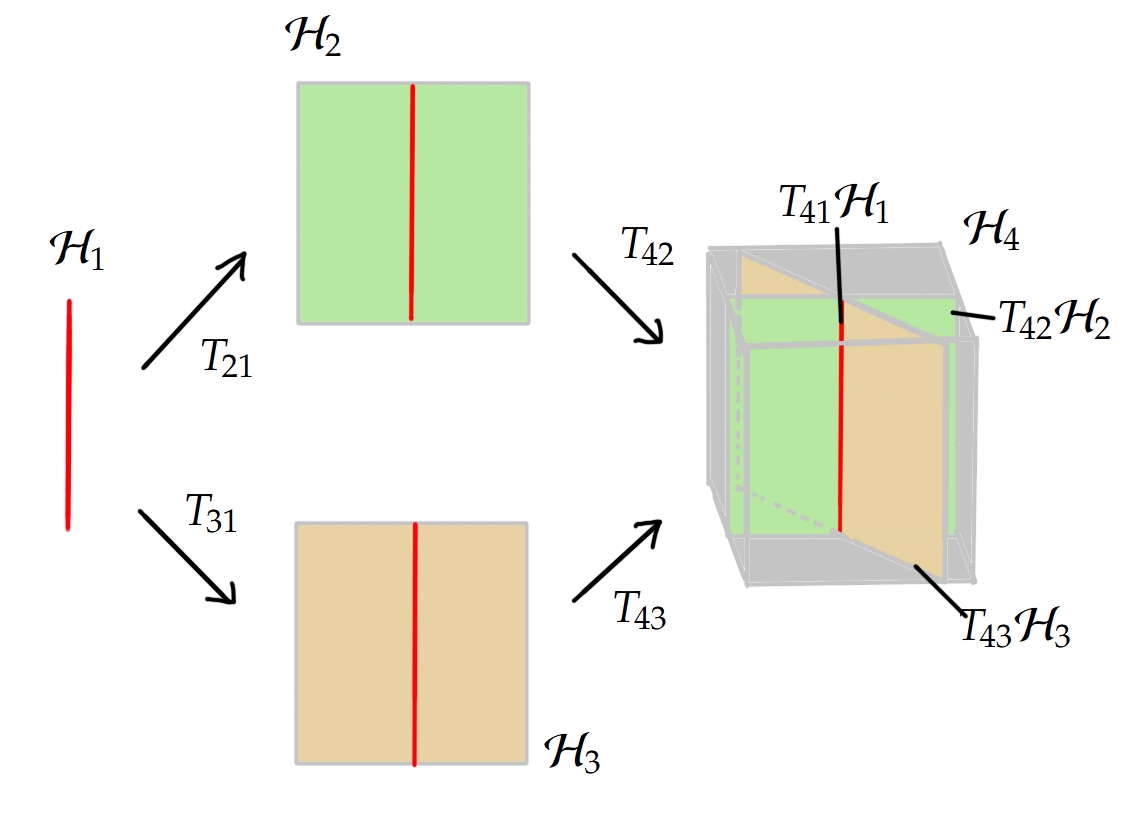}
    \caption{Hilbert spaces and isometries for a diamond-shaped Hasse diagram. The isometries can be promoted to a set of operator maps satisfying the properties of Proposition 1 if and only if the principal angles between the two subspaces  $T_{42} \mathcal{H}_2$ and $T_{43} \mathcal{H}_3$ have degeneracies that are multiples of $\dim(\mathcal{H}_1)$.}
    \label{fig:DiamondMaps}
\end{figure}

We will now show that the construction of a consistent set of  operator maps is possible if and only if the principal angles between $T_{42} \mathcal{H}_2$ and $T_{43} \mathcal{H}_3$ each come with multiplicity that is a multiple of $\dim(\mathcal{H}_1)$. 

\begin{proposition}{\it Consider a code network associated with a set $S = {1,2,3,4}$ with partial ordering $1 < 2,3 < 4$. It is possible to define unital operator maps promoting this to a factor code network if and only if the principal angles between $T_{42} \mathcal{H}_2$ and $T_{43} \mathcal{H}_3$ have multiplicities which are multiples of $\dim(\mathcal{H}_1)$.}
\label{diamond}
\end{proposition}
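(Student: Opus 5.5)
The plan is to translate everything into statements about the subspaces $A\equiv T_{42}\mathcal{H}_2$, $B\equiv T_{43}\mathcal{H}_3$ and $C\equiv T_{41}\mathcal{H}_1\subset A\cap B$ of $\mathcal{H}_4$, and then use Proposition \ref{Tphi} (item 3) to move between operator maps and factor subalgebras leaving a code subspace invariant. Throughout, $d_i=\dim\mathcal{H}_i$. I assume the divisibility $d_1\mid d_2,d_3$ and $d_2,d_3\mid d_4$, since without it no unital maps exist by Proposition \ref{factor}. The central observation is the following. In a factor code network, $\mathcal{M}_1\equiv\mathcal{M}_1^{(4)}$ is a factor $\cong M_{d_1}(\mathbb{C})$ contained in both $\mathcal{M}_2^{(4)}$ and $\mathcal{M}_3^{(4)}$. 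The first of these leaves $A$ invariant and the second leaves $B$ invariant. Hence $P_A,P_B\in\mathcal{M}_1'$.

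\textbf{Necessity.} Since $\mathcal{M}_1'$ is a von Neumann algebra containing $P_A$ and $P_B$, it contains every spectral projection of $P_AP_BP_A$ and of $P_BP_AP_B$.
\begin{itemize}
\item It also contains the projections onto $A\cap B$, $A\cap B^\perp$ and $A^\perp\cap B$. These are limits of polynomials, for example the projection onto $A\cap B^\perp$ is the spectral projection of $P_A-P_AP_BP_A$ at eigenvalue $1$.
\item By Proposition \ref{factor}, $\mathcal{M}_1'=W(I_{d_1}\otimes M_{k}(\mathbb{C}))W^\dagger$. Every projection in this algebra therefore has rank divisible by $d_1$.
\item For $0<\theta<\pi/2$, the multiplicity of $\theta$ is the rank of the spectral projection of $P_AP_BP_A$ at $\cos^2\theta$.
\item The multiplicity of $\theta=0$ is $\dim(A\cap B)$.
\item The multiplicity of $\theta=\pi/2$ is $\min(\dim A\cap B^\perp,\dim A^\perp\cap B)$.
\end{itemize}
All of these are ranks of projections in $\mathcal{M}_1'$, so each is divisible by $d_1$.

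\textbf{Sufficiency.} I would use the canonical decomposition
\[
\mathcal{H}_4=(A\cap B)\oplus\bigoplus_\theta \mathcal{C}_\theta\oplus(A\cap B^\perp)\oplus(A^\perp\cap B)\oplus(A^\perp\cap B^\perp),
\]
where $\mathcal{C}_\theta$ collects the two-dimensional blocks with angle $\theta\in(0,\pi/2)$.

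The first step is to show that every summand has dimension divisible by $d_1$.
\begin{itemize}
\item $A\cap B$ and each $\mathcal{C}_\theta$ are covered directly by the hypothesis.
\item For $A\cap B^\perp$, note $\dim(A\cap B^\perp)=d_2-\dim(A\cap B)-\#\{\theta\in(0,\pi/2)\}$, and all three terms on the right are divisible by $d_1$. The same argument handles $A^\perp\cap B$.
\item The last summand is then divisible because $d_1\mid d_4$.
\end{itemize}

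The next step is to pick a unitary $\mathcal{H}_4\cong\mathbb{C}^{d_1}\otimes K$ adapted to the decomposition. On $A\cap B$, I first choose $\mathcal{H}_1\otimes e$ to be carried to $C$ via $T_{41}$, and then extend. On $\mathcal{C}_\theta$, I pair up the principal vectors in groups of $d_1$, with the same $\theta$ on each copy. The result is $A=\mathbb{C}^{d_1}\otimes A'$, $B=\mathbb{C}^{d_1}\otimes B'$, and $T_{41}\psi=\psi\otimes e$ with $e\in A'\cap B'$.

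Now set $\mathcal{M}_1=M_{d_1}(\mathbb{C})\otimes I_K$. For $\mathcal{M}_2$, take $M_{d_1}(\mathbb{C})\otimes N_2$, where $N_2\subset B(K)$ is any factor $\cong B(A')$ leaving $A'$ invariant. Such an $N_2$ exists by Proposition \ref{Tphi}, because $\dim A'=d_2/d_1$ divides $\dim K=d_4/d_1$. Then $\mathcal{M}_2\cong M_{d_2}$, it leaves $A$ invariant, and it contains $\mathcal{M}_1$. I define $\mathcal{M}_3$ in the same way. Proposition \ref{Tphi} then gives the maps $\varphi_{41},\varphi_{42},\varphi_{43}$.

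For the lower maps, I set $\varphi_{21}=\kappa_{42}\circ\varphi_{41}$ with $\kappa_{42}(X)=T_{42}^\dagger XT_{42}$. By the proof of Proposition \ref{Tphi}, $\kappa_{42}$ is a $*$-isomorphism $\mathcal{M}_2\to B(\mathcal{H}_2)$. Restricted to $\mathcal{M}_1\subset\mathcal{M}_2$, it is an injective unital $*$-homomorphism. One checks $\varphi_{21}(O)T_{21}=T_{42}^\dagger\varphi_{41}(O)T_{41}=T_{21}O$, so $\varphi_{21}$ is a valid unital operator map. Moreover $\varphi_{42}\circ\varphi_{21}=\varphi_{41}$, because $\varphi_{42}=\kappa_{42}^{-1}$ on $\mathcal{M}_2$. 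The map $\varphi_{31}$ and the other path are handled identically, which gives all compositions.

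\textbf{Main obstacle.} I expect the hardest step to be constructing the simultaneous factorization $\mathbb{C}^{d_1}\otimes K$ that makes $A$, $B$ and the specific embedding $T_{41}$ all product-form at once. The decomposition into groups of $d_1$ copies must be done block by block: matching principal vectors within each angle eigenspace, and treating $A\cap B$ so that $C$ is exactly $\mathbb{C}^{d_1}\otimes e$. I would verify it carefully via the inner-product-matching argument used in the sufficiency half of Proposition \ref{commuting1}.
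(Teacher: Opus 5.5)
Your proposal is correct and follows the paper's proof. Necessity comes from $P_2,P_3$ lying in the commutant of the factor $\mathcal{M}_1^{(4)}$, all of whose projections have rank divisible by $\dim\mathcal{H}_1$. Sufficiency comes from a common identification $\mathcal{H}_4\cong\mathcal{H}_1\otimes\mathcal{K}$ in which $T_{42}\mathcal{H}_2$ and $T_{43}\mathcal{H}_3$ are of product form and $T_{41}\psi=\psi\otimes e_1$. The paper merely asserts that identification, while you justify it block by block, correctly using $d_1\mid d_2, d_3$ to handle $A\cap B^\perp$ and $A^\perp\cap B$.

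The only difference is in how the maps are realized. The paper writes explicit auxiliary isometries via $\mathcal{K}\cong\mathcal{K}_2\otimes\bar{\mathcal{K}}_2$, whereas you choose factors $M_{d_1}(\mathbb{C})\otimes N_{2,3}$ through Proposition~\ref{Tphi} and set $\varphi_{21}=\kappa_{42}\circ\varphi_{41}$, which works equally well.
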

\begin{proof}
    We first show that the condition is necessary. Suppose that the various auxiliary isometries have been chosen such that
    \[
    \varphi_{ji}(\mathcal{O}) = \sum_a T^{(a)}_{ji}{\cal O} (T^{(a)}_{ji})^\dagger \; .
    \]
    Then using $(T^{(a)}_{ij})^\dagger T^{(b)}_{ij} = \delta_{ab}$, it follows that $P_2 = T_{42} T_{42}^\dagger$ commutes with $\varphi_{42} \circ \varphi_{21}({\cal O}) =\varphi_{41}({\cal O})$ for any ${\cal O} \in B(\mathcal{H}_1)$. By the same argument, $P_3 = T_{43} T_{43}^\dagger$ also commutes with $\varphi_{41}({\cal O})$ for any ${\cal O}$. Since the algebra $\mathcal{M}_1^{(4)} = \varphi_{41}(B(\mathcal{H}_1))$ is a factor, we must have a decomposition $\mathcal{H}_4 \equiv \mathcal{H}_1 \otimes I$ such that $\mathcal{M}_1^{(4)}$ corresponds to operators of the form ${\cal O} \otimes I$. The operators $P_2$ and $P_3$ are in the commutant, so are of the form $P_i = I \otimes \hat{P}_i$. It follows that all the eigenvalues of $P_2 P_3 P_2 = I \otimes \hat{P}_2 \hat{P}_3 \hat{P}_2$ (and thus the principal angles between $T_{42} \mathcal{H}_2$ and $T_{43} \mathcal{H}_3$) have multiplicities that are multiples of $\dim({\cal H}_1)$.

    Next, we show sufficiency. If all principal angles have degeneracies that are multiples of $\dim({\cal H}_1)$, we have an identification 
    \[
    \mathcal{H}_4 \equiv \mathcal{H}_1 \otimes \mathcal{K}~,
    \]
    where $T_{41} \mathcal{H}_1 \equiv \mathcal{H}_1 \otimes e_1$, $T_{42} \mathcal{H}_2 \equiv \mathcal{H}_1 \otimes \mathcal{K}_2$ and $T_{43} \mathcal{H}_3 \equiv \mathcal{H}_1 \otimes \mathcal{K}_3$ for some subspaces $\mathcal{K}_2,\mathcal{K}_3 \subset \mathcal{K}$ and $e_1 \in \mathcal{K}_2 \cap \mathcal{K}_3$. 
For the path $1 \rightarrow 2 \rightarrow 4$, we can define auxiliary isometries $T_{21}^{(a)}$ by noting $\mathcal{H}_2 = \mathcal{H}_1 \otimes \mathcal{K}_2$ and 
\[
T_{21}^{(a)} |v\rangle = |v\rangle \otimes|a \rangle \; 
\]
for some orthonormal basis $\{|a \rangle\}$ of $\mathcal{K}_2$. Next, since $\dim(\mathcal{K}_2)$ divides $\dim(\mathcal{K})$, we can identify $\mathcal{K} = \mathcal{K}_2 \otimes \bar{\mathcal{K}}_2$. Choosing an orthonormal basis $|i \rangle$ for $\bar{\mathcal{K}}_2$, we can take
\[
T_{42}^{(i)} |w\rangle =|w \rangle \otimes |i \rangle \; .
\]
Then the full operator map defined along the $1 \to 2 \to 4$ path gives
\[
\varphi_{42} \circ \varphi_{21} ({\cal O}) = \sum_{a,i} T_{42}^{(i)} T_{21}^{(a)} {\cal O} (T_{21}^{(a)})^\dagger (T_{42}^{(i)})^\dagger =\sum_{a,i}  {\cal O} \otimes (|a \rangle \otimes |i \rangle) (\langle i | \otimes \langle a |) = {\cal O} \otimes I_\mathcal{K} \; .
\]
We can define isometries along the $1 \to 3 \to 4$ path in a similar way, giving
\[
\varphi_{43} \circ \varphi_{31} ({\cal O})  ={\cal O} \otimes I_\mathcal{K} = \varphi_{42} \circ \varphi_{21} ({\cal O}) \; ,
\]
so the compatibility condition is satisfied.
\end{proof}

\subsection{When does $\mathcal{M}_2 \cap \mathcal{M}_3$ equal $\mathcal{M}_1$?}

We have seen in Proposition \ref{prop:intersection} that when we have two code subspaces in the same Hilbert space, generic choices for the extension to a factor code will typically have the pair of associated subalgebras intersect only on multiples of the identity except in some exceptional cases.

However, in cases covered by Proposition \ref{diamond} above, any factor code network associated with a diamond-shaped graph gives a set of subalgebras $\mathcal{M}^{(4)}_1 \subset \mathcal{M}^{(4)}_2,\mathcal{M}^{(4)}_3 \subset \mathcal{M}^{(4)}_4$. Here, we will see that the typical
intersection between $\mathcal{M}^{(4)}_2$ and $\mathcal{M}^{(4)}_3$ is not only multiples of the identity but instead the common subalgebra $\mathcal{M}^{(4)}_1$.

We have
\begin{proposition}
\label{intersection1}
    Suppose we have $\mathcal{H}_i$ and isometries $T_{ji}$ satisfying the conditions of Proposition \ref{diamond}. Then by an appropriate choice of isometries, we can ensure that $\mathcal{M}^{(4)}_1 = \mathcal{M}^{(4)}_2 \cap \mathcal{M}^{(4)}_3$ unless one of following holds:
    \begin{itemize}
    \item $\dim \mathcal{H}_1 < \dim \mathcal{H}_2 \le \dim  \mathcal{H}_3 = \dim \mathcal{H}_4$~.
    \item $\dim \mathcal{H}_1 < \dim\mathcal{H}_3 \le \dim \mathcal{H}_2 = \dim \mathcal{H}_4$~.
    \item $\dim \mathcal{H}_1 < \dim \mathcal{H}_2 = \dim \mathcal{H}_3 = \dim \mathcal{H}_4/2$ with $T_{42} \mathcal{H}_2=T_{43} \mathcal{H}_3$ or $T_{42} \mathcal{H}_2=T_{43} \mathcal{H}_3^\perp$~.
    \end{itemize}
    Furthermore, in all of these cases, generic choices of auxiliary isometries lead to the desired intersection property.
\end{proposition}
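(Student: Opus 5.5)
The plan is to factor out the common subalgebra $\mathcal{M}^{(4)}_1$ and reduce the statement to Proposition~\ref{prop:intersection} (equivalently Corollary~\ref{intersection}) for a duplex code living in the ``relative'' Hilbert space. Write $d_i=\dim\mathcal{H}_i$.

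\textbf{Step 1: structure of an arbitrary factor code network on the diamond.} I will show that every such network has a product form. By Proposition~\ref{factor}, $\mathcal{M}^{(4)}_1$ is a factor, so $\mathcal{H}_4\cong \mathcal{H}_1\otimes\mathcal{K}$ with $\mathcal{M}^{(4)}_1=B(\mathcal{H}_1)\otimes I_{\mathcal{K}}$ and $T_{41}\mathcal{H}_1=\mathcal{H}_1\otimes e_1$.
\begin{itemize}
\item Since $\mathcal{M}^{(4)}_1\subset\mathcal{M}^{(4)}_2$ and both are factors, the standard tensor splitting of a subfactor inclusion gives $\mathcal{M}^{(4)}_2=B(\mathcal{H}_1)\otimes\mathcal{N}_2$, where $\mathcal{N}_2=(\mathcal{M}^{(4)}_1)'\cap\mathcal{M}^{(4)}_2$.
\item $\mathcal{N}_2$ is a factor of $B(\mathcal{K})$ isomorphic to $M_{d_2/d_1}(\mathbb{C})$, because any nontrivial central element of $\mathcal{N}_2$ would be central in $\mathcal{M}^{(4)}_2$.
\item As in the necessity part of Proposition~\ref{diamond}, $P_2=T_{42}T_{42}^\dagger$ commutes with $\mathcal{M}^{(4)}_1$, so $T_{42}\mathcal{H}_2=\mathcal{H}_1\otimes\mathcal{K}_2$ for some $\mathcal{K}_2\subset\mathcal{K}$.
\item $\mathcal{M}^{(4)}_2$ leaves this subspace invariant, so $\mathcal{N}_2\mathcal{K}_2=\mathcal{K}_2$.
\end{itemize}
The same holds for $\mathcal{N}_3$ and $\mathcal{K}_3$. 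Conversely, the construction in the sufficiency proof of Proposition~\ref{diamond}, with arbitrary admissible factors $\mathcal{N}_2,\mathcal{N}_3$ in place of the specific tensor choices, realizes every pair $(\mathcal{N}_2,\mathcal{N}_3)$ in $F_{\mathcal{K}_2}\times F_{\mathcal{K}_3}$ (notation of Corollary~\ref{intersection}), using Proposition~\ref{Tphi} for each edge $2\to4$ and $3\to4$. The maps $\varphi_{21},\varphi_{31}$ are then determined by restricting to the code subspaces.

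\textbf{Step 2: reduction of the intersection.} With this form,
\[
\mathcal{M}^{(4)}_2\cap\mathcal{M}^{(4)}_3=B(\mathcal{H}_1)\otimes(\mathcal{N}_2\cap\mathcal{N}_3).
\]
To see this, expand an element in a matrix-unit basis of $B(\mathcal{H}_1)$; the coefficients lie in both $\mathcal{N}_2$ and $\mathcal{N}_3$. Hence $\mathcal{M}_1^{(4)}=\mathcal{M}^{(4)}_2\cap\mathcal{M}^{(4)}_3$ if and only if $\mathcal{N}_2\cap\mathcal{N}_3=\mathbb{C}I_{\mathcal{K}}$. Corollary~\ref{intersection}, applied to $\mathcal{K}_2,\mathcal{K}_3\subset\mathcal{K}$ with $\dim\mathcal{K}_i=d_i/d_1$, then gives both the existence statement and the genericity statement, away from its exceptional cases.

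\textbf{Step 3: translating the exceptional cases.} Each exceptional case of Corollary~\ref{intersection} becomes one in the statement:
\begin{itemize}
\item $\mathcal{K}_2=\mathcal{K}$ with $\dim\mathcal{K}_3>1$ becomes $d_2=d_4$ and $d_1<d_3$. Since $d_3\le d_4$, this is the second bullet.
\item Symmetrically, $\mathcal{K}_3=\mathcal{K}$ with $\dim\mathcal{K}_2>1$ gives the first bullet.
\item $\dim\mathcal{K}_2=\dim\mathcal{K}_3=\dim\mathcal{K}/2>1$ with $\mathcal{K}_2=\mathcal{K}_3$ or $\mathcal{K}_2=\mathcal{K}_3^\perp$ becomes $d_1<d_2=d_3=d_4/2$ with $T_{42}\mathcal{H}_2=T_{43}\mathcal{H}_3$ or its orthogonal complement. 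This uses $\mathcal{H}_1\otimes\mathcal{K}_3^\perp=(T_{43}\mathcal{H}_3)^\perp$.
\end{itemize}
Since $e_1\in\mathcal{K}_2\cap\mathcal{K}_3$, the orthogonal alternative cannot actually occur; it is harmless to list it.

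\textbf{Main obstacle: the meaning of ``generic''.} The Haar measure is defined on the auxiliary isometries of the network. I need the set of networks with $\mathcal{N}_2\cap\mathcal{N}_3\ne\mathbb{C}I$ to be null there, given that it is null in $F_{\mathcal{K}_2}\times F_{\mathcal{K}_3}$. I would argue as follows:
\begin{itemize}
\item Fix the identification $\mathcal{H}_4\cong\mathcal{H}_1\otimes\mathcal{K}$ and the map $\varphi_{41}$.
\item The remaining data are smooth parametrizations of $(\mathcal{N}_2,\mathcal{N}_3)$ by $U(\cdot)/U(\cdot)$ quotients, via Proposition~\ref{Tphi} applied to $\mathcal{K}_2\subset\mathcal{K}$ and $\mathcal{K}_3\subset\mathcal{K}$.
\item The bad set is a real-algebraic, and hence closed lower-dimensional, subset. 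The bad condition is invariant under the residual freedom $U(d_4/d_1-1)$ that changes $\varphi_{41}$. So, by Fubini over that fiber, it stays measure zero.
\end{itemize}
The delicate point is checking that the pushforward of Haar measure onto the compatible networks is absolutely continuous with respect to the product measure on $F_{\mathcal{K}_2}\times F_{\mathcal{K}_3}$. If this fails, I would fall back on the dimension count of Proposition~\ref{prop:intersection}, run directly in $\mathcal{K}$.
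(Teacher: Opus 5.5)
Your proposal is correct and follows the paper's proof: identify $\mathcal{H}_4\cong\mathcal{H}_1\otimes\mathcal{K}$ via $\varphi_{41}$, write $\mathcal{M}^{(4)}_{2,3}=B(\mathcal{H}_1)\otimes\mathcal{N}_{2,3}$ with $\mathcal{N}_{2,3}$ factors on $\mathcal{K}$ preserving $\mathcal{K}_{2,3}$, reduce to $\mathcal{N}_2\cap\mathcal{N}_3=\mathbb{C}I$, and apply Corollary~\ref{intersection} in $\mathcal{K}$. You also supply details the paper leaves implicit: that every compatible network has this form, that every pair $(\mathcal{N}_2,\mathcal{N}_3)$ is realized, the coefficientwise intersection identity, that the orthogonal alternative cannot occur since $e_1\in\mathcal{K}_2\cap\mathcal{K}_3$, and the measure question, which the paper does not address; the one slip is that $U(d_4/d_1-1)$ is the freedom that leaves $\varphi_{41}$ fixed, so your Fubini step should run over the allowed choices of $\varphi_{41}$ themselves, each of whose fibres is a copy of $F_{\mathcal{K}_2}\times F_{\mathcal{K}_3}$.
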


\begin{proof}
    As we saw in the proof of Proposition \ref{diamond}, we can identify $\mathcal{H}_4 = \mathcal{H}_1 \otimes \mathcal{K}$ with $T_{42} \mathcal{H}_2 = \mathcal{H}_1 \otimes A$, $T_{43} \mathcal{H}_3 = \mathcal{H}_1 \otimes B$ and $T_{41} \mathcal{H}_1 = \mathcal{H}_1 \otimes \mathbb{C} e_1$ where $A$ and $B$ are subspaces of $\mathcal{K}$ whose dimension divides $\dim \mathcal{K}$ and $e_1 \in A \cap B$. With this decomposition, $\mathcal{M}^{(4)}_1$ is the algebra $B(\mathcal{H}_1) \otimes \mathcal{K}$, and $\mathcal{M}^{(4)}_2$ and $\mathcal{M}^{(4)}_3$ are algebras of operators $B(\mathcal{H}_1) \otimes \hat{\mathcal{M}}_{2,3}$ where 
    \[
    \hat{\mathcal{M}}_{2} = \{\oplus_{i=1}^k u_i {\cal O} u_i^\dagger \,|\, {\cal O} \in M_a(\mathbb{C}), u_i \in U(\dim A)\}
    \]
    for some decomposition
    $\mathcal{K} = A_1 \oplus \cdots \oplus A_k$ with $A_1 = A$ and 
     \[
    \hat{\mathcal{M}}_{3} = \{\oplus_{i=1}^k u_i {\cal O} u_i^\dagger \,| \, {\cal O} \in M_b(\mathbb{C}), u_i \in U(\dim B)\}
    \]
    for some decomposition with $B_1 = B$. The algebra $\hat{\mathcal{M}}_{2}$ is isomorphic to $M_{a}(\mathbb{C})$ and satisfies $\hat{\mathcal{M}}_{2} A = A$ while the algebra $\hat{\mathcal{M}}_{3}$ is isomorphic to $M_{b}(\mathbb{C})$ and satisfies $\hat{\mathcal{M}}_{2} B = B$. 
   The property $\mathcal{M}^{(4)}_1 = \mathcal{M}^{(4)}_2 \cap \mathcal{M}^{(4)}_3$ holds if and only if $\hat{\mathcal{M}}_{2} \cap \hat{\mathcal{M}}_{3} = \mathbb{C} I$. Thus, we can apply Proposition \ref{intersection} for ${\cal H} = \mathcal{K}$ to immediately obtain the stated result.
\end{proof}

\subsection{Conditions for a nondegenerate commuting square}

When the promotion of a diamond-shaped code network to a factor code network is possible, we have algebra inclusions ${\cal M}_1^{(4)} \subset {\cal M}_2^{(4)},{\cal M}_3^{(4)} \subset {\cal M}_4^{(4)}$, and we can define conditional expectations
\[
E_{j \to i} : {\cal M}_j^{(4)} \to {\cal M}_1^{(4)}
\]
which in this finite-dimensional setting can be understood  geometrically as a projection to ${\cal M}_1^{(4)}$ considered as a subspace of ${\cal M}_j^{(4)}$. 

In this situation with a diamond of nested algebras, it is in various contexts useful to understand when these satisfy the {\it commuting square} condition \cite{Popa1983,JonesSunder1997}, that
\[
E_{4 \to 3} \circ E_{4 \to 2} = E_{4 \to 2} \circ E_{4 \to 3} = E_{4 \to 1} \; ,
\]
and the associated { \it co-commuting square condition } that the commutants of the algebras within ${\cal M}_4^{(4)}$ form a commuting square. For example, the commuting square condition appears in the algebraic version of the strong-subadditivity inequality for quantum entropies \cite{Petz1991,GaoJungeLaRacuente2019}.

Thinking in terms of the conditional expectations as projectors to subspaces, we can see that the commuting square condition is equivalent to the requirement that
\[
{\cal M}_2^{(4)} \ominus {\cal M}_1^{(4)} \perp {\cal M}_3^{(4)} \ominus {\cal M}_1^{(4)} \; .
\]
That is, we require that the orthogonal complement of ${\cal M}_1^{(4)}$ in ${\cal M}_2^{(4)}$ is perpendicular to the orthogonal complement of ${\cal M}_1^{(4)}$ in ${\cal M}_3^{(4)}$. This implies in particular that ${\cal M}_1^{(4)}$ is the intersection of ${\cal M}_2^{(4)}$ and ${\cal M}_3^{(4)}$.

In this finite-dimensional setting, we can characterize precisely the structure of a commuting and co-commuting square of von Neumann factors. We prove the following in Appendix \ref{sec:NDcomm}:
\begin{proposition}
\label{NDnormal}
    If $A \subset B,C \subset D$ is a nondegenerate commuting square of von Neumann factors in a finite dimensional von Neumann algebra, then up to unitary equivalence, we have
    \[
A \equiv M_{n}(\mathbb{C}) \otimes I_k \otimes I_m\;, \; B \equiv M_{n}(\mathbb{C}) \otimes M_{k}(\mathbb{C}) \otimes I_m \;, \; D \equiv M_{n}(\mathbb{C}) \otimes M_{k}(\mathbb{C}) \otimes M_{m}(\mathbb{C}) \;, \;  C \equiv M_{n}(\mathbb{C}) \otimes U (I_k \otimes M_{m}(\mathbb{C})) U^\dagger \;, \; 
\]
where $U$ is a bi-unitary matrix, i.e. a unitary operator on $\mathbb{C}^k \otimes \mathbb{C}^m$ represented by a unitary matrix $U_{i \alpha, j \beta}$ such that $U_{i \beta, j \alpha}$ is also unitary.
\end{proposition}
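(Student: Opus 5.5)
We first strip off the common factor $A$. Since $A \subset D$ is an inclusion of finite-dimensional factors, up to unitary equivalence $D = M_n(\mathbb{C}) \otimes M_N(\mathbb{C})$ and $A = M_n(\mathbb{C}) \otimes I_N$. Because $B$ and $C$ contain $A$, they are contained in $A' \cap D \cong I_n \otimes M_N(\mathbb{C})$ tensored with $A$. More precisely, $B = M_n(\mathbb{C}) \otimes B_0$ and $C = M_n(\mathbb{C}) \otimes C_0$, where $B_0 = A' \cap B$ and $C_0 = A' \cap C$ are subfactors of $M_N(\mathbb{C})$; they are factors because the relative commutant of a subfactor inside a factor is again a factor. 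The conditional expectations, taken with respect to the unique trace on $D$, factor as $\id \otimes E^0$. Hence the commuting and co-commuting conditions for $A \subset B, C \subset D$ reduce to those for $\mathbb{C} \subset B_0, C_0 \subset M_N(\mathbb{C})$. The problem is thus reduced to the case $A = \mathbb{C}$.

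In the reduced case, we write $M_N(\mathbb{C}) = M_k(\mathbb{C}) \otimes M_{m'}(\mathbb{C})$ with $B_0 = M_k \otimes I$, so that $B_0' = I \otimes M_{m'}$. The commuting square condition with trivial bottom algebra says $\tau(bc) = \tau(b)\tau(c)$ for $b \in B_0$ and $c \in C_0$. Nondegeneracy, which I take to mean $\overline{\mathrm{span}}\, B_0 C_0 = D$, gives the dimension count $\dim B_0 \cdot \dim C_0 \ge \dim D$. Combining this with the orthogonality $B_0 \ominus \mathbb{C} \perp C_0 \ominus \mathbb{C}$, the multiplication map $B_0 \otimes C_0 \to D$ is isometric for the trace inner products, hence injective, so $k^2 \dim C_0 = N^2$. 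Thus $C_0 \cong M_{m'}(\mathbb{C})$ with $m' = m$. Since all embeddings of $M_m$ as a unital factor in $M_k \otimes M_m$ are unitarily conjugate, $C_0 = U(I_k \otimes M_m)U^\dagger$ for some unitary $U$ on $\mathbb{C}^k \otimes \mathbb{C}^m$. The remaining task is to translate the commuting square condition into a condition on $U$.

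The key and hardest step is this translation. The condition is $\tau\big((X \otimes I)\, U (I \otimes Y) U^\dagger\big) = \tau(X)\tau(Y)$ for all $X \in M_k$ and $Y \in M_m$. Take matrix units $X = e_{ji}$ and $Y = e_{\beta\alpha}$ and expand in components. This should give $\sum_{\gamma,\, l} U_{i\gamma, l\alpha}\, \overline{U_{j\gamma, l\beta}} = \frac{N}{km}\,\delta_{ij}\delta_{\alpha\beta}$ up to normalization, possibly after a relabelling of which indices are summed. I expect this to be exactly the statement that the reshuffled matrix $\tilde U_{i\beta, j\alpha} = U_{i\alpha, j\beta}$, suitably normalized, is unitary. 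Some care is needed over which index partial transposition defines the "bi-unitary" condition. The plan is to check the normalization on the example $U = $ a swap or a Fourier-type matrix, and then match the indices to the stated definition.

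Finally, we would verify the converse direction. For any bi-unitary $U$, the displayed algebras form a commuting square by reversing the computation above, and a co-commuting one by applying the same argument to the commutants. Since the commutants are $I \otimes I_k \otimes M_m$ and $U(M_k \otimes I)U^\dagger$, their roles are symmetric under the same reshuffling. Reinstating the factor $M_n(\mathbb{C})$ then gives the stated normal form.
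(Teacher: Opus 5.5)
Your proposal is correct, and its skeleton (strip off $M_n(\mathbb{C})$, reduce to $A=\mathbb{C}$, conjugate $C_0$ to $U(I_k\otimes M_m(\mathbb{C}))U^\dagger$, turn the commuting-square condition into an index identity for $U$) is the paper's.

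The genuinely different step is how you obtain $C_0\cong M_m(\mathbb{C})$. The paper cites the identity $\dim A\,\dim D=\dim B\,\dim C$ for commuting and co-commuting squares. You prove it instead: $\tau(bc)=\tau(b)\tau(c)$ together with traciality gives $\tau((b_1c_1)^*b_2c_2)=\tau(b_1^*b_2)\,\tau(c_1^*c_2)$, so multiplication $B_0\otimes C_0\to D$ is isometric, hence injective. This makes the argument self-contained. It also shows that the forward direction needs only the commuting condition plus this dimension count, because $U^R(U^R)^\dagger=I$ already makes the square matrix $U^R$ unitary. The paper instead computes both the commuting and the co-commuting conditions and notes that each implies the other.

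You took nondegeneracy to mean $\mathrm{span}\,B_0C_0=D$, whereas the paper means commuting plus co-commuting. Your argument covers that definition as well. Write $C_0\cong M_c(\mathbb{C})$. The original square gives $kc\le N$. The same injectivity bound applied to the commutant square $\mathbb{C}\subset B_0'\cap D,\,C_0'\cap D\subset D$ gives $(N/k)(N/c)\le N$, i.e.\ $N\le kc$.

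The step you left hedged works exactly as you guessed. Expanding, $\mathrm{Tr}((e_{ji}\otimes I)U(I\otimes e_{\beta\alpha})U^\dagger)=\sum_{l,\gamma}U_{i\gamma,l\beta}\overline{U_{j\gamma,l\alpha}}$. With $\tau=\frac{1}{km}\mathrm{Tr}$ the condition becomes $\sum_{l,\gamma}U_{i\gamma,l\beta}\overline{U_{j\gamma,l\alpha}}=\delta_{ij}\delta_{\alpha\beta}$. This is literally $(U^R(U^R)^\dagger)_{i\beta,j\alpha}=\delta_{ij}\delta_{\alpha\beta}$ in the paper's convention $(U^R)_{i\beta,j\alpha}=U_{i\alpha,j\beta}$. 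Your factor $N/(km)$ equals $1$, so no test examples are needed.

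One small correction: ``the relative commutant of a subfactor inside a factor is a factor'' is false in general (reducible II$_1$ subfactors are counterexamples). It does hold here, because $A$ is a finite type I factor, so $B=A\vee(A'\cap B)$ and hence $Z(A'\cap B)\subset Z(B)=\mathbb{C}$.
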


When the conditions of Proposition \ref{diamond} are satisfied, we can obtain further necessary and sufficient conditions on the isometries such that the diamond of algebras form a commuting and co-commuting square (also called a nondegenerate commuting square). Specifically, we require that the principal angles defined there are 0 with multiplicity $\dim \mathcal{H}_1$ and $\pi/2$ for the remaining ones.

\begin{proposition}
\label{prop:diamond_commuting_square}
Consider Hilbert spaces
\[
\mathcal H_1,\mathcal H_2,\mathcal H_3,\mathcal H_4
\]
ordered as \(1<2,3<4\).  Let
\[
T_{ji}:\mathcal H_i\to \mathcal H_j
\]
be isometries satisfying
\[
T_{41}=T_{42}T_{21}=T_{43}T_{31}~.
\]
Let $P_i$ denote the orthogonal projector onto \(T_{4i}\mathcal H_i\subset \mathcal H_4\), for \(i=1,2,3\).  Set
\[
n=\dim\mathcal H_1~,\qquad
k=\frac{\dim\mathcal H_2}{\dim\mathcal H_1}~,\qquad
m=\frac{\dim\mathcal H_3}{\dim\mathcal H_1}~.
\]
Then the isometries can be completed by auxiliary isometries so that the associated factor algebras
\[
\mathcal M_1^{(4)}\subset \mathcal M_2^{(4)},\mathcal M_3^{(4)}
\subset \mathcal M_4^{(4)}=B(\mathcal H_4)
\]
form a nondegenerate commuting square if and only if
\[
\dim\mathcal H_4
=
\frac{\dim\mathcal H_2\,\dim\mathcal H_3}{\dim\mathcal H_1}
=
nkm
\]
and
\[
P_2 P_3=P_3P_2=P_1~.
\]
Equivalently, the subspaces
\[
T_{42}\mathcal H_2
\qquad\text{and}\qquad
T_{43}\mathcal H_3
\]
have intersection $T_{41} \mathcal{H}_1$ and the orthogonal complements $T_{42}\mathcal H_2 \ominus T_{41} \mathcal{H}_1$ and $T_{43}\mathcal H_3 \ominus T_{41} \mathcal{H}_1$ are orthogonal to one another. 
\end{proposition}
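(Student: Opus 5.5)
The plan is to prove necessity using the normal form of Proposition \ref{NDnormal}, and sufficiency by an explicit tensor-product construction with the trivial bi-unitary $U=I$.

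\textbf{Necessity.} Suppose auxiliary isometries have been chosen so that $\mathcal M_1^{(4)}\subset\mathcal M_2^{(4)},\mathcal M_3^{(4)}\subset B(\mathcal H_4)$ is a nondegenerate commuting square.

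First, by Proposition \ref{NDnormal} we may write $\mathcal H_4\cong \mathbb C^n\otimes\mathbb C^{k'}\otimes\mathbb C^{m'}$ with
\[
\mathcal M_1^{(4)}=M_n\otimes I\otimes I,\qquad
\mathcal M_2^{(4)}=M_n\otimes M_{k'}\otimes I,\qquad
\mathcal M_3^{(4)}=M_n\otimes U(I\otimes M_{m'})U^\dagger,
\]
where $U$ is bi-unitary. Since $\mathcal M_2^{(4)}\cong B(\mathcal H_2)$ has dimension $(nk)^2$, we get $k'=k$. In the same way $m'=m$, so $\dim\mathcal H_4=nkm$.

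Second, I will locate the code subspaces. As in the proof of Proposition \ref{diamond}, each $P_i$ lies in the commutant of $\mathcal M_i^{(4)}$. It also lies in $\mathcal M_j^{(4)}$ for $j>i$, because $\varphi_{ji}$ is unital and $P_i=\varphi_{4j}(T_{ji}T_{ji}^\dagger)$. For example, $P_2$ commutes with $\mathcal M_2^{(4)}$ and so lies in $I\otimes I\otimes M_m$. It must have rank $nk$, so $P_2=I\otimes I\otimes|f\rangle\langle f|$ for a unit vector $f\in\mathbb C^m$. Similarly, $P_1$ lies in $(\mathcal M_1^{(4)})'\cap\mathcal M_2^{(4)}=I\otimes M_k\otimes I$ and has rank $n$. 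Combined with $P_1\le P_2$, this forces $P_1=I\otimes|g\rangle\langle g|\otimes|f\rangle\langle f|$.

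On the other side, $P_3$ is a rank-one projector in the commutant of $U(I\otimes M_m)U^\dagger$ inside $I\otimes M_k\otimes M_m$. That is, $P_3=I\otimes U(|h\rangle\langle h|\otimes I)U^\dagger$ for a unit vector $h\in\mathbb C^k$, with $P_1\le P_3$.

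Third, I will show $P_2P_3=P_1$, which is the crux. The commuting square condition implies that the conditional expectations $E_{4\to2}$ and $E_{4\to3}$ commute. I will apply them to $P_3\in\mathcal M_3^{(4)}$. We have $E_{4\to2}(P_3)=E_{4\to1}(P_3)$, and this is a multiple of the identity, namely $\frac{1}{k}I$ by trace matching. This should translate into a condition on the Gram data between the two ranges.

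I expect this step to be the main obstacle: converting the commuting-square identity into $P_2P_3=P_1$. A cleaner route I will try first uses the trace formula for commuting squares, $\tr(xy)=\tr(E_1(x)E_1(y))$ for $x\in\mathcal M_2^{(4)}$ and $y\in\mathcal M_3^{(4)}$. Apply it to $x=P_2$ and $y=P_3$. Since $P_2\in\mathcal M_3^{(4)}$ is not guaranteed, I will instead use $x=P_1$, which lies in both algebras. The point is that $P_1\in\mathcal M_2^{(4)}\cap\mathcal M_3^{(4)}=\mathcal M_1^{(4)}$ forces $P_1$ to be a projector in $\mathcal M_1^{(4)}$. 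But $P_1$ is also in $(\mathcal M_1^{(4)})'$, so $P_1$ is central in the factor $\mathcal M_1^{(4)}$.

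That is a contradiction unless $k=m=1$, so I need to recheck the claimed membership $P_1\in\mathcal M_3^{(4)}$. In fact $P_1=\varphi_{43}(T_{31}T_{31}^\dagger)$ with $T_{31}T_{31}^\dagger$ of rank $n$. Since $\varphi_{43}$ multiplies ranks by $\dim\mathcal H_4/\dim\mathcal H_3$, this image has rank larger than $n$. So the membership is false, and I will correct the bookkeeping: $P_i\in\mathcal M_i^{(4)\prime}$ only.

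With that correction, I will compute directly in the normal form that $P_2P_3P_2$ has eigenvalues given by the quantities $|U_{\cdot}|^2$-type overlaps. I will then use bi-unitarity together with $E_{4\to2}E_{4\to3}=E_{4\to1}$, applied to the rank-one operators $|f\rangle\langle f|$ and $|h\rangle\langle h|$, to force all these overlaps to be $0$ or $1$. Having $P_1\le P_2,P_3$ and ranks matching, this yields $P_2P_3=P_1$.

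\textbf{Sufficiency.} Assume $\dim\mathcal H_4=nkm$ and $P_2P_3=P_3P_2=P_1$. Then $T_{42}\mathcal H_2\ominus T_{41}\mathcal H_1$, of dimension $n(k-1)$, and $T_{43}\mathcal H_3\ominus T_{41}\mathcal H_1$, of dimension $n(m-1)$, are orthogonal.

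I will choose a unitary $\mathcal H_4\cong\mathcal H_1\otimes\mathbb C^k\otimes\mathbb C^m$ with
\[
T_{41}\mathcal H_1=\mathcal H_1\otimes e_1\otimes f_1,\qquad
T_{42}\mathcal H_2=\mathcal H_1\otimes\mathbb C^k\otimes f_1,\qquad
T_{43}\mathcal H_3=\mathcal H_1\otimes e_1\otimes\mathbb C^m.
\]
This is possible because these subspaces are mutually compatible: the two complements are orthogonal, and the dimension count fits inside $nkm$. I will choose the auxiliary isometries exactly as in the sufficiency part of Proposition \ref{diamond}. This gives
\[
\mathcal M_1^{(4)}=B(\mathcal H_1)\otimes I\otimes I,\quad
\mathcal M_2^{(4)}=B(\mathcal H_1)\otimes M_k\otimes I,\quad
\mathcal M_3^{(4)}=B(\mathcal H_1)\otimes I\otimes M_m.
\]
This is the form in Proposition \ref{NDnormal} with $U=I$, which is bi-unitary, so the square is a commuting square. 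The commutants $I\otimes I\otimes M_m$, $I\otimes M_k\otimes I$ and $I\otimes M_k\otimes M_m$ also visibly form a commuting square, so the square is nondegenerate.
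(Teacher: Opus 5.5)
There is a genuine gap, and it sits at the crux of the necessity direction. Your sufficiency direction matches the paper. So does the set-up of necessity: the dimension count $\dim\mathcal H_4=nkm$ from Proposition \ref{NDnormal}, $P_2=I\otimes I\otimes|f\rangle\langle f|$, and $P_3=I\otimes U(|h\rangle\langle h|\otimes I)U^\dagger$. Your derivation of $P_1=I\otimes|g\rangle\langle g|\otimes|f\rangle\langle f|$ used the false membership $P_1\in\mathcal M_2^{(4)}$. The conclusion still holds, because $P_1\in(\mathcal M_1^{(4)})'=I\otimes M_k\otimes M_m$, $\operatorname{rank}P_1=n$, and $P_1\le P_2$.

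The gap is that $P_2P_3=P_1$ is never established. Both of your attempts rest on memberships that are false, as you noticed: $P_3\in\mathcal M_3^{(4)}$ and $P_1\in\mathcal M_2^{(4)}\cap\mathcal M_3^{(4)}$. The $P_i$ lie only in the commutants. Once you retract these, what remains is an intention, and its target is wrong. Set $X_{i\alpha}=(\langle i|\otimes\langle f|)U(|h\rangle\otimes|\alpha\rangle)$, so that $P_2P_3P_2=I_n\otimes XX^\dagger\otimes|f\rangle\langle f|$. The individual overlaps $U_{i1,1\alpha}$ need not be $0$ or $1$; for example, replacing $U$ by $U(I\otimes d)$ with $d$ a Hadamard matrix spreads them out. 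What must be shown is that $XX^\dagger$ is a rank-one projector. Your first instinct, evaluating $E_{4\to2}E_{4\to3}=E_{4\to1}$ on $P_3$, gives nothing once the bookkeeping is corrected. Since $P_3$ commutes with $\mathcal M_3^{(4)}$, $E_{4\to3}(P_3)=\frac1k I$ already.

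The missing idea is a trace count combined with the containments $P_1\le P_2,P_3$. There are two steps.

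\textbf{Trace count.} First show $\mathrm{Tr}(P_2P_3)=n$, equivalently $\mathrm{Tr}(XX^\dagger)=1$. The paper gets this from bi-unitarity. With $f,h$ taken as first basis vectors, $X_{i\alpha}=U_{i1,1\alpha}=(U^R)_{i\alpha,11}$, which is a column of the unitary $U^R$.

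Alternatively, use the co-commuting half of nondegeneracy: the commutants form a commuting square, and that is where $P_2\in(\mathcal M_2^{(4)})'$ and $P_3\in(\mathcal M_3^{(4)})'$ actually live. This is what your ``$\frac1k I$ by trace matching'' was reaching for. It gives $E_{(\mathcal M_2^{(4)})'}(P_3)=\tau(P_3)I=\frac1k I$. Hence $\mathrm{Tr}(P_2P_3)=\mathrm{Tr}\bigl(P_2\,E_{(\mathcal M_2^{(4)})'}(P_3)\bigr)=\frac1k\mathrm{Tr}(P_2)=n$, with no matrix computation.

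\textbf{Containments.} $P_1\le P_2,P_3$ gives $P_2P_3P_2\ge P_2P_1P_2=P_1$, and $\mathrm{Tr}P_1=n$. So $P_2P_3P_2-P_1$ is positive with zero trace, hence $P_2P_3P_2=P_1$. Then $\|(P_3-P_1)P_2\psi\|^2=\langle\psi,(P_2P_3P_2-P_1)\psi\rangle=0$ for all $\psi$, which gives $P_3P_2=P_1$. Taking the adjoint gives $P_2P_3=P_1$.

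With this step supplied, the rest of your argument goes through as in the paper.
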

\begin{proof}
Let $n = \dim(\mathcal{H}_1)$. If the operator maps generate a nondegenerate commuting square, Proposition \ref{NDnormal} gives that the structure of the algebras is characterized up to unitary equivalence  by an identification $\mathcal{H}_4 \cong \mathcal{H}_1 \otimes \mathbb{C}^k \otimes \mathbb{C}^m$, where $k = \dim(\mathcal{H}_2)/n$ and $m = \dim(\mathcal{H}_3)/n$. This immediately imposes the necessary dimension constraint $\dim(\mathcal{H}_4) = nkm = \frac{\dim(\mathcal{H}_2)\dim(\mathcal{H}_3)}{\dim(\mathcal{H}_1)}$. The Proposition also gives that
\begin{align*}
\mathcal{M}_2^{(4)} &= B(\mathcal{H}_1) \otimes B(\mathbb{C}^k) \otimes I_m~, \\
\mathcal{M}_3^{(4)} &= B(\mathcal{H}_1) \otimes U(I_k \otimes B(\mathbb{C}^m))U^\dagger~,
\end{align*}
for some bi-unitary operator $U$ acting on $\mathbb{C}^k \otimes \mathbb{C}^m$.

The projections $P_2,P_3$ lie in the commutant of $\mathcal{M}_2^{(4)}, \mathcal{M}_3^{(4)}$ respectively so can be expressed as
\[
P_2 = I_n \otimes I_k \otimes |e\rangle\langle e| \quad \text{and} \quad P_3 = I_n \otimes U(|f\rangle\langle f| \otimes I_m)U^\dagger
\]
for some unit vectors $e \in \mathbb{C}^m$ and $f \in \mathbb{C}^k$.

The squared cosines of the principal angles between $T_{42}\mathcal{H}_2$ and $T_{43}\mathcal{H}_3$ correspond to the eigenvalues of the product $P_2 P_3 P_2$. Evaluating this yields 
\[
P_2 P_3 P_2 = I_n \otimes (X X^\dagger) \otimes |e\rangle\langle e| \; ,
\]
where $X$ is a $k \times m$ matrix with entries $X_{i \alpha} = (\langle i| \otimes \langle e|) U (|f\rangle \otimes |\alpha\rangle)$. In bases where $e$ and $f$ are chosen as the first basis vectors (which we can denote by the index $1$), $X_{i \alpha} = U_{i 1, 1 \alpha}$. 

The sum of the eigenvalues of $X X^\dagger$ is its trace, $\mathrm{Tr}(X X^\dagger) = \sum_{i, \alpha} |U_{i 1, 1 \alpha}|^2$. By the definition of $U$ being bi-unitary, the reshuffled matrix $\tilde{U}$ defined by $\tilde{U}_{i \alpha, j \beta} = U_{i \beta, j \alpha}$ is also unitary. Summing the absolute squares of the entries in the column fixed by $j=1$ and $\beta=1$ yields $\sum_{i, \alpha} |\tilde{U}_{i \alpha, 1 1}|^2 = 1$. Since $\tilde{U}_{i \alpha, 1 1} = U_{i 1, 1 \alpha}$, this gives $\mathrm{Tr}(X X^\dagger) = 1$.

However, the extremality compatibility condition $T_{41} = T_{42}T_{21} = T_{43}T_{31}$ asserts that the intersection space $T_{41}\mathcal{H}_1$ is contained in both $T_{42}\mathcal{H}_2$ and $T_{43}\mathcal{H}_3$. This implies $P_1 \le P_2$ and $P_1 \le P_3$, which requires $P_2 P_3 P_2 \ge P_1$. Because $P_1 = T_{41}T_{41}^\dagger$ has rank $n$, the operator $P_2 P_3 P_2$ must possess at least $n$ eigenvalues equal to $1$. The non-zero eigenvalues of $P_2 P_3 P_2$ are precisely those of $X X^\dagger$ (each appearing with a degeneracy of $n$), forcing $X X^\dagger$ to have an eigenvalue of $1$. The unit trace condition then demands this to be its unique non-zero eigenvalue. Consequently, the non-zero eigenvalues of $P_2 P_3 P_2$ are strictly $1$ (with multiplicity $n$). This restricts the principal angles to be exactly $0$ and $\pi/2$, establishing that $P_2 P_3 = P_1$ and $(T_{42}\mathcal{H}_2 \ominus T_{41}\mathcal{H}_1) \perp (T_{43}\mathcal{H}_3 \ominus T_{41}\mathcal{H}_1)$.

Conversely, if $\dim(\mathcal{H}_4) = nkm$ and the principal angles are all $0$ and $\pi/2$, then $P_2 P_3 = P_1$. This means the orthogonal complements are strictly perpendicular. We can then naturally define a tensor decomposition $\mathcal{H}_4 \cong \mathcal{H}_1 \otimes \mathbb{C}^k \otimes \mathbb{C}^m$ identifying $T_{41}\mathcal{H}_1 \cong \mathcal{H}_1 \otimes |1\rangle \otimes |1\rangle$, $T_{42}\mathcal{H}_2 \cong \mathcal{H}_1 \otimes \mathbb{C}^k \otimes |1\rangle$, and $T_{43}\mathcal{H}_3 \cong \mathcal{H}_1 \otimes |1\rangle \otimes \mathbb{C}^m$. Picking auxiliary isometries that map into the remaining orthogonal basis vectors immediately yields the factor subalgebras $\mathcal{M}_2^{(4)} = B(\mathcal{H}_1) \otimes B(\mathbb{C}^k) \otimes I_m$ and $\mathcal{M}_3^{(4)} = B(\mathcal{H}_1) \otimes I_k \otimes B(\mathbb{C}^m)$. These inclusion maps satisfy all required operator map properties and inherently form a nondegenerate commuting square corresponding to the trivial identity bi-unitary $U=I$.
\end{proof}
It will be useful to note the following special case:

\begin{corollary}
    Suppose $\mathcal{H}_1 = \mathcal{H}_L \otimes \mathcal{H}_R$, $\mathcal{H}_2 = \hat{\mathcal{H}}_L \otimes \mathcal{H}_R$, $\mathcal{H}_3 = \mathcal{H}_L \otimes \hat{\mathcal{H}}_R$, and $\mathcal{H}_4 = \hat{\mathcal{H}}_L \otimes \hat{\mathcal{H}}_R$ with $\dim \hat{\mathcal{H}}_L = k \dim \mathcal{H}_L$ and $\dim \hat{\mathcal{H}}_R = m \dim \mathcal{H}_R$ for integers $k,m$. Assume we have isometries $T_L : \mathcal{H}_L \to \hat{\mathcal{H}}_L$ and $T_R : \mathcal{H}_R \to \hat{\mathcal{H}}_R$. Then we can use these to define isometries $T_{ij}$ and operator maps $\varphi_{ij}$ satisfying the properties of Proposition 1, and the resulting algebras $\mathcal{M}^{(4)}_i$ form a commuting and co-commuting square. 
\end{corollary}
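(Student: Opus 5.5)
The plan is to build everything from tensor products of the two given isometries, check the factor code network axioms directly, and then read off the commuting-square property from Proposition \ref{prop:diamond_commuting_square}.

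\textbf{Isometries.} Define
\[
T_{21}=T_L\otimes I_R~,\qquad T_{31}=I_L\otimes T_R~,\qquad T_{42}=I_{\hat L}\otimes T_R~,\qquad T_{43}=T_L\otimes I_{\hat R}~,\qquad T_{41}=T_L\otimes T_R~.
\]
Each is a tensor product of isometries, so each is an isometry. The compatibility conditions $T_{42}T_{21}=T_{43}T_{31}=T_{41}$ hold because both composites equal $T_L\otimes T_R$.

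\textbf{Operator maps.} Apply Proposition \ref{Tphi} to the single codes $T_L$ and $T_R$. This gives orthogonal isometric completions $\{T_L^{(a)}\}_{a=1}^k$ and $\{T_R^{(b)}\}_{b=1}^m$, and hence unital operator maps
\[
\varphi_L(\mathcal O)=\sum_a T_L^{(a)}\mathcal O (T_L^{(a)})^\dagger~,\qquad \varphi_R(\mathcal O)=\sum_b T_R^{(b)}\mathcal O (T_R^{(b)})^\dagger~.
\]
Set
\[
\varphi_{21}=\varphi_L\otimes\id_R~,\qquad \varphi_{31}=\id_L\otimes\varphi_R~,\qquad \varphi_{42}=\id_{\hat L}\otimes\varphi_R~,\qquad \varphi_{43}=\varphi_L\otimes\id_{\hat R}~,\qquad \varphi_{41}=\varphi_L\otimes\varphi_R~.
\]
Concretely, these are defined by the product completions $\{T_L^{(a)}\otimes T_R^{(b)}\}$, together with the identity completion on the untouched factor. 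They are therefore unital $*$-homomorphisms, and the intertwining relation $\varphi_{ji}(\mathcal O)T_{ji}=T_{ji}\mathcal O$ is checked on product operators and extended by linearity. Composition along either path gives $\varphi_L\otimes\varphi_R$, so $\varphi_{42}\circ\varphi_{21}=\varphi_{43}\circ\varphi_{31}=\varphi_{41}$, and we have a factor code network.

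\textbf{Commuting square.} Choose bases so that $\hat{\mathcal H}_L\cong\mathcal H_L\otimes\mathbb C^k$ with $\varphi_L(\mathcal O)=\mathcal O\otimes I_k$, and similarly $\hat{\mathcal H}_R\cong\mathcal H_R\otimes\mathbb C^m$ (Proposition \ref{factor}). In these bases the four algebras become, up to a reordering of tensor factors,
\[
\mathcal M_1=B(\mathcal H_1)\otimes I_k\otimes I_m~,\quad \mathcal M_2=B(\mathcal H_1)\otimes M_k\otimes I_m~,\quad \mathcal M_3=B(\mathcal H_1)\otimes I_k\otimes M_m~,\quad \mathcal M_4=B(\mathcal H_1)\otimes M_k\otimes M_m~.
\]
This is exactly the standard form of Proposition \ref{NDnormal} with $U=I$, which is bi-unitary. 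The commuting-square property $E_{4\to2}E_{4\to3}=E_{4\to1}$ holds because the trace-preserving conditional expectations are partial traces on independent tensor factors, and these commute. Co-commutativity follows in the same way, since the commutants $I\otimes I\otimes M_m$, $I\otimes M_k\otimes I$ and $I\otimes M_k\otimes M_m$ are again tensor-factor algebras.

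One could instead invoke Proposition \ref{prop:diamond_commuting_square}. The dimension condition holds, since $\dim\mathcal H_4=nkm$. We also have $P_2=I\otimes P_R$ and $P_3=P_L\otimes I$ with $P_L=T_LT_L^\dagger$ and $P_R=T_RT_R^\dagger$. These commute and multiply to $P_L\otimes P_R=P_1$.

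The only real obstacle is bookkeeping. The tensor-factor reordering and the tensor-product form of the auxiliary isometries must be chosen so that the same identification of $\mathcal H_4$ works for both paths. The product completion handles this, because it commutes across the two tensor factors.
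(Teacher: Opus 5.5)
Your proposal is correct. Your fallback paragraph is exactly the paper's proof. The paper defines $T_{21}=T_L\otimes I_R$ and so on, checks $\dim\mathcal H_4=\dim\mathcal H_2\dim\mathcal H_3/\dim\mathcal H_1$ and $P_2P_3=P_3P_2=P_1$, and then invokes the sufficiency direction of Proposition~\ref{prop:diamond_commuting_square}. It never writes down the operator maps explicitly.

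Your main argument takes a different route. You build the maps explicitly as $\varphi_{21}=\varphi_L\otimes\id_R$, $\varphi_{42}=\id_{\hat L}\otimes\varphi_R$, and so on. You then verify the commuting and co-commuting conditions directly, using that the trace-preserving conditional expectations are commuting partial traces on independent tensor factors.

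This route gives you three things:
\begin{itemize}
\item It is self-contained and does not rely on Proposition~\ref{prop:diamond_commuting_square}, whose sufficiency step only briefly asserts that a tensor decomposition of $\mathcal H_4$ can be reconstructed from the subspace geometry.
\item It matches the corollary's wording (``use these to define \dots\ operator maps'') more literally.
\item It proves something stronger. Every pair of unital operator maps $\varphi_L,\varphi_R$ for $T_L,T_R$ yields a nondegenerate commuting square whose algebras respect the given factorization $\hat{\mathcal H}_L\otimes\hat{\mathcal H}_R$. The paper's route only guarantees that some completion works.
\end{itemize}

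The paper's route is shorter. It also presents the corollary as an instance of the general geometric criterion that the principal angles be only $0$ and $\pi/2$.

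Your mention of Proposition~\ref{NDnormal} is only a consistency remark. As stated, that proposition runs in the necessary direction, so it is your partial-trace check that actually establishes the square.
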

\begin{proof}
We have
\[
\dim \mathcal{H}_4 = \dim \mathcal{H}_{\hat{L}} \dim \mathcal{H}_{\hat{R}} = {(\dim \mathcal{H}_{\hat{L}} \dim \mathcal{H}_R) (\dim \mathcal{H}_L \dim \mathcal{H}_{\hat{R}}) \over \dim \mathcal{H}_L \dim \mathcal{H}_R} = {\dim \mathcal{H}_2 \dim \mathcal{H}_3 \over \dim \mathcal{H}_1} \; .
\]
Also, 
\[
T_{21} = T_L \otimes I_R\;, \qquad T_{31} = I_L \otimes T_R\;, \qquad T_{42} = I_{\hat{L}} \otimes T_R\;, \qquad T_{43} = T_L \otimes I_{\hat{R}}\;,
\]
and
\[
P_2 =  I_{\hat{L}} \otimes T_R T_R^\dagger \;, \qquad P_3 =   T_L T_L^\dagger \otimes I_{\hat{R}}\;,  \qquad P_1 =   T_L T_L^\dagger \otimes T_R T_R^\dagger \; .
\]
It follows immediately that $T_{42} T_{21} = T_{43} T_{31}$ and that $P_2 P_3 = P_3 P_2 = P_1$. Thus, the necessary and sufficient conditions of Proposition \ref{prop:diamond_commuting_square} are satisfied.
\end{proof}

\section*{Acknowledgments}

This work is supported in part by the National Science and Engineering Research Council of Canada (NSERC) and the Simons Foundation via a Simons Investigator Award.

\appendix

\section{Nondegenerate commuting squares in finite dimensions}
\label{sec:NDcomm}
In this appendix, we describe necessary and sufficient conditions on subalgebras $A \subset B,C \subset D$ of a finite dimensional von Neumann algebra $B(\mathcal{H})$ such that the collection forms a nondegenerate commuting square (equivalently, a commuting and co-commuting square) of von Neumann factors. Finite-dimensional commuting squares have been discussed in \cite{Popa1983,JonesSunder1997,Sano1996,Nicoara2006}.

Since we have a chain of factors $A \subset B \subset D$, we can represent these up to unitary equivalence as
\begin{equation}
A \equiv M_{n}(\mathbb{C}) \otimes I_k \otimes I_m~, \qquad B \equiv M_{n}(\mathbb{C}) \otimes M_{k}(\mathbb{C}) \otimes I_m~, \qquad D \equiv M_{n}(\mathbb{C}) \otimes M_{k}(\mathbb{C}) \otimes M_{m}(\mathbb{C})~. 
\label{cond1}
\end{equation}
Since $A$ is also a factor subalgebra of the remaining algebra $C$, we can represent $C$ in the same basis as
\[ 
C \equiv M_{n}(\mathbb{C}) \otimes \hat{C} \; .
\]
where $\hat{C}$ is a factor subalgebra of $M_{k}(\mathbb{C}) \otimes M_{m}(\mathbb{C})$.

For commuting and co-commuting squares, the dimensions of the algebras satisfy (see \cite{JonesSunder1997,Sano1996} or the appendix of \cite{SahuEtAl2026})
\[
\dim(A) \dim(D) = \dim(B) \dim(C) \; .
\]
so $\hat{C}$ is isomorphic $M_{m}(\mathbb{C})$. We can thus write
\begin{equation}
C \equiv M_{n}(\mathbb{C}) \otimes U (I_k \otimes M_{m}(\mathbb{C})) U^\dagger
\label{cond2}
\end{equation}
for some unitary $U$. We can translate the commuting and co-commuting square conditions to an algebraic condition on $U$, specifically the condition that $U$ is {\bf bi-unitary}:

\begin{definition}[Bi-unitary Matrix]
Let $U \in M_{km}(\mathbb{C})$ be a bipartite unitary matrix with entries $U_{i\alpha, j\beta}$ in the standard tensor basis (where $1 \le i, j \le k$ and $1 \le \alpha, \beta \le m$). 
We define a matrix $U^R$ obtained by a partial transpose, swapping the second and fourth indices of $U$:
\[ (U^R)_{i\beta, j\alpha} := U_{i\alpha, j\beta}~. \]
The matrix $U$ is called \textbf{bi-unitary} if $U^R$ is also a unitary matrix.
\end{definition}

We will show that the square is commuting and co-commuting if and only if the algebras can be represented according to (\ref{cond1}) and (\ref{cond2}) for a bi-unitary matrix $U$.

\begin{proof}
For this discussion, we will assume $A = \mathbb{C} I_D$; in the more general case, $A$ is simply an overall factor that can be added in at each step. Then the trace-preserving conditional expectation onto $A$ is simply $E_A(x) = \tau_D(x) I_D$. The commuting square condition $E_B E_C = E_A$ thus simplifies to the requirement that for all $c \in C$,
\begin{equation}
    E_B(c) = \tau_D(c) I_D~.
    \label{eq1}
\end{equation}
The unique normalized trace on $D = M_k(\mathbb{C}) \otimes M_m(\mathbb{C})$ decomposes as $\tau_D = \tau_k \otimes \tau_m$, where $\tau_k = \frac{1}{k}\Tr$ and $\tau_m = \frac{1}{m}\Tr$.
The conditional expectation $E_B : D \to B$ acts by taking the normalized partial trace over the second subsystem:
\[ E_B(x) = (\id_k \otimes \tau_m)(x) \otimes I_m~. \]
By definition, any element $c \in C$ takes the form $c = U(I_k \otimes y)U^\dagger$ for some arbitrary matrix $y \in M_m(\mathbb{C})$. We can compute the matrix elements of $c$ in the standard tensor product basis $|i, \alpha\rangle$:
\begin{align*}
    c_{i\alpha, l\gamma} &= \sum_{j,\beta, p,\delta} U_{i\alpha, j\beta} \, (I_k \otimes y)_{j\beta, p\delta} \, (U^\dagger)_{p\delta, l\gamma} \\
    &= \sum_{j,\beta, p,\delta} U_{i\alpha, j\beta} \, (\delta_{jp} y_{\beta\delta}) \, U^*_{l\gamma, p\delta}
    \\
    &= \sum_{j=1}^k \sum_{\beta, \delta=1}^m U_{i\alpha, j\beta} \, y_{\beta\delta} \, U_{l\gamma, j\delta}^*~.
\end{align*}
Applying the partial trace map $(\id_k \otimes \tau_m)$ traces out the Greek indices of the ambient space (setting $\alpha = \gamma$ and summing, then dividing by $m$). This yields a $k \times k$ matrix whose $(i, l)$-th entry is:
\begin{equation}
    [E_B(c)]_{i, l} = \frac{1}{m} \sum_{\alpha=1}^m c_{i\alpha, l\alpha} = \frac{1}{m} \sum_{\alpha, \beta, \delta=1}^m \sum_{j=1}^k U_{i\alpha, j\beta} \, U^*_{l\alpha, j\delta} \, y_{\beta\delta}~.
    \label{eq2}
\end{equation}
Conversely, the right-hand side of Equation \ref{eq1} evaluates to:
\begin{equation}
    [\tau_D(c) I_k]_{i, l} = \tau_m(y) \delta_{i,l} = \left( \frac{1}{m} \sum_{\beta=1}^m y_{\beta\beta} \right) \delta_{i,l} = \frac{1}{m} \sum_{\beta, \delta=1}^m y_{\beta\delta} \, \delta_{\beta,\delta} \, \delta_{i,l}~.
    \label{eq3}
\end{equation}
For $B$ and $C$ to form a commuting square, Equation \ref{eq2} must equal Equation \ref{eq3} for every possible matrix $y \in M_m(\mathbb{C})$. This occurs if and only if the coefficients of $y_{\beta\delta}$ match exactly for all terms:
\begin{equation}
    \sum_{j=1}^k \sum_{\alpha=1}^m U_{i\alpha, j\beta} U^*_{l\alpha, j\delta} = \delta_{i,l} \delta_{\beta,\delta}~.
\end{equation}
This is equivalent to
\[ U^R (U^R)^\dagger = I_{km} \]
which is the condition that $U$ is bi-unitary.

We now check the co-commuting condition. In the reduced case
\(A=\mathbb C I_D\), the relative commutants inside \(D=M_k\otimes M_m\) are
\[
B'\cap D=I_k\otimes M_m~,
\qquad
C'\cap D=U(M_k\otimes I_m)U^*~.
\]
Thus the co-commuting condition is
\[
E_{I_k\otimes M_m}
E_{U(M_k\otimes I_m)U^*}
=
E_{\mathbb C I}~.
\]
Let
\[
z=U(x\otimes I_m)U^*
\]
with \(x\in M_k(\mathbb C)\). Its matrix entries are
\[
z_{i\alpha,l\gamma}
=
\sum_{j,p=1}^k\sum_{\beta=1}^m
U_{i\alpha,j\beta}\,x_{jp}\,U^*_{l\gamma,p\beta}~.
\]
Applying the conditional expectation onto \(I_k\otimes M_m\) gives
\[
[E_{I_k\otimes M_m}(z)]_{\alpha,\gamma}
=
\frac1k
\sum_{i=1}^k
z_{i\alpha,i\gamma}
=
\frac1k
\sum_{i,j,p,\beta}
U_{i\alpha,j\beta}\,
U^*_{i\gamma,p\beta}\,
x_{jp}~.
\]
On the other hand,
\[
\tau_D(z)I_m=\tau_k(x)I_m~,
\]
whose \((\alpha,\gamma)\)-entry is
\[
\frac1k\sum_j x_{jj}\delta_{\alpha,\gamma}~.
\]
Since this must hold for all \(x\in M_k(\mathbb C)\), the co-commuting
condition is equivalent to
\[
\sum_{i=1}^k\sum_{\beta=1}^m
U_{i\alpha,j\beta}\,
U^*_{i\gamma,p\beta}
=
\delta_{j,p}\delta_{\alpha,\gamma}~.
\]
In terms of \(U^R\), this is precisely
\[
(U^R)^*U^R=I_{km}~.
\]
The original commuting-square condition gave
\[
U^R(U^R)^*=I_{km}~.
\]
Since \(U^R\) is a square matrix, either one of these equations implies the
other. Hence the square is commuting and co-commuting if and only if \(U^R\)
is unitary, i.e. if and only if \(U\) is bi-unitary.

\end{proof}

\bibliographystyle{jhep}
\bibliography{factor_code_networks_references}
\end{document}